\def\ps@pprintTitle{%
   \let\@oddhead\@empty
   \let\@evenhead\@empty
   \def\@oddfoot{\reset@font\hfil\thepage\hfil}
   \let\@evenfoot\@oddfoot
}
\newtheorem{theorem}{Theorem}
\newtheorem{lemma}[theorem]{Lemma}
\newtheorem{proposition}[theorem]{Proposition}
\newtheorem{definition}[theorem]{Definition}
\newtheorem{example}[theorem]{Example}
\newtheorem{remark}[theorem]{Remark}
\newcommand{\rank}{{\mathrm{rank}}}
\newcommand{\tr}{{\mathrm{Tr}}}
\newcommand{\gf}{{\mathbb{F}}}
\newcommand{\wt}{{\mathtt{wt}}}
\newcommand{\cC}{{\mathcal{C}}}
\newcommand{\cD}{{\mathcal{D}}}
\newcommand{\ba}{{\mathbf{a}}}
\newcommand{\bc}{{\mathbf{c}}}
\newcommand{\bg}{{\mathbf{g}}}
\newcommand{\bzero}{{\mathbf{0}}}
\begin{document}

\begin{frontmatter}

\title{Self-orthogonal codes from plateaued functions}

\tnotetext[fn1]{
This research was supported in part by the National Natural Science Foundation of China under Grant 12271059, in part by the open research fund of National Mobile Communications Research Laboratory of Southeast University under Grant 2024D10 and in part by the Shaanxi Fundamental Science Research Project for Mathematics and Physics (Grant No. 23JSZ008).
}

\author[author1,author2]{Peng Wang}
\ead{wp20201115@163.com}
\author[author1,author2]{Ziling Heng\corref{cor}}
\ead{zilingheng@chd.edu.cn}

\cortext[cor]{Corresponding author}
\address[author1]{School of Science, Chang'an University, Xi'an 710064, China}
\address[author2]{National Mobile Communications Research Laboratory, Southeast University, Nanjing 211111, China}

\begin{abstract}
Self-orthogonal codes are of interest as they have important applications in quantum codes, lattices and many areas.
In this paper, based on the weakly regular plateaued functions or plateaued Boolean functions, we construct a family of linear codes with four nonzero weights.
This family of linear codes is proved to be not only self-orthogonal but also optimally or almost optimally extendable.
Besides, we derive binary and ternary linearly complementary dual codes (LCD codes for short) with new parameters from this family of codes.
Some families of self-dual codes are also obtained as byproducts.
\end{abstract}

\begin{keyword}
Weakly regular plateaued function, self-orthogonal code, LCD code, optimally extendable code

\MSC  94B05 \sep 94A05

\end{keyword}

\end{frontmatter}
\section{Introduction}\label{sec1}
Let $\gf_q$ be the finite field with $q$ elements, where $q$ is a  power of a prime $p$. Let $\gf_q^*:=\gf_q\backslash \{0\}$.
\subsection{Linear codes and self-orthogonal codes}
Let $n, k, d$ be three positive integers. For a nonempty set $\cC\subseteq \gf_q^n$,  if $\cC$ is a $k$-dimensional linear subspace of $\gf_q^n$, then it is called an $[n,k,d]$ linear code over $\mathbb{F}_q$, where $d$ denotes its minimum distance. Let $A_{i}$  be the number of codewords with weight $i$ in a linear code $\cC$ of length $n$, where $0 \leq i \leq n$. The sequence $(1,A_{1},A_{2}, \cdots ,A_{n})$ is called the weight distribution of $\cC$ and the polynomial $A(z)=1+A_{1}z+A_{2}z^2+ \cdots +A_{n}{z^n}$ is referred to as the weight enumerator of $\cC$.
Weight distribution is a significant research subject as it not only describes the error detection and error correction abilities of the code, but also can be used to calculate the error probabilities of error detection and correction of the code. The weight distributions of some linear codes were studied in the literature \cite{D4, HZ3, SJ, WY3}.

Optimal linear codes play an important role in both theory and practice. An $[n, k, d]$ linear code $\cC$ over $\gf_q$ is said to be optimal if there does not exist $[n, k, d + 1]$ linear code over $\gf_q$  and almost optimal if there exists an $[n, k, d + 1]$ optimal code over $\gf_q$. The following is the well-known sphere-packing bound on the parameters of codes.
\begin{lemma}\cite[Sphere-packing bound]{H}\label{sphere}
  Let $M$ be the maximum number of codewords in a code over $\gf_q$ of length $n$ and minimum distance $d$. Then
  \begin{eqnarray*}
    M \leq \frac{q^m}{\sum\limits_{i=0}^{t}\tbinom{m}{i}(q-1)^i},
  \end{eqnarray*}
  where $t=\lfloor(d-1)/2\rfloor$ and $\lfloor \cdot \rfloor$ denotes the floor function.
\end{lemma}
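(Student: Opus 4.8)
The plan is to establish the bound by a sphere-packing (volume) argument in the Hamming space; throughout we write $n$ for the block length (the displayed inequality writes $m$ for this same quantity). Let $\cC \subseteq \gf_q^n$ be any code of minimum distance $d$ with $|\cC| = M$ codewords, and put $t = \lfloor (d-1)/2 \rfloor$. For $\bx \in \gf_q^n$ let $B_t(\bx) = \{\by \in \gf_q^n : d_{\mathrm H}(\bx,\by) \le t\}$ denote the Hamming ball of radius $t$ about $\bx$, where $d_{\mathrm H}$ is the Hamming distance. The two ingredients I would isolate are: (i) every such ball has the same size $\sum_{i=0}^{t}\binom{n}{i}(q-1)^i$, and (ii) the balls centered at distinct codewords of $\cC$ are pairwise disjoint. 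Combining (i) and (ii) with the trivial observation that their disjoint union lies inside $\gf_q^n$ yields the claim.

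For (i), I would count the words at Hamming distance exactly $i$ from a fixed $\bx$: choose the set of $i$ coordinates in which they disagree with $\bx$ (in $\binom{n}{i}$ ways) and, in each chosen coordinate, choose one of the $q-1$ symbols different from the corresponding symbol of $\bx$ (in $(q-1)^i$ ways). Summing over $0 \le i \le t$ gives $|B_t(\bx)| = \sum_{i=0}^{t}\binom{n}{i}(q-1)^i$, independent of $\bx$.

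For (ii), suppose toward a contradiction that $\by \in B_t(\bc_1) \cap B_t(\bc_2)$ for two distinct codewords $\bc_1, \bc_2 \in \cC$. Since the Hamming distance is a metric, the triangle inequality gives $d_{\mathrm H}(\bc_1, \bc_2) \le d_{\mathrm H}(\bc_1, \by) + d_{\mathrm H}(\by, \bc_2) \le 2t = 2\lfloor (d-1)/2 \rfloor \le d - 1 < d$, contradicting the hypothesis that the minimum distance of $\cC$ equals $d$. Hence $\bigcup_{\bc \in \cC} B_t(\bc)$ is a disjoint union contained in $\gf_q^n$, so $M \cdot \sum_{i=0}^{t}\binom{n}{i}(q-1)^i \le q^n$, i.e. $M \le q^n \big/ \sum_{i=0}^{t}\binom{n}{i}(q-1)^i$. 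Applying this to a code of maximum cardinality $M$ gives the stated inequality.

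I do not anticipate any genuine obstacle here: the argument uses only the metric structure of $\gf_q^n$ and not linearity, so it applies verbatim to arbitrary (not necessarily linear) $q$-ary block codes. The only points needing a moment's care are the shell count $\binom{n}{i}(q-1)^i$ and the elementary inequality $2\lfloor (d-1)/2 \rfloor \le d-1$, which is an equality when $d$ is odd and off by one when $d$ is even; in either case it is precisely what makes the disjointness step in (ii) work.
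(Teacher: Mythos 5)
Your proof is correct and is the standard volume/packing argument for the sphere-packing (Hamming) bound; the paper cites this lemma from Huffman--Pless without proof, and the textbook proof is exactly the one you give (disjointness of radius-$t$ Hamming balls via the triangle inequality, plus the shell count $\binom{n}{i}(q-1)^i$). You also correctly read the paper's $m$ in the displayed bound as the block length $n$, which is a typo in the statement rather than an issue with your argument.
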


Define the dual code of an $[n,k]$ linear code $\cC$ by
$$
\mathcal{C}^{\perp}=\left\{ \mathbf{u} \in \mathbb{F}_{q}^{n}: \langle  \mathbf{u}, \mathbf{v} \rangle=\mathbf{0} \mbox{ for all }\mathbf{v} \in \mathcal{C} \right\},
$$
where $\langle \cdot \rangle$ denotes the standard inner product. Then $\cC^{\perp}$ is a linear code with length $n$ and dimension $n-k$.
If $\cC \subseteq \cC^{\perp}$, then $\cC$ is called a self-orthogonal code.
In particular, if $\cC=\cC^{\perp}$, then $\cC$ is said to be self-dual.
Self-orthogonal codes have  nice applications in lattices \cite{W}, linear complementary dual codes (LCD codes for short) \cite{MA2}, quantum codes \cite{CSS} and so on.
For a given $q$-ary linear code $\cC$, it is natural to study whether it is self-orthogonal or not.
If $q=2,3$, the following provides simple conditions for a linear code to be self-orthogonal by the divisibility of its weights.
\begin{lemma}\cite[Theorems 1.4.8 and 1.4.10 ]{H}\label{lem-23}
Let $\cC$ be a linear code over $\gf_p$. For $p=2$, if every codeword of $\cC$ has weight divisible by four, then $\cC$ is self-orthogonal.~For $p=3$, $\cC$ is self-orthogonal if and only if every codeword of $\cC$ has weight divisible by three.
\end{lemma}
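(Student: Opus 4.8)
The plan is to reduce both assertions to an elementary counting identity for Hamming weights, together with the observation that, over $\gf_p$, the Euclidean inner product of a vector with itself reduces to its Hamming weight modulo $p$. Throughout I would write $\support(\cdot)$ for the support of a vector and freely use that $\langle\mathbf{u},\mathbf{u}\rangle=\sum_i u_i^2$.

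First I would treat $p=2$, assuming every codeword of $\cC$ has weight divisible by four, and fix $\mathbf{u},\mathbf{v}\in\cC$. Over $\gf_2$ one has $\wt(\mathbf{u}+\mathbf{v})=\wt(\mathbf{u})+\wt(\mathbf{v})-2\,|\support(\mathbf{u})\cap\support(\mathbf{v})|$, since the support of $\mathbf{u}+\mathbf{v}$ is the symmetric difference of the two supports, while $\langle\mathbf{u},\mathbf{v}\rangle$ equals $|\support(\mathbf{u})\cap\support(\mathbf{v})|$ reduced modulo $2$. As $\mathbf{u}+\mathbf{v}\in\cC$, the three weights on the right-hand side are all $\equiv 0\pmod 4$, hence $2\,|\support(\mathbf{u})\cap\support(\mathbf{v})|\equiv 0\pmod 4$, which forces $|\support(\mathbf{u})\cap\support(\mathbf{v})|$ to be even, i.e. $\langle\mathbf{u},\mathbf{v}\rangle=0$. (The case $\mathbf{v}=\mathbf{u}$ is consistent, since $\langle\mathbf{u},\mathbf{u}\rangle=\wt(\mathbf{u})$ is certainly even.) Hence $\cC\subseteq\cC^\perp$.

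Next I would handle $p=3$. The crucial point is that $a^2=1$ for every nonzero $a\in\gf_3$, so $\langle\mathbf{u},\mathbf{u}\rangle=\sum_i u_i^2\equiv\wt(\mathbf{u})\pmod 3$ for all $\mathbf{u}\in\gf_3^n$. Assuming every codeword of $\cC$ has weight divisible by three, this gives $\langle\mathbf{u},\mathbf{u}\rangle=0$ for every $\mathbf{u}\in\cC$; applying it to $\mathbf{u}$, $\mathbf{v}$, and $\mathbf{u}+\mathbf{v}$ (all in $\cC$) and expanding $\langle\mathbf{u}+\mathbf{v},\mathbf{u}+\mathbf{v}\rangle=\langle\mathbf{u},\mathbf{u}\rangle+2\langle\mathbf{u},\mathbf{v}\rangle+\langle\mathbf{v},\mathbf{v}\rangle$ yields $2\langle\mathbf{u},\mathbf{v}\rangle=0$, and since $2$ is invertible in $\gf_3$ we get $\langle\mathbf{u},\mathbf{v}\rangle=0$; thus $\cC$ is self-orthogonal. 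For the converse, if $\cC$ is self-orthogonal then $\wt(\mathbf{u})\equiv\langle\mathbf{u},\mathbf{u}\rangle=0\pmod 3$ for every $\mathbf{u}\in\cC$.

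There is no real obstacle; the only point deserving care is why divisibility by \emph{four} (not merely two) is needed in the binary direction: divisibility by two only makes each codeword orthogonal to itself and says nothing about the cross terms, whereas in characteristic $3$ the factor $2$ in the polarization identity is invertible, so divisibility by $3$ already controls all pairwise inner products. This asymmetry is precisely why the ternary statement is an equivalence while the binary one is only a sufficient condition.
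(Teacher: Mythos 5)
Your proof is correct and is essentially the standard argument from the cited reference (Huffman--Pless, Theorems 1.4.8 and 1.4.10); the paper itself states this lemma without proof, simply citing \cite{H}. Both halves --- the symmetric-difference weight identity forcing $|\support(\mathbf{u})\cap\support(\mathbf{v})|$ to be even in the binary case, and the polarization identity combined with $\langle\mathbf{u},\mathbf{u}\rangle\equiv\wt(\mathbf{u})\pmod 3$ in the ternary case --- are exactly the intended reasoning, and your closing remark correctly identifies why the binary statement requires divisibility by four and is only a sufficient condition.
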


Recently, Li and Heng established a sufficient condition for a $q$-ary linear code containing the all-1 vector to be self-orthogonal if $q$ is a power of an odd prime \cite{HZ5}.
\begin{lemma}\cite{HZ5}\label{lem-self}
Let $q$ be a power of $p$. where $p$ is an odd prime. Let $\mathcal{C}$ be an $[n,k,d]$ linear code over $\gf_q$ with  $\mathbf{1} \in \cC,$ where $\mathbf{1}$ is all-1 vector of length $n$. If $\cC$ is $p$-divisible, then $\cC$ is self-orthogonal.
\end{lemma}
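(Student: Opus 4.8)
The plan is to prove the equivalent statement that $\langle \mathbf{c}, \mathbf{c}' \rangle = 0$ for every pair $\mathbf{c},\mathbf{c}' \in \cC$, which is precisely $\cC \subseteq \cC^{\perp}$. First I would reduce this to the ``diagonal'' case: since $\mathbf{c}+\mathbf{c}' \in \cC$, the polarization identity
\[
\langle \mathbf{c}+\mathbf{c}',\, \mathbf{c}+\mathbf{c}' \rangle = \langle \mathbf{c},\mathbf{c}\rangle + \langle \mathbf{c}',\mathbf{c}'\rangle + 2\langle \mathbf{c},\mathbf{c}'\rangle
\]
shows that once we know $\langle \mathbf{x},\mathbf{x}\rangle = 0$ for all $\mathbf{x}\in\cC$, we get $2\langle \mathbf{c},\mathbf{c}'\rangle = 0$; as $p$ is odd, $2$ is a unit in $\gf_q$, so $\langle \mathbf{c},\mathbf{c}'\rangle = 0$. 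Thus the whole proof comes down to showing $\langle \mathbf{c},\mathbf{c}\rangle = 0$ for an arbitrary codeword $\mathbf{c}$.

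To do that I would exploit the hypothesis $\mathbf{1}\in\cC$. Fix $\mathbf{c}=(c_1,\dots,c_n)\in\cC$ and, for each $a\in\gf_q$, let $n_a := \#\{\, i : c_i = a \,\}$, so that $\sum_{a\in\gf_q} n_a = n$. For every $\lambda\in\gf_q$ the translate $\mathbf{c}+\lambda\mathbf{1}$ lies in $\cC$ and hence is $p$-divisible, and its weight equals $n - n_{-\lambda}$; letting $\lambda$ range over $\gf_q$ therefore gives $n_a \equiv n \pmod p$ for every $a\in\gf_q$. Summing these $q$ congruences and using $\sum_a n_a = n$ yields $n \equiv q\,n \equiv 0 \pmod p$, because $q=p^{s}$ is divisible by $p$; consequently $n_a \equiv n \equiv 0 \pmod p$ for all $a$.

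Finally, working in $\gf_q$, which has characteristic $p$,
\[
\langle \mathbf{c},\mathbf{c}\rangle = \sum_{i=1}^{n} c_i^{2} = \sum_{a\in\gf_q} n_a\, a^{2},
\]
and since every $n_a$ is a multiple of $p$, each term $n_a a^2$ is zero in $\gf_q$, so $\langle \mathbf{c},\mathbf{c}\rangle = 0$. Combining this with the reduction of the first paragraph gives $\cC\subseteq\cC^{\perp}$, as desired.

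The step I expect to be the crux is the second paragraph: passing from ``every codeword has weight $\equiv 0 \pmod p$'' to ``every frequency $n_a$ is $\equiv 0 \pmod p$''. This is exactly where the assumption $\mathbf{1}\in\cC$ is used, via the family of translates $\mathbf{c}+\lambda\mathbf{1}$, and where the fact that $q$ is a power of $p$ (rather than an arbitrary integer) makes the summed congruences collapse to $p \mid n$. The remaining ingredients --- polarization and the characteristic-$p$ vanishing --- are routine.
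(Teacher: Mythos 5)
Your proof is correct and complete: the polarization reduction (valid since $p$ is odd), the use of the translates $\mathbf{c}+\lambda\mathbf{1}$ to force every frequency $n_a\equiv n\pmod p$, the summation over $\gf_q$ to conclude $p\mid n$ (using $p\mid q$), and the final characteristic-$p$ vanishing of $\sum_a n_a a^2$ all check out. Note that the paper itself states this lemma without proof, citing \cite{HZ5}; your argument is essentially the standard one given there, so there is nothing further to compare.
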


The following lemmas show that we can derive self-dual codes from self-orthogonal codes. These self-dual codes are the subcodes of the duals of the self-orthogonal codes.
\begin{lemma}\cite[Proposition 4.5]{Be}\label{lem-dual1}
Let $q$ be a power of $2$. A linear code $\cC \subseteq \mathbb{F}_{q}^{n}$  of  dimension $k\geq\frac{n}{2}$ contains a self-dual subcode if and only if $n$ is even and $\cC^{\perp}$ is self-orthogonal, that is, $\cC^{\perp} \subseteq \cC.$
\end{lemma}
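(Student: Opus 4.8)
The strategy is to translate the statement into a question about a non-degenerate symmetric bilinear form on a quotient space, and there to exploit characteristic $2$. The forward implication is immediate: if $\cD\subseteq\cC$ is self-dual then $\dim\cD=n/2$, so $n$ is even, and from $\cD\subseteq\cC$ we get $\cC^{\perp}\subseteq\cD^{\perp}=\cD\subseteq\cC$, whence $\cC^{\perp}\subseteq\cC$ is self-orthogonal.

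For the converse, assume $n$ is even and $\cC^{\perp}\subseteq\cC$; this already forces $n-k=\dim\cC^{\perp}\le\dim\cC=k$. I would pass to the quotient $V:=\cC/\cC^{\perp}$, of even dimension $2k-n$, equipped with the bilinear form $B$ induced by the standard inner product, namely $B(u+\cC^{\perp},v+\cC^{\perp}):=\langle u,v\rangle$ for $u,v\in\cC$. This form is well defined because $\cC^{\perp}$ is orthogonal to all of $\cC$; it is symmetric; and it is non-degenerate, since a coset lies in the radical of $B$ only when some representative is orthogonal to all of $\cC$, i.e.\ lies in $\cC^{\perp}$. The dictionary I would use is that, for a subspace $\cD$ with $\cC^{\perp}\subseteq\cD\subseteq\cC$ and image $\overline{\cD}:=\cD/\cC^{\perp}$ in $V$, the code $\cD$ is self-orthogonal if and only if $\overline{\cD}$ is totally isotropic for $B$, and $\dim\cD=n/2$ if and only if $\dim\overline{\cD}=\tfrac12\dim V$; any $\cD$ satisfying both is then self-dual (compare $\dim\cD$ with $\dim\cD^{\perp}$) and automatically contains $\cC^{\perp}$. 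Hence it suffices to produce a totally isotropic subspace of $V$ of dimension $\tfrac12\dim V$.

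This is the core of the argument and the only place characteristic $2$ is essential. I would prove, by induction on the (even) dimension, that every non-degenerate symmetric bilinear form over $\gf_q$ with $q$ even admits a totally isotropic subspace of half its dimension. The inductive step rests on the observation that the ``diagonal'' map $x\mapsto B(x,x)$ is \emph{additive}: indeed $B(x+y,x+y)=B(x,x)+2B(x,y)+B(y,y)=B(x,x)+B(y,y)$ since $\mathrm{char}=2$ and $B$ is symmetric. Being an additive, hence $\gf_p$-linear, map from $V$ into the one-dimensional $\gf_q$-space $\gf_q$, it cannot be injective once $\dim V\ge 2$, so $V$ contains a nonzero isotropic vector $v_{1}$. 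Since $v_{1}\in v_{1}^{\perp}$ and $B$ is non-degenerate, the restriction of $B$ to $v_{1}^{\perp}$ has radical exactly $\langle v_{1}\rangle$, so it descends to a non-degenerate symmetric form on $v_{1}^{\perp}/\langle v_{1}\rangle$, of dimension $\dim V-2$; the induction hypothesis yields a half-dimensional totally isotropic subspace there, whose preimage in $v_{1}^{\perp}$ is totally isotropic in $V$ of dimension $\tfrac12\dim V$. Pulling this subspace back along $\cC\twoheadrightarrow V$ produces the desired self-dual subcode of $\cC$.

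The main obstacle is precisely this existence claim for the half-dimensional totally isotropic subspace: over fields of odd characteristic the analogous statement fails for anisotropic or otherwise non-split forms, and what rescues it here is that in characteristic $2$ the diagonal map $x\mapsto B(x,x)$ is additive, which simultaneously guarantees an isotropic vector in dimension at least $2$ and keeps the quotient $v_{1}^{\perp}/\langle v_{1}\rangle$ inside the same class of forms, so that the induction closes.
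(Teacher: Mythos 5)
Your argument is correct. Note that the paper does not prove this statement at all: it is imported verbatim from the cited reference \cite{Be} (Proposition 4.5 there), so there is no internal proof to compare against. Your reduction to a non-degenerate symmetric form on $\cC/\cC^{\perp}$, the dictionary between self-dual subcodes $\cD$ with $\cC^{\perp}\subseteq\cD\subseteq\cC$ and half-dimensional totally isotropic subspaces, and the inductive construction of such a subspace via the additivity of $x\mapsto B(x,x)$ in characteristic $2$ (which forces an isotropic vector whenever $\dim V\geq 2$ and keeps the quotient $v_1^{\perp}/\langle v_1\rangle$ non-degenerate and symmetric) are all sound; this is essentially the standard Witt-theoretic proof, and it correctly isolates where evenness of $q$ is used, matching the fact that the odd-characteristic analogue (Lemma 5 in the paper) requires extra congruence conditions on $n$ and $q$.
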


\begin{lemma}\cite[Proposition 4.6]{Be}\label{lem-dual}
Let $\cC \subseteq \mathbb{F}_{q}^{n}$ be a linear code of  dimension $k\geq\frac{n}{2}$.
\begin{enumerate}[(1)]
\item  If $q\equiv 1 \pmod {4}$, then $\cC$ contains a self-dual subcode if and only if $n$ is even and $\cC^{\perp}$ is self-orthogonal.
 \item If $q\equiv 3 \pmod {4}$, then $\cC$ contains a self-dual subcode if and only if $n \equiv 0 \pmod {4}$ and $\cC^{\perp}$ is self-orthogonal.
 \end{enumerate}
\end{lemma}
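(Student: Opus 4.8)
The plan is to recast the question in terms of quadratic forms over $\gf_q$ (in both parts $q$ is odd), where the standard inner product makes $\cC/\cC^{\perp}$ into a nondegenerate symmetric bilinear --- equivalently, since $q$ is odd, quadratic --- space as soon as $\cC^{\perp}\subseteq\cC$. First I would set up a dictionary between self-dual subcodes and maximal totally isotropic subspaces. A self-dual code of length $n$ has dimension $n/2$, so its existence inside $\cC$ forces $n$ even (and $k\geq n/2$); and if $\cD\subseteq\cC$ is self-dual then $\cC^{\perp}\subseteq\cD^{\perp}=\cD\subseteq\cC$, so $\cC^{\perp}$ is self-orthogonal. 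Conversely, assume $\cC^{\perp}$ is self-orthogonal (that is, $\cC^{\perp}\subseteq\cC$) and $n$ is even, and put $V:=\cC/\cC^{\perp}$, a nondegenerate quadratic space of even dimension $2m$ with $m:=k-n/2\geq 0$ (using $k\geq n/2$). Codes $\cD$ with $\cC^{\perp}\subseteq\cD\subseteq\cC$ correspond to subspaces $\overline{\cD}\subseteq V$, and such a $\cD$ is self-dual exactly when $\overline{\cD}$ is totally isotropic of dimension $m=\tfrac12\dim V$ (a Lagrangian): indeed then $\cD\subseteq\cD^{\perp}$ and $\dim\cD=(n-k)+m=\tfrac n2$, so $\cD=\cD^{\perp}$. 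Thus the statement reduces to deciding when $V$ contains a Lagrangian, and by the classification of quadratic forms over a finite field $V$ contains one if and only if it is hyperbolic, if and only if $\det V\equiv(-1)^{m}\pmod{(\gf_q^{*})^{2}}$.

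The crux --- and, I expect, the main obstacle --- is to pin down $\det V$ modulo squares; the claim is $\det V\equiv(-1)^{\,n-k}$. To prove it I would choose a basis of $\gf_q^{n}$ built from a basis $e_{1},\dots,e_{n-k}$ of $\cC^{\perp}$, vectors $f_{1},\dots,f_{2m}$ extending it to a basis of $\cC$, and vectors $g_{1},\dots,g_{n-k}$ extending that to a basis of $\gf_q^{n}$. Since every $e_{i}$ is orthogonal to all of $\cC$, the Gram matrix $M$ of the standard form in this basis has zero top-left $(n-k)\times(n-k)$ block, middle block $D=(\langle f_{i},f_{j}\rangle)$ (a Gram matrix of the form on $V$), and corner block $A=(\langle e_{i},g_{j}\rangle)$, which is invertible because $M$ is nondegenerate. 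Writing $M=PP^{\top}$ for the change-of-basis matrix $P$ shows $\det M\in(\gf_q^{*})^{2}$; clearing the remaining off-diagonal blocks by a unipotent (hence determinant-one) congruence and a short determinant computation with the resulting matrix then give $\det M\equiv(-1)^{\,n-k}\det D\pmod{(\gf_q^{*})^{2}}$, whence $\det V=\det D\equiv(-1)^{\,n-k}$. Alternatively one may invoke Witt's extension theorem: $\cC^{\perp}$ is a totally isotropic subspace of the unimodular space $\gf_q^{n}$, so $\gf_q^{n}$ is an orthogonal sum of $n-k$ hyperbolic planes and $V$, and equating determinants gives $1\equiv(-1)^{\,n-k}\det V$. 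The delicate point is the sign: when $q\equiv 3\pmod 4$ the factor $(-1)^{\,n-k}$ genuinely changes the square class, so the exponent must be tracked exactly through the computation.

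Finally I would combine the two steps. With $2m=2k-n$, the condition ``$V$ has a Lagrangian'' reads $(-1)^{\,n-k}\equiv(-1)^{\,k-n/2}\pmod{(\gf_q^{*})^{2}}$, which (cancelling the square $(-1)^{2k}$ and using that $n$ is even) is equivalent to ``$(-1)^{\,n/2}$ is a square in $\gf_q$''. If $q\equiv 1\pmod 4$ then $-1$ is a square, so this always holds, which proves part (1). If $q\equiv 3\pmod 4$ then $-1$ is a nonsquare, so $(-1)^{\,n/2}$ is a square precisely when $n/2$ is even, i.e.\ $n\equiv 0\pmod 4$, which proves part (2). The ``only if'' directions require no extra work, since the existence of a self-dual subcode of $\cC$ already forces $\cC^{\perp}$ self-orthogonal, $n$ even, and $V$ to contain a Lagrangian, hence the same congruence on $n$.
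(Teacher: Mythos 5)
The paper does not prove this lemma at all: it is imported verbatim from \cite[Proposition 4.6]{Be}, so there is no in-paper argument to compare yours against. Your proof is correct and complete. The dictionary between self-dual subcodes of $\cC$ and Lagrangians of the nondegenerate space $V=\cC/\cC^{\perp}$ is set up properly (every self-dual $\cD\subseteq\cC$ automatically contains $\cC^{\perp}$, and the dimension count $\dim\cD=(n-k)+m=n/2$ is right), and the discriminant computation is the crux, as you say: both routes you offer --- the block Gram matrix with $\det\begin{pmatrix}0&A\\A^{\top}&0\end{pmatrix}=(-1)^{(n-k)^2}(\det A)^2$, and the Witt decomposition $\gf_q^{n}\cong H^{\,n-k}\perp V$ --- correctly give $\det V\equiv(-1)^{n-k}$ modulo squares, and combining with the hyperbolicity criterion $\det V\equiv(-1)^{m}$ reduces everything to whether $(-1)^{n/2}$ is a square, which cleanly separates the cases $q\equiv 1$ and $q\equiv 3\pmod 4$. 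This is essentially the argument in the cited source; the only ingredients you leave as ``standard'' (classification of even-dimensional quadratic forms over $\gf_q$ by discriminant, and existence of a totally isotropic subspace of half the dimension iff the form is hyperbolic) are indeed standard and correctly stated. One could add a remark that in the boundary case $k=n/2$ the space $V$ is zero and the hypothesis ``$\cC^{\perp}$ self-orthogonal'' already forces $\gf_q^{n}$ to be hyperbolic, but your congruence argument handles this uniformly, so nothing is missing.
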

\subsection{LCD codes}
 An $[n, k, d]$ linear code $\cC$ is called an Euclidean LCD code if $\cC \cap \cC^{\perp} = \{\mathbf{0}\}$, where $\mathbf{0}$ denotes the zero vector of length $n$.
 In \cite{M}, Massey introduced the definition and gave the algebraic characterization of LCD codes. He also proved the existence of
asymptotically good LCD codes and provided an optimum linear coding solution for the two-user binary adder channel. In \cite{SN}, Sendrier showed that LCD codes satisfy the asymptotic Gilbert-Varshamov bound. LCD codes have nice applications in lattices, network coding, and multisecret-sharing schemes \cite{AA, MB, HO}.
Besides, LCD codes also have application in counter-measures to passive and active side channel analyses on embedded cryptosystems.
 Many families of LCD codes have been constructed in \cite{WY2, LKZ}. The existence of  $q$-ary $(q >3)$ Euclidean LCD codes has been studied in \cite{CMT}. However, the existence of binary and ternary Euclidean LCD codes has not been totally investigated.

Self-orthogonal codes can be used to construct LCD codes.
It is known that a generator matrix $G$ of an $[n,k,d]$ self-orthogonal code satisfies $GG^{T}=\mathbf{0}_{k,k}$, where $\mathbf{0}_{k,k}$ denotes the zero matrix of size $k \times k$. In other words, $G$ is row-self-orthogonal.
The following lemma shows how to construct leading-systematic codes from a row-self-orthogonal matrix, where
a linear code is said to be leading-systematic if and only if it has a generator matrix of the form $G=[I: P]$ for an identity matrix $I$.
\begin{lemma}\cite{MA2}\label{lem-LCD}
A leading-systematic linear code $\cC$ with systematic generator matrix $G := [I : P]$ is an LCD code if (but not only if) the matrix $P$ is row-self-orthogonal, where $I$ denotes an identity matrix.
\end{lemma}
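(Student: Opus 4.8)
The plan is to reduce everything to Massey's algebraic characterization of Euclidean LCD codes: a linear code $\cC \subseteq \gf_q^n$ with generator matrix $G$ satisfies $\cC \cap \cC^\perp = \{\bzero\}$ if and only if the $k \times k$ matrix $GG^T$ is nonsingular. If one does not wish to simply quote this from \cite{M}, a short self-contained argument works: any $\bx \in \cC \cap \cC^\perp$ can be written as $\bx = \bu G$ for some $\bu \in \gf_q^k$, and $\bx \in \cC^\perp$ means $\bx G^T = \bzero$, i.e. $\bu(GG^T) = \bzero$; nonsingularity of $GG^T$ then forces $\bu = \bzero$, hence $\bx = \bzero$. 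Conversely, a nonzero vector in the left kernel of $GG^T$ yields a nonzero codeword lying in $\cC \cap \cC^\perp$.

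Granting this criterion, the proof is a one-line computation. Write $G = [I : P]$ with $I = I_k$ the $k \times k$ identity matrix and $P$ a $k \times (n-k)$ matrix. Block multiplication gives
\[
GG^T = [I : P]\,[I : P]^T = I I^T + P P^T = I_k + P P^T .
\]
If $P$ is row-self-orthogonal, that is $P P^T = \mathbf{0}_{k,k}$, then $GG^T = I_k$, which is trivially nonsingular, so by the criterion $\cC$ is an LCD code. This establishes the stated sufficient condition.

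To justify the parenthetical ``but not only if'', I would remark that $P P^T = \mathbf{0}_{k,k}$ is far from necessary for $I_k + P P^T$ to be invertible: the code is LCD whenever $P P^T$ does not have $-1$ as an eigenvalue over $\gf_q$, and one can exhibit an explicit $P$ with $P P^T \neq \mathbf{0}_{k,k}$ but $\det(I_k + P P^T) \neq 0$ (a $1 \times 1$ block over $\gf_3$ already does this). A one-line example of this type makes the strictness clear.

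There is essentially no obstacle here; the argument is a short piece of linear algebra. The only points requiring a little care are to state Massey's criterion for the Euclidean (rather than Hermitian) inner product and to check that the block sizes are consistent so that the identity $GG^T = I_k + P P^T$ is meaningful.
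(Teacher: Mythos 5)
Your argument is correct and complete: the reduction to Massey's criterion that $\cC$ is LCD if and only if $GG^T$ is nonsingular, followed by the block computation $GG^T = I_k + PP^T = I_k$ when $PP^T = \mathbf{0}_{k,k}$, is exactly the standard proof, and your $1\times 1$ example over $\gf_3$ correctly justifies the ``but not only if.'' The paper itself gives no proof of this lemma (it is quoted from \cite{MA2}), so there is nothing further to compare against.
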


Note that a self-orthogonal code has many different generator matrices. If $P_1$ and $P_2$ are two different generator matrices of a self-orthogonal code $\cC$, then $G_1 := [I : P_1]$ and $G_2 := [I : P_2]$ may generate two inequivalent LCD codes. Finding a suitable generator matrix $P$ such that $G = [I : P]$ generate an LCD code with large minimum distance, whose dual also have large minimum distance, is important.

\subsection{Optimally and almost optimally extendable codes}
 In \cite{AB}, Bringer et al. introduced a direct sum masking (DSM) countermeasure to oppose side channel attack (SCA) and fault injection attack (FIA)
  which have brought great threats to the implementation of block cipher. It needs two linear codes $\cC$ and $\cD$ satisfying $\cC \bigoplus\cD =\gf_q^n$, where $\cC$ and $\cD$ are used to encode the sensitive data and encode random data, respectively. In particular, if $\cD=\cC^\perp$, then $\cC$ is an LCD code.
  For a general pair $(\cC,\cD)$ satisfying $\cC \bigoplus\cD =\gf_q^n$, it is desirable to change $\cC$ and $\cD$ into $\cC'$ and $\cD'$ to  protect the sensitive data stored in the register from SCA and FIA, where $\cC'$ is obtained by adding $k$ bits of zeros to the end of each codeword in $\cC$
  and $\cD'$ is derived by appending the identity matrix at the end of the generator matrix of $\cD'$. If $d({\cD'}^{\perp})=d(\cD^{\perp})$, then $\cD$ is said to be \emph{optimally extendable}.  $\cD$ is said to be \emph{almost optimally extendable} if $d(\cD'^{\perp})=d(\cD^{\perp})-1$.

In the literature, there are only a few known constructions of optimally or almost optimally extendable codes. Carlet, Li and Mesnager gave two families of optimally extendable linear codes and two families of almost optimally extendable linear codes from the first-order Reed-Muller codes and irreducible cyclic codes of dimension two \cite{Carlet3}.
Quan, Yue and Hu studied (almost) optimally extendable linear codes from irreducible cyclic codes, MDS codes and NMDS codes \cite{YueQ}.
In \cite{WXY1}, Wang, Heng, Li and Yue constructed some families of almost optimally extendable codes from self-orthogonal codes.
In \cite{HL}, Heng, Li, Wu and Wang constructed two families of  optimally or almost optimally extendable codes from some functions over finite fields.
\subsection{The purpose of this paper}
The purpose of this paper is to construct linear codes from weakly regular plateaued functions over finite fields. Let $p$ be an odd prime and $f(x)$ be a function from $\gf_q$ to $\gf_p$ with $f(0)=0$. We define a $p$-ary linear code as
\begin{eqnarray}\label{eq-Cfbar}
{\overline{\cC_f}}=\left\{\bc_{{(a,b,c)}}=(af(x)+\tr_{q/p}(bx)+c)_{x \in \gf_q}: a \in \gf_p, b \in \gf_q, c \in \gf_p\right\}.
\end{eqnarray}
We remark that $\overline{\cC_f}$ is the augmented code of the extended code of
\begin{eqnarray}\label{eq-Cf*}
\cC_f^*=\left\{\bc_{(a,b)}=(af(x)+\tr_{q/p}(bx))_{x \in \gf_q^*}: a \in \gf_p, b \in \gf_q\right\}.
\end{eqnarray}
 When $f$ is a weakly regular bent function, the code $\overline{\cC_f}$ was studied in \cite{HL}.

In this paper, we mainly study the parameters and weight distribution of $\overline{\cC_f}$ and  prove the following  if $f$ is a weakly regular plateaued function:
\begin{enumerate}
\item[\small$\bullet$] $\overline{\cC_f}$ is self-orthogonal;
\item[\small$\bullet$] $\overline{\cC_f}$ is optimally or almost optimally extendable if we  select a suitable generator matrix of $\overline{\cC_f}$.
\end{enumerate}
Besides, we also derive ternary LCD codes from $\overline{\cC_f}$.
Moreover,  if $f$ is a plateaued Boolean function, we also prove that $\overline{\cC_f}$ is self-orthogonal and determine their parameters and weight distribution.
The binary code $\overline{\cC_f}$ is proved to be almost optimally extendable.
We also use the binary code $\overline{\cC_f}$ to  construct a family of binary LCD codes.
Finally, we obtain some families of self-dual codes from the duals of the self-orthogonal codes in this paper.

\section{Preliminaries}\label{sec2}
In this section, we present some mathematical foundations which will be used to give the main results in this paper.

\subsection{Characters and Gaussian sums over finite fields}
Let $q=p^m$ with $p$ a prime. Let $\zeta_p$ denote the primitive $p$-th root of complex unity.
 Define an additive character $\chi$ of $\gf_q$ by the homomorphism from the additive group $\gf_q$ to the complex multiplicative group $\mathbb{C}^*$ such that
$$
\chi(x+y)=\chi(x)\chi(y), ~ x, y \in \gf_q.$$
For any $a \in \gf_q$,
an additive character of $\gf_q$ is defined by the function $\chi_a(x)=\zeta_p^{\tr_{q/p}(ax)},\ x\in \gf_q$, where $\tr_{q/p}(x)$ is the trace function from $\gf_q$ to $\gf_p$. By definition, we have $\chi_a(x)=\chi_1(ax)$. Particularly, $\chi_0$ is called the trivial additive character of $\gf_q$ and $\chi_1$ is called the canonical additive character of $\gf_q$. The following is the orthogonal relation of additive characters (see \cite{Lidl}):
\begin{eqnarray*}
\sum_{x\in \gf_q}\chi_a(x)=\begin{cases}
q    &\text{if $a=0,$ }\\
0     &\text{if $a\in\gf_q^*.$}
\end{cases}
\end{eqnarray*}

Let $\alpha$ be a primitive element of $\gf_q$. The multiplicative character $\psi$ of $\gf_q$ is defined as the homomorphism from the multiplicative group $\gf_q^*$ to the complex multiplicative group $\mathbb{C}^*$ such that
$$
\psi(xy)=\psi(x)\psi(y), ~x, y \in \gf_q^*.$$
The function
$\psi_j(\alpha^k)=\zeta_{q-1}^{jk}$, where $k=0,1,\cdots,q-2$,
is a multiplicative character for each $j=0, 1,..., q-2$. In particular, $\psi_0$ is called the trivial multiplicative character and $\eta:=\psi_{\frac{q-1}{2}}$ is referred to as the quadratic multiplicative character of $\gf_q$ if $q$ is odd. The following is the orthogonal relation of multiplicative characters (see \cite{Lidl}):
\begin{eqnarray*}
\sum_{x\in\gf_q^*}\psi_j(x)=\begin{cases}
q-1    &\text{if $j=0,$ }\\
0     &\text{if $j \neq 0$.}
\end{cases}
\end{eqnarray*}

For an additive character $\chi$ and a multiplicative character $\psi$ of $\gf_q$, the \emph{Gaussian sum} $G(\psi, \chi)$ over $\gf_q$ is defined by
$$G(\psi, \chi)=\sum_{x\in \gf_q^*}\psi(x)\chi(x).$$
Specially, $G(\eta,\chi)$ is called the \emph{quadratic} \emph{Gaussian sum} over $\gf_q$ for nontrivial $\chi$.
The explicit value of quadratic \emph{Gaussian sum} is given in the following.
\begin{lemma}\cite[Theorem 5.15]{Lidl}\label{quadGuasssum1}
Let $q=p^m$ with $p$ odd. Then
\begin{eqnarray*}
G(\eta,\chi_1)=(-1)^{m-1}(\sqrt{-1})^{(\frac{p-1}{2})^2m}\sqrt{q}.
\end{eqnarray*}
\end{lemma}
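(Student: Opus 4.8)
The plan is to proceed in two stages: first evaluate the quadratic Gaussian sum over the prime field $\gf_p$ (this is Gauss's classical theorem on the sign of the Gaussian sum), and then lift the result to $\gf_q=\gf_{p^m}$ via the Davenport--Hasse relation. As a preliminary I would record the two elementary facts that determine $G(\eta,\chi_1)$ up to sign, both proved over $\gf_q$ directly from the orthogonality relations quoted above. Writing $\eta(x)\overline{\eta(y)}=\eta(xy^{-1})$ and substituting $x=yz$, one gets $G(\eta,\chi_1)\overline{G(\eta,\chi_1)}=\sum_{z\in\gf_q^*}\eta(z)\sum_{y\in\gf_q^*}\chi_1\!\bigl(y(z-1)\bigr)=(q-1)-\sum_{z\in\gf_q^*\setminus\{1\}}\eta(z)=q$, so $|G(\eta,\chi_1)|=\sqrt q$. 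Since $\eta$ is real-valued and $\overline{\chi_1(x)}=\chi_1(-x)$, one has $\overline{G(\eta,\chi_1)}=\eta(-1)G(\eta,\chi_1)$, hence $G(\eta,\chi_1)^2=\eta(-1)q=(-1)^{(q-1)/2}q$. Thus $G(\eta,\chi_1)$ is one of the two square roots of $(-1)^{(q-1)/2}q$, and the whole content of the lemma is to say which one.

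For the prime field I would use Schur's matrix argument. Form the $p\times p$ matrix $F=\bigl(\zeta_p^{jk}\bigr)_{0\le j,k\le p-1}$, whose rows and columns are indexed by $\gf_p$. Counting, for $t\ne 0$, the $1+\eta(t)$ solutions of $j^2=t$ shows $\tr(F)=\sum_{j=0}^{p-1}\zeta_p^{j^2}=G_{\gf_p}(\eta,\chi_1)$, the quadratic Gauss sum over $\gf_p$, while $F^2=pJ$ with $J$ the permutation matrix of the involution $j\mapsto -j$ on $\gf_p$, so $F^4=p^2 I$ and every eigenvalue of $F$ lies in $\{\pm\sqrt p,\ \pm\sqrt{-1}\,\sqrt p\}$. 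Reading the multiplicities of the eigenvalues $\pm p$ of $F^2=pJ$ off the spectrum of $J$, and combining with $G_{\gf_p}(\eta,\chi_1)^2=(-1)^{(p-1)/2}p$ from the previous paragraph, one is left with a single undetermined integer; evaluating $\det F$ as a Vandermonde product $\prod_{0\le j<k\le p-1}\bigl(\zeta_p^{k}-\zeta_p^{j}\bigr)$ fixes it and yields $G_{\gf_p}(\eta,\chi_1)=\sqrt p$ when $p\equiv 1\pmod 4$ and $G_{\gf_p}(\eta,\chi_1)=\sqrt{-1}\,\sqrt p$ when $p\equiv 3\pmod 4$, i.e.\ $G_{\gf_p}(\eta,\chi_1)=(\sqrt{-1})^{((p-1)/2)^2}\sqrt p$. (Alternatively one can invoke the classical sine-product/Gauss-lemma evaluation or the theta-function argument of Landsberg--Schaar; any of these supplies the same sign.)

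Finally, to pass from $\gf_p$ up to $\gf_q=\gf_{p^m}$ I would apply the Davenport--Hasse lifting relation. The canonical additive character $\chi_1$ of $\gf_q$ is $\chi_1\circ\tr_{q/p}$, and the quadratic character $\eta$ of $\gf_q$ equals $\eta\circ\Norm_{q/p}$ because $x^{(q-1)/2}=\Norm_{q/p}(x)^{(p-1)/2}$; hence the relation gives $-G(\eta,\chi_1)=\bigl(-G_{\gf_p}(\eta,\chi_1)\bigr)^m$, that is $G(\eta,\chi_1)=(-1)^{m-1}G_{\gf_p}(\eta,\chi_1)^m$. Substituting $G_{\gf_p}(\eta,\chi_1)=(\sqrt{-1})^{((p-1)/2)^2}\sqrt p$ and raising to the $m$-th power gives $G(\eta,\chi_1)=(-1)^{m-1}(\sqrt{-1})^{((p-1)/2)^2 m}\sqrt q$, as claimed. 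Every step except one is routine character-sum or linear-algebra bookkeeping; the genuine obstacle is the prime-field sign, since pinning down the argument of the Vandermonde determinant (equivalently, the sign of the relevant product of sines) is precisely Gauss's theorem, the one point that is not a formal manipulation. As Lemma~\ref{quadGuasssum1} is a standard fact quoted from \cite{Lidl}, it is used here as a black box; the route above is how I would reconstruct it, with the Davenport--Hasse relation itself either proved by a short Gauss-sum computation over $\gf_{p^m}$ or likewise cited.
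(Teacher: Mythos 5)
The paper gives no proof of this lemma; it is quoted directly from the cited reference, whose own derivation is exactly the route you describe (modulus and square of the Gauss sum from orthogonality, Gauss's sign determination over the prime field, then the Davenport--Hasse lift to $\gf_{p^m}$). Your outline is correct, the sign bookkeeping $(\sqrt{-1})^{((p-1)/2)^2}$ and the factor $(-1)^{m-1}$ both check out, and there is nothing to add.
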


\subsection{Weakly regular plateaued functions and the cyclotomic fields}
Let $f(x)$ be a $p$-ary function from $\gf_{p^m}$ to $\gf_p$. Define the Walsh transform of $f(x)$ as follows:
\begin{eqnarray*}
\text{W}_f(\beta):=\sum_{x \in \gf_{p^m}}\zeta_p^{f(x)-\tr_{p^m/p}(\beta x)},\ \beta \in \gf_{p^m}.
\end{eqnarray*}
Then $f$ is said to be \emph{balanced} over $\gf_p$ provided that $f$ takes every value of $\gf_p$  exactly $p^{m-1}$ times. Otherwise, $f$ is said to be \emph{unbalanced}. It is known that $f$ is balanced if and only if $\text{W}_f(0)=0$. A function $f$ is referred to as a \emph{bent} function if $\mid \text{W}_f(\beta) \mid^2=p^{m}$ for any $\beta \in \gf_{p^m}$. Besides, $f$ is said to be \emph{$s$-plateaued} provided that $\mid\text{W}_f(\beta)\mid^2\in \{{0,p^{m+s}}\}$ for any $\beta \in \gf_{p^m}$, with $0 \leq s \leq m$. For an $s$-plateaued function $f$, its \emph{Walsh support} is defined by the set $S_f=\{\beta \in \gf_p^n:{\mid \text{W}_f(\beta) \mid}^2 = p^{m+s}$\}.
\begin{lemma}\cite{Mesnager1}\label{tim-plateaued}
Let $f$ be an $s$-plateaued function from $\gf_{p^m}$ to $\gf_p$. Then \ for\ $\beta \in \gf_{p^m}$, ${\mid \text{W}_f(\beta) \mid}^2$ takes the values $p^{m+s}$ and $0$ for the times $p^{m+s}$ and $p^m-p^{m-s}$, respectively.
\end{lemma}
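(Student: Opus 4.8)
The plan is to deduce the stated counts from Parseval's identity for the Walsh transform together with the defining property of an $s$-plateaued function. First I would establish the identity
\begin{eqnarray*}
\sum_{\beta \in \gf_{p^m}} \mid \text{W}_f(\beta) \mid^2 = p^{2m}.
\end{eqnarray*}
To prove this, expand $\mid \text{W}_f(\beta) \mid^2 = \text{W}_f(\beta)\,\overline{\text{W}_f(\beta)}$ as a double sum over $x, y \in \gf_{p^m}$, obtaining
$\sum_{x, y \in \gf_{p^m}} \zeta_p^{f(x)-f(y)}\,\zeta_p^{-\tr_{p^m/p}(\beta(x-y))}$,
and then interchange the order of summation so that the sum over $\beta$ is carried out first. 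By the orthogonality relation for additive characters recalled in Section~\ref{sec2}, $\sum_{\beta \in \gf_{p^m}}\zeta_p^{-\tr_{p^m/p}(\beta(x-y))}$ equals $p^m$ when $x = y$ and $0$ otherwise; hence only the $p^m$ diagonal terms $x = y$ survive, each contributing $p^m$, and the total is $p^{2m}$.

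Next I would invoke the hypothesis that $f$ is $s$-plateaued, namely $\mid \text{W}_f(\beta)\mid^2 \in \{0, p^{m+s}\}$ for every $\beta \in \gf_{p^m}$. Let $N = \mid S_f \mid$ be the number of $\beta$ with $\mid \text{W}_f(\beta)\mid^2 = p^{m+s}$. Splitting the Parseval sum according to the value attained gives $N\,p^{m+s} = p^{2m}$, so $N = p^{m-s}$, which is a nonnegative integer since $0 \le s \le m$. Therefore $\mid \text{W}_f(\beta)\mid^2$ takes the value $p^{m+s}$ for exactly $p^{m-s}$ elements $\beta$ and the value $0$ for the remaining $p^m - p^{m-s}$ elements; note $p^{m-s} + (p^m - p^{m-s}) = p^m$, consistent with there being $p^m$ elements $\beta$ in total.

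I do not expect any genuine obstacle here: the argument is short, and the only step needing a little care is correctly tracking the complex conjugate in $\mid \text{W}_f(\beta)\mid^2$ (and hence the sign in the exponent $-\tr_{p^m/p}(\beta(x-y))$) before the character orthogonality relation does all the work. One could alternatively phrase the first step as an instance of Plancherel's theorem for the Fourier transform of the function $x \mapsto \zeta_p^{f(x)}$ on the abelian group $(\gf_{p^m},+)$, but the self-contained character-sum computation above is the route I would take.
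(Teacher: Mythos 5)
Your Parseval argument is correct and is the standard proof of this fact; the paper itself cites the result from the literature without giving a proof, so there is nothing internal to compare against. One point worth flagging: your computation yields that $\mid \text{W}_f(\beta)\mid^2$ equals $p^{m+s}$ for exactly $p^{m-s}$ values of $\beta$, whereas the lemma as printed says it occurs $p^{m+s}$ times. Your count is the right one --- the printed multiplicities cannot be correct since $p^{m+s}+(p^m-p^{m-s})\neq p^m$ for $s>0$, and the corrected value $p^{m-s}$ is the one consistent with the total number of $\beta$ and with the sizes of $Z(\text{W}_f)$ and $S(\text{W}_f)$ quoted in the following lemma. So your proposal not only proves the intended statement but also identifies a typo in its formulation.
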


\begin{lemma}\cite{Mesnager1}
Let $f$ be $s$-plateaued from $\gf_{p^m}$ to $\gf_p$. Define the sets $Z(\text W_f):=\{(\alpha, \beta) \in \gf_p^* \times \gf_{p^m}:\text W_f(\alpha^{-1}\beta)=0\}$ and
$S(\text W_f):=\{(\alpha,\beta) \in \gf_p^* \times \gf_{p^m}:\text W_f(\alpha^{-1}\beta)\neq 0\}$. Then, the sizes of $Z(\text W_f)$ and $S(\text W_f)$ are respectively equal to $(p-1)(p^m-p^{m-s})$ and $(p-1)p^{m-s}$.
\end{lemma}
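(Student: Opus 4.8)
The plan is to reduce both cardinalities to the single-variable Walsh count already recorded in Lemma~\ref{tim-plateaued}. First I would fix an arbitrary $\alpha \in \gf_p^*$ and observe that, since $\alpha \in \gf_p^* \subseteq \gf_{p^m}^*$, multiplication by $\alpha^{-1}$ is a bijection of $\gf_{p^m}$ onto itself; hence, as $\beta$ runs over $\gf_{p^m}$, the element $\gamma := \alpha^{-1}\beta$ runs over all of $\gf_{p^m}$ exactly once, so
\begin{eqnarray*}
\#\{\beta \in \gf_{p^m} : \text{W}_f(\alpha^{-1}\beta) = 0\} = \#\{\gamma \in \gf_{p^m} : \text{W}_f(\gamma) = 0\}.
\end{eqnarray*}
By Lemma~\ref{tim-plateaued} the right-hand side equals $p^m - p^{m-s}$, a value independent of $\alpha$.

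Next I would sum this identity over the $p-1$ elements $\alpha \in \gf_p^*$, which immediately yields
\begin{eqnarray*}
|Z(\text{W}_f)| = \sum_{\alpha \in \gf_p^*}\#\{\beta \in \gf_{p^m} : \text{W}_f(\alpha^{-1}\beta) = 0\} = (p-1)(p^m - p^{m-s}).
\end{eqnarray*}
For $S(\text{W}_f)$ I would use that $Z(\text{W}_f)$ and $S(\text{W}_f)$ partition $\gf_p^* \times \gf_{p^m}$, a set of size $(p-1)p^m$; subtracting gives
\begin{eqnarray*}
|S(\text{W}_f)| = (p-1)p^m - (p-1)(p^m - p^{m-s}) = (p-1)p^{m-s}.
\end{eqnarray*}

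I do not anticipate any genuine obstacle: the whole argument is elementary counting. The only point requiring care is that $\alpha$ is drawn from the prime subfield $\gf_p$, so that $\alpha^{-1}$ is well defined in $\gf_{p^m}$ and the map $\beta \mapsto \alpha^{-1}\beta$ is a permutation of $\gf_{p^m}$; once this is noted, the zero-distribution of $\text{W}_f$ from Lemma~\ref{tim-plateaued} transfers verbatim to each slice indexed by $\alpha$, and summing over the slices gives the stated sizes.
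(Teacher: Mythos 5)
Your proof is correct. The paper states this lemma without proof (it is cited from \cite{Mesnager1}), and your argument---fixing $\alpha\in\gf_p^*$, noting that $\beta\mapsto\alpha^{-1}\beta$ permutes $\gf_{p^m}$ so that each of the $p-1$ slices contributes exactly the $p^m-p^{m-s}$ zeros of $\mathrm{W}_f$ given by Lemma~\ref{tim-plateaued}, and then obtaining $|S(\mathrm{W}_f)|$ by complementation in $\gf_p^*\times\gf_{p^m}$---is precisely the intended elementary counting argument. One side remark: in the statement of Lemma~\ref{tim-plateaued} the frequency ``$p^{m+s}$ times'' for the value $p^{m+s}$ is a typo for $p^{m-s}$ (as forced by Parseval's identity); you only invoke the zero-count $p^m-p^{m-s}$, which is stated correctly, so your proof is unaffected.
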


In particular, for $p=2$, the Walsh distribution of plateaued Boolean functions is given in the following \cite{2yuan1},\cite{2yuan2}.

\begin{lemma}\label{plateaued-boolean}
Let $m+s$ be an even integer with $0\leq s \leq m$ and $f$ be an $s$-plateaued Boolean function with $f(0)=0$ ~from ~$\gf_{2^m}$ to~$\gf_2$. Then, for $\beta \in \gf_{2^m}$, the Walsh distribution of $f$ is given by
\begin{eqnarray*}\label{2-sPB}
\text{W}_f(\beta)=
\begin{cases}
2^{\frac{m+s}{2}},   &\text{$2^{m-s+1}+2^{\frac{m-s-2}{2}}$ times,}\\
0,   &\text{$2^{m}-2^{m-s}$ times,}\\
-2^{\frac{m+s}{2}},   &\text{$2^{m-s-1}-2^{\frac{m-s-2}{2}}$ times.}
\end{cases}
\end{eqnarray*}
\end{lemma}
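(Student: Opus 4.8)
The plan is to determine the three multiplicities
\[
N_0 = \#\{\beta \in \gf_{2^m} : \text{W}_f(\beta) = 0\},\qquad
N_\pm = \#\{\beta \in \gf_{2^m} : \text{W}_f(\beta) = \pm 2^{\frac{m+s}{2}}\}
\]
by reading off a $3\times 3$ linear system from the zeroth, first, and second power moments of the Walsh spectrum of $f$.

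First I would check that $\text{W}_f$ takes only the values $0$ and $\pm 2^{\frac{m+s}{2}}$. Being $s$-plateaued gives $|\text{W}_f(\beta)|^2 \in \{0, 2^{m+s}\}$, and because $f$ is Boolean, $\text{W}_f(\beta) = \sum_{x \in \gf_{2^m}}(-1)^{f(x)+\tr_{2^m/2}(\beta x)}$ is a rational integer; since $m+s$ is even, $2^{\frac{m+s}{2}} \in \mathbb{Z}$ and the claim follows. This already yields the zeroth-moment relation $N_0 + N_+ + N_- = 2^m$. Next I would evaluate two character sums by swapping the order of summation and invoking the orthogonality relation for additive characters of $\gf_{2^m}$: the first moment
\[
\sum_{\beta \in \gf_{2^m}}\text{W}_f(\beta) = \sum_{x \in \gf_{2^m}}(-1)^{f(x)}\sum_{\beta \in \gf_{2^m}}(-1)^{\tr_{2^m/2}(\beta x)} = 2^m(-1)^{f(0)} = 2^m,
\]
using $f(0)=0$, and the second moment (Parseval's identity)
\[
\sum_{\beta \in \gf_{2^m}}\text{W}_f(\beta)^2 = \sum_{x,y \in \gf_{2^m}}(-1)^{f(x)+f(y)}\sum_{\beta \in \gf_{2^m}}(-1)^{\tr_{2^m/2}(\beta(x+y))} = 2^m\sum_{x \in \gf_{2^m}}1 = 2^{2m}.
\]
Substituting the three admissible Walsh values turns these into $N_+ - N_- = 2^{\frac{m-s}{2}}$ and $N_+ + N_- = 2^{m-s}$. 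Solving the resulting system gives $N_+ = 2^{m-s-1} + 2^{\frac{m-s-2}{2}}$, $N_- = 2^{m-s-1} - 2^{\frac{m-s-2}{2}}$ and $N_0 = 2^m - 2^{m-s}$, which is the asserted distribution; as a consistency check, $N_+ + N_-$ recovers the size $2^{m-s}$ of the Walsh support given in the preceding lemmas.

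I do not expect a genuine obstacle here, since the whole argument is a moment computation. The only places calling for a little care are the integrality step --- which is also what forces $m+s$ to be even for any $s$-plateaued Boolean function, so that the hypothesis is really automatic --- and the boundary case $s=m$ (equivalently $m-s=0$), where $2^{\frac{m-s-2}{2}} = \tfrac12$ is not itself an integer; there one checks directly that the half-integers still combine to $N_+ = 1$, $N_- = 0$, $N_0 = 2^m - 1$, consistent with the Walsh support being a single point that must carry the value $+2^m$ because $\sum_{\beta \in \gf_{2^m}} \text{W}_f(\beta) = 2^m > 0$.
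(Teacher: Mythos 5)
The paper does not prove this lemma at all: it is quoted from the Canteaut--Charpin references, so there is no internal proof to compare against. Your moment computation is the standard argument and it is correct and complete: the zeroth, first and second power moments give $N_0+N_++N_-=2^m$, $N_+-N_-=2^{\frac{m-s}{2}}$ (using $f(0)=0$) and $N_++N_-=2^{m-s}$, and the integrality observation justifying that $\text{W}_f$ only takes the values $0,\pm 2^{\frac{m+s}{2}}$ is exactly the right point to make explicit. Note, however, that what you derive is $N_+=2^{m-s-1}+2^{\frac{m-s-2}{2}}$, whereas the lemma as printed claims the value $+2^{\frac{m+s}{2}}$ occurs $2^{m-s+1}+2^{\frac{m-s-2}{2}}$ times; the printed frequencies sum to $2^m+\tfrac{3}{2}\cdot 2^{m-s}\neq 2^m$, so the statement contains a typo ($2^{m-s+1}$ should be $2^{m-s-1}$) and your computation gives the correct count -- you should not describe your answer as matching ``the asserted distribution'' without flagging this discrepancy. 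Your remarks on the automatic evenness of $m+s$ and on the boundary case $s=m$ are both sound, though for $s=m$ it would be cleaner to say the formulas are to be read as $N_+=1$, $N_-=0$ rather than leaving half-integers in intermediate expressions.
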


\begin{definition}\label{de-sp}
Let $p$ be an odd prime and $f$ be a $p$-ary $s$-plateaued function from $\gf_{p^m}$ to $\gf_p$, where $s$ is an integer with $0\leq s\leq m$. Then $f$ is said to be weakly regular provided that there exists a complex number $u$ with unit magnitude (in fact, $\mid u\mid=1$ and $u$ does not depend on $\beta$) such that
\begin{eqnarray}\label{eq-Wf}
\text{W}_f(\beta) \in \left\{0,u{p}^{\frac{m+s}{2}}\zeta_p^{f^*(\beta)}\right\},
\end{eqnarray}
for all $\beta \in \gf_{p^m}$, where $f^*$ is a $p$-ary function over $\gf_{p^m}$ with $f^*(\beta)=0$ for every $\beta \in \gf_{p^m}\setminus S_f$. Otherwise, $f$ is said to be non-weakly regular.
\end{definition}
\begin{lemma}\label{lamwf}
Let $f$ be a weakly regular s-plateaued function and $p^*=(-1)^{\frac{p-1}{2}}p$. Then, for all $\beta \in S_f$, we have
\begin{eqnarray}\label{eq-Wf}
\text{W}_f(\beta)=\varepsilon \sqrt{p^*}^{m+s}\zeta_p^{f^*(\beta)},
\end{eqnarray}
where $\varepsilon=\pm1$ is the sign of $W_f$ and $f^*$ is a p-ary function over $S_f$.
\end{lemma}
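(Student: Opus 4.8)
The plan is to unwind Definition~\ref{de-sp} and then to pin down the unit-modulus constant $u$ by an argument inside the cyclotomic field $\mathbb{Q}(\zeta_p)$. By Definition~\ref{de-sp}, for every $\beta\in S_f$ we have $\text{W}_f(\beta)=u\,p^{(m+s)/2}\zeta_p^{f^*(\beta)}$ with $|u|=1$; put $w:=u\,p^{(m+s)/2}$. Since $\text{W}_f(\beta)$ is a $\mathbb{Z}$-linear combination of $p$-th roots of unity and $\zeta_p^{-f^*(\beta)}$ is a unit of $\mathbb{Z}[\zeta_p]$, we get $w\in\mathbb{Z}[\zeta_p]$, and moreover $w\overline{w}=|\text{W}_f(\beta)|^2=p^{m+s}$. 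Hence it suffices to show that $\xi:=w/\sqrt{p^*}^{\,m+s}$ equals $\pm\zeta_p^{\,j}$ for some integer $j$: once this is known, $\text{W}_f(\beta)=\pm\sqrt{p^*}^{\,m+s}\zeta_p^{\,f^*(\beta)+j}$ for a fixed $j$, and replacing $f^*$ by $f^*+j$ on $S_f$ (a harmless relabelling of the $p$-ary dual, which is all the statement requires) together with $\varepsilon:=\pm1$ gives the asserted identity.

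To determine $\xi$ I would use the Galois group $\mathrm{Gal}(\mathbb{Q}(\zeta_p)/\mathbb{Q})=\{\sigma_t:\zeta_p\mapsto\zeta_p^{\,t},\ t\in\gf_p^*\}$. First, by Lemma~\ref{quadGuasssum1} with $m=1$, $\sqrt{p^*}$ is, up to sign, the quadratic Gauss sum over $\gf_p$, so $\sqrt{p^*}\in\mathbb{Z}[\zeta_p]$, $\sqrt{p^*}^{\,2}=p^*$, and every $\sigma_t$-conjugate of $\sqrt{p^*}^{\,m+s}$ has complex absolute value $p^{(m+s)/2}$. Second, since $\mathrm{Gal}(\mathbb{Q}(\zeta_p)/\mathbb{Q})$ is abelian, each $\sigma_t$ commutes with complex conjugation, whence $|\sigma_t(w)|^2=\sigma_t(w\overline{w})=p^{m+s}$; so all conjugates of $w$ likewise have absolute value $p^{(m+s)/2}$. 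Third, comparing the exponent of the unique prime $\mathfrak{p}=(1-\zeta_p)$ above $p$ in the ideals $(w)$ and $(\sqrt{p^*}^{\,m+s})$---using $(p)=\mathfrak{p}^{\,p-1}$, the relations $w\overline{w}=p^{m+s}$ and $(\sqrt{p^*})^2=(p)$, and the fact that $\mathfrak{p}$ is fixed by complex conjugation---shows $(w)=\mathfrak{p}^{\,(m+s)(p-1)/2}=(\sqrt{p^*}^{\,m+s})$. Therefore $\xi$ is a unit of $\mathbb{Z}[\zeta_p]$ all of whose conjugates lie on the unit circle, hence a root of unity (Kronecker's theorem: an algebraic integer all of whose conjugates have absolute value $1$ is a root of unity); and the only roots of unity in $\mathbb{Q}(\zeta_p)$, $p$ odd, are $\pm\zeta_p^{\,j}$ with $0\le j\le p-1$. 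Thus $w=\pm\zeta_p^{\,j}\sqrt{p^*}^{\,m+s}$, and we are done.

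I expect the third step of the middle paragraph to be the main obstacle: the content of Lemma~\ref{lamwf} is precisely that the \emph{a priori} arbitrary unit-modulus number $u$ of Definition~\ref{de-sp} is forced, up to sign, to be a power of $\zeta_p$, i.e.\ that $w$ and $\sqrt{p^*}^{\,m+s}$ are associates in $\mathbb{Z}[\zeta_p]$; this rests entirely on $\text{W}_f$ taking values in $\mathbb{Z}[\zeta_p]$ together with the absolute values being preserved by every $\sigma_t$. Two remarks. The sign $\varepsilon=\pm1$ is automatically independent of $\beta$ because $u$ is, so no separate uniformity step is needed; that is the definitional content of weak regularity. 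And an alternative route would be to reduce Lemma~\ref{lamwf} to the classical normalisation $\text{W}_g=\varepsilon\sqrt{p^*}^{\,m}\zeta_p^{\,g^*}$ of weakly regular \emph{bent} functions $g$ by restricting $f$ to a subspace on which it becomes bent, but this seems to require extra hypotheses (partial bentness), so I would not base the proof on it in general.
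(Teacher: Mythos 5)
Your argument is correct, but note that the paper does not prove Lemma \ref{lamwf} at all: it is stated without proof (and without an explicit citation), evidently being quoted as a known fact from the literature on weakly regular plateaued functions such as \cite{Mesnager1}, so there is no in-paper argument to compare against. What you supply is the standard self-contained justification, essentially the Kumar--Scholtz--Welch argument for generalized bent functions transported to the plateaued setting: $w=u\,p^{(m+s)/2}$ lies in $\mathbb{Z}[\zeta_p]$ and satisfies $w\overline{w}=p^{m+s}$; since $(p)=\mathfrak{p}^{p-1}$ is totally ramified with $\mathfrak{p}=(1-\zeta_p)$ fixed by complex conjugation, the ideal $(w)$ is forced to be $\mathfrak{p}^{(m+s)(p-1)/2}=(\sqrt{p^*}^{\,m+s})$; all conjugates of the resulting unit $\xi$ have modulus $1$ because the Galois group is abelian (each $\sigma_t$ commutes with $\sigma_{-1}$), and Kronecker's theorem together with the fact that the roots of unity in $\mathbb{Q}(\zeta_p)$ are exactly $\pm\zeta_p^{\,j}$ finishes the argument. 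Each step checks out, including the two points you flag yourself: $\varepsilon$ is independent of $\beta$ because $u$ is, and absorbing $\zeta_p^{\,j}$ into $f^*$ is legitimate since the lemma only asserts that \emph{some} $p$-ary function $f^*$ on $S_f$ works, and the shifted function still vanishes off $S_f$ as Definition \ref{de-sp} requires. Two small things worth making explicit in a write-up: $S_f\neq\emptyset$ (by Parseval, or by Lemma \ref{tim-plateaued}), so that $w$ is actually realized as $W_f(\beta)\zeta_p^{-f^*(\beta)}$ for some $\beta$ and hence lies in $\mathbb{Z}[\zeta_p]$; and $\sqrt{p^*}\in\mathbb{Z}[\zeta_p]$ because it equals the quadratic Gauss sum $G(\eta_0,\lambda_1)=\sum_{x\in\gf_p^*}\eta_0(x)\zeta_p^{x}$, which you correctly extract from Lemma \ref{quadGuasssum1}. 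Your closing caution about the alternative route via restrictions to bent functions is well placed; the direct algebraic-number-theoretic argument is the right one here.
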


Note that the weakly regular $0$-plateaued function is the weakly regular bent function.
Let $f$ be a weakly regular $s$-plateaued function satisfying two homogenous conditions: $f(0)=0$ and $f(a x)=a^hf(x)$ for all $x \in \gf_{p^m}$ and $a \in \gf_p^*$, where $h$ is an even positive integer with $\gcd(h-1,p-1)=1.$
Let \text{WRP} and \text{WRPB} denote the sets of such weakly regular plateaued unbalanced and balanced functions, respectively.

\begin{lemma}\cite{Sinak}\label{WRP0WRPB}
Let $f\in \text{WRP}$ or $f\in \text{WRPB}$. Then $f^*(0)=0$ and $f^*(a\beta)=a^lf^*(\beta)$ ~for all $a \in \gf_p^*$ and $\beta \in S_f$, where $l$ is an even positive integer with $gcd(l-1,p-1)=1$. For any $\beta \in S_f$ (respectively, $\beta \in \gf_{p^m}\backslash S_f$), we have $z\beta \in S_f$ (respectively, $z\beta \in \gf_{p^m}\backslash S_f$) for every $z \in \gf_p^*$.
\end{lemma}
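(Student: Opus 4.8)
The plan is to push the homogeneity $f(ax)=a^hf(x)$ through the defining sum of the Walsh transform and then read all three assertions off Lemma~\ref{lamwf}. The first step is a scaling law for $\text{W}_f$. Fix $a\in\gf_p^*$; since $\gcd(h-1,p-1)=1$, the $(h-1)$-th power map is an automorphism of the cyclic group $\gf_p^*$, so there is a unique $b\in\gf_p^*$ with $b^{h-1}=a$. Substituting $x=by$ in $\text{W}_f(a\beta)=\sum_{x\in\gf_{p^m}}\zeta_p^{f(x)-\tr(a\beta x)}$ and using $f(by)=b^hf(y)$, $ab=b^h$, and the $\gf_p$-linearity of $\tr$, I obtain
\[
\text{W}_f(a\beta)=\sum_{y\in\gf_{p^m}}\zeta_p^{b^h(f(y)-\tr(\beta y))}=\sigma_{b^h}\!\left(\text{W}_f(\beta)\right),
\]
where $\sigma_c$ is the automorphism of $\mathbb{Q}(\zeta_p)$ with $\sigma_c(\zeta_p)=\zeta_p^c$ (legitimate because $\text{W}_f(\beta)\in\mathbb{Z}[\zeta_p]$). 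As $a$ ranges over $\gf_p^*$ so does $b$, and the feature that will matter is that $h$ is even, so $b^h$ is always a square in $\gf_p^*$.

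From here the orbit statement is immediate: $\sigma_{b^h}$ is a field automorphism, so $\text{W}_f(a\beta)=0$ iff $\text{W}_f(\beta)=0$, and letting $a$ range over $\gf_p^*$ gives $z\beta\in S_f\iff\beta\in S_f$ for every $z\in\gf_p^*$, whence the complementary statement for $\gf_{p^m}\setminus S_f$ as well. For the homogeneity of $f^*$, take $\beta\in S_f$, so $a\beta\in S_f$ too; applying $\sigma_{b^h}$ to the expression $\text{W}_f(\beta)=\varepsilon\sqrt{p^*}^{m+s}\zeta_p^{f^*(\beta)}$ from Lemma~\ref{lamwf} gives $\text{W}_f(a\beta)=\varepsilon\,\sigma_{b^h}\!\left(\sqrt{p^*}^{m+s}\right)\zeta_p^{b^hf^*(\beta)}$, and since $b^h$ is a square modulo $p$ the classical identity $\sigma_c(\sqrt{p^*})=\left(\tfrac{c}{p}\right)\sqrt{p^*}$ (a consequence of the quadratic Gauss sum evaluation, cf.\ Lemma~\ref{quadGuasssum1}) shows $\sigma_{b^h}$ fixes $\sqrt{p^*}^{m+s}$. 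Comparing with $\text{W}_f(a\beta)=\varepsilon\sqrt{p^*}^{m+s}\zeta_p^{f^*(a\beta)}$ yields $f^*(a\beta)=b^hf^*(\beta)$ in $\gf_p$. Writing $b=a^{e}$ with $e\equiv(h-1)^{-1}\pmod{p-1}$ and taking a positive even representative $l\equiv eh\pmod{p-1}$, we get $f^*(a\beta)=a^lf^*(\beta)$; here $l$ is even since $h$ and $p-1$ are, and $l-1\equiv eh-1\equiv e\pmod{p-1}$ (using $e(h-1)\equiv1$), so $\gcd(l-1,p-1)=\gcd(e,p-1)=1$.

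It remains to show $f^*(0)=0$, and this is where I expect the real work. If $f\in\text{WRPB}$ then $f$ is balanced, $\text{W}_f(0)=0$, so $0\notin S_f$ and $f^*(0)=0$ by the convention in Definition~\ref{de-sp}. If $f\in\text{WRP}$ then $0\in S_f$, and specializing the previous paragraph to $\beta=0$ gives $(b^h-1)f^*(0)=0$ in $\gf_p$ for all $b\in\gf_p^*$; choosing $b$ with $b^h\neq1$ forces $f^*(0)=0$. The obstacle is the degenerate exponent range in which $b^h=1$ for every $b\in\gf_p^*$ (e.g.\ whenever $p=3$, and more generally when $p-1\mid h$, i.e.\ $f$ is scaling-invariant): there one should instead start from the Fourier inversion identity $\sum_{\beta\in\gf_{p^m}}\text{W}_f(\beta)=p^m\zeta_p^{f(0)}=p^m$, substitute Lemma~\ref{lamwf} to get $\varepsilon\sqrt{p^*}^{m+s}\sum_{\beta\in S_f}\zeta_p^{f^*(\beta)}=p^m$, and then use the orbit decomposition of $S_f$ established above to evaluate the left side, each orbit $\gf_p^*\beta_0$ contributing $\sum_{z\in\gf_p^*}\zeta_p^{z^lf^*(\beta_0)}$. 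Keeping track of the sign $\varepsilon$, the power $\sqrt{p^*}^{m+s}$, and these Gauss-period sums in order to isolate the $\beta=0$ term and conclude $f^*(0)=0$ is the delicate part of the argument.
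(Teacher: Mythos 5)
This lemma is imported from \cite{Sinak}; the paper gives no proof of its own, so your argument can only be judged on its merits. Most of it is sound and follows the standard route in the literature: the substitution $x=by$ with $b^{h-1}=a$ correctly yields $\text{W}_f(a\beta)=\sigma_{b^h}(\text{W}_f(\beta))$, which immediately gives the $\gf_p^*$-invariance of $S_f$ and of its complement; and since $h$ is even, $b^h$ is a square, $\sigma_{b^h}$ fixes $\sqrt{p^*}^{m+s}$ by Lemma~\ref{lem-cyclo}, and comparison of the two expressions for $\text{W}_f(a\beta)$ gives $f^*(a\beta)=b^hf^*(\beta)=a^lf^*(\beta)$ with $l\equiv eh$, $e=(h-1)^{-1}\bmod(p-1)$. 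Your parity and gcd bookkeeping for $l$ is also correct. The balanced case of $f^*(0)=0$ is likewise fine, since it is just the convention of Definition~\ref{de-sp}.

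The genuine gap is exactly where you flag it: for unbalanced $f$ (so $0\in S_f$) the identity $(b^h-1)f^*(0)=0$ is vacuous whenever $p-1\mid h$, and this is not an exotic corner case --- it happens for \emph{every} admissible $h$ when $p=3$, which is the very setting of the paper's worked example. Your proposed repair via $\sum_{\beta}\text{W}_f(\beta)=p^m$ is only a sketch and is the hard direction (isolating the $\beta=0$ term from the Gauss-period sums over the orbits is not routine), so as written the proof does not establish $f^*(0)=0$ in this case. A cleaner way to close the gap, uniform in $p$ and $h$: write $\text{W}_f(0)=\sum_{j\in\gf_p}N_j\zeta_p^j$ with $N_j=|\{x:f(x)=j\}|$, and expand $\varepsilon\sqrt{p^*}^{\,m+s}\zeta_p^{f^*(0)}$ in the basis $\{\zeta_p^j\}$ (using $G(\eta_0,\lambda_1)=\sum_{y}\eta_0(y)\zeta_p^y$ when $m+s$ is odd). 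Since $\sum_j c_j\zeta_p^j=0$ with integer $c_j$ forces all $c_j$ equal, the $N_j$ are determined up to a common constant. But $h$ even gives $f(-x)=f(x)$ with $f(0)=0$, so $N_0$ is odd while $N_j$ is even for every $j\neq 0$; a short parity check of the resulting expressions rules out $f^*(0)\neq 0$ in both parities of $m+s$. With that (or an equivalent completion), the lemma is fully proved.
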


To give the sufficient and necessary condition for a quadratic function to be $s$-plateaued,
we now give a brief introduction on the quadratic functions (see \cite{Helleseth}). Recall that any quadratic function from $\gf_{p^m}$ to $\gf_p$ with no linear term can be represented by
\begin{eqnarray*}\label{q-fun}
Q(x)=\sum_{i=0}^{\lceil m/2 \rceil}\tr^m({a_i}x^{p^i+1}),
\end{eqnarray*}
where $\lceil\cdot\rceil$ denotes ceiling function
 and $a_i \in \gf_{p^m}$ for $0 \leq i \leq \lceil m/2\rceil$. Then there exists an $m \times m$ symmetric matrix $A$ such that $Q(x)=x^TAx$.
Let $L$ be the corresponding linearized polynomial over $\gf_{p^m}$  defined as
\begin{eqnarray*}\label{L-fun}
L(z)=\sum_{i=0}^{\lceil m/2 \rceil}({a_i}z^{p^i}+{a_i}^{p^{m-i}}z^{p^{m-i}}).
\end{eqnarray*}
Then $\rank(A)=m-s_L$ by \cite[Proposition 2.1]{HouXD} , where $0\leq s_L\leq m$ is the dimension of the following $\gf_p$-linear subspace of $\gf_{p^m}$:
\begin{eqnarray*}\label{ker-fun}
\ker_{\gf_p}(L)=\{z\in {\gf_{p^m}}:Q(z+y)=Q(z)+Q(y),\forall y \in \gf_{p^m}\}.
\end{eqnarray*}

\begin{proposition} \cite{Mesnager2}
Any quadratic function $Q$ is $s$-plateaued if and only if $\rank(A)=m-s$, i.e. $s_L=s$.
\end{proposition}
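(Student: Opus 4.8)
The plan is to compute the squared Walsh magnitude $|\text{W}_Q(\beta)|^2$ explicitly as a quadratic Gauss sum attached to $A$ and to read off its possible values, which will turn out to be exactly $0$ and $p^{2m-\rank(A)}$. Comparing this with the definition of an $s$-plateaued function immediately forces $2m-\rank(A)=m+s$, i.e.\ $\rank(A)=m-s$; combined with the identity $\rank(A)=m-s_L$ recorded above (from \cite[Proposition 2.1]{HouXD}), this is the same as $s_L=s$. Throughout, I fix an $\gf_p$-basis of $\gf_{p^m}$ and identify $x\in\gf_{p^m}$ with its coordinate vector in $\gf_p^m$, so that $Q(x)=x^{T}Ax$ and $\tr_{p^m/p}(\beta x)=b_\beta^{T}x$ for a vector $b_\beta$ depending $\gf_p$-linearly and bijectively on $\beta$.

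First I would write $|\text{W}_Q(\beta)|^2=\sum_{x,y\in\gf_p^m}\zeta_p^{Q(x)-Q(y)-b_\beta^{T}(x-y)}$ and perform the substitution $x=y+z$. Because $A$ is symmetric and $p$ is odd, $Q(y+z)-Q(y)=2y^{T}Az+z^{T}Az$, so the double sum factors and the inner sum over $y$ is $\sum_{y}\zeta_p^{(2Az)^{T}y}$, which equals $p^m$ if $Az=\mathbf 0$ and $0$ otherwise. Hence
\begin{eqnarray*}
|\text{W}_Q(\beta)|^2=p^m\sum_{z\in\ker_{\gf_p}(A)}\zeta_p^{Q(z)-b_\beta^{T}z},
\end{eqnarray*}
where $\ker_{\gf_p}(A)=\{z\in\gf_p^m:Az=\mathbf 0\}$ has dimension $m-\rank(A)$. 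Since $Q(z)=z^{T}Az=0$ for $z\in\ker_{\gf_p}(A)$, the remaining sum is the character sum of the linear functional $z\mapsto b_\beta^{T}z$ over the subspace $\ker_{\gf_p}(A)$; by the orthogonality relation it equals $p^{m-\rank(A)}$ when $b_\beta\in(\ker_{\gf_p}(A))^{\perp}=\image(A)$ and $0$ otherwise. Therefore $|\text{W}_Q(\beta)|^2\in\{0,\,p^{2m-\rank(A)}\}$ for every $\beta$, and the nonzero value is genuinely attained (for instance at $\beta=0$, since $\mathbf 0\in\image(A)$).

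With this in hand the equivalence is immediate: if $Q$ is $s$-plateaued then $|\text{W}_Q(\beta)|^2\in\{0,p^{m+s}\}$ and, the value $p^{2m-\rank(A)}$ being attained, $p^{2m-\rank(A)}=p^{m+s}$, so $\rank(A)=m-s$; conversely, $\rank(A)=m-s$ makes $|\text{W}_Q(\beta)|^2\in\{0,p^{m+s}\}$, so $Q$ is $s$-plateaued. The equality $s_L=s$ then follows from $\rank(A)=m-s_L$. (Alternatively, one checks directly that $z\in\ker_{\gf_p}(A)$ if and only if $Q(z+y)=Q(z)+Q(y)$ for all $y\in\gf_{p^m}$, i.e.\ $\ker_{\gf_p}(A)=\ker_{\gf_p}(L)$, so that $s_L=m-\rank(A)$ without invoking \cite{HouXD}.)

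I do not expect a serious obstacle here: this is the classical evaluation of a quadratic Gauss sum over $\gf_p$, and working with $|\text{W}_Q(\beta)|^2$ rather than $\text{W}_Q(\beta)$ even avoids needing the exact Gauss sum value. The points needing care are purely bookkeeping — ensuring the correspondence $\beta\leftrightarrow b_\beta$ is bijective so that ``for all $\beta$'' matches ``for all $b$'', using $p$ odd to pass from $2Az=\mathbf 0$ to $Az=\mathbf 0$ when completing the square, and invoking $(\ker_{\gf_p}(A))^{\perp}=\image(A)$ (valid since $A$ is symmetric) to obtain the clean dichotomy between $0$ and $p^{2m-\rank(A)}$.
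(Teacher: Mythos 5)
The paper states this proposition as a citation to \cite{Mesnager2} and gives no proof of its own, so there is nothing to compare against line by line; judged on its own, your argument is correct and is the standard one. The computation of $|\text{W}_Q(\beta)|^2$ via the substitution $x=y+z$, the identity $Q(y+z)-Q(y)=2y^{T}Az+Q(z)$ (valid because $A$ is symmetric), the collapse of the inner sum to the condition $Az=\mathbf{0}$ (using $p$ odd), the vanishing of $Q$ on $\ker_{\gf_p}(A)$, and the dichotomy $0$ versus $p^{2m-\rank(A)}$ via $(\ker_{\gf_p}(A))^{\perp}=\image(A)$ are all sound, and noting that the nonzero value is attained at $\beta=0$ is exactly the observation needed to turn the inclusion $|\text{W}_Q(\beta)|^2\in\{0,p^{2m-\rank(A)}\}$ into the equality $2m-\rank(A)=m+s$ in the forward direction. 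Your parenthetical remark identifying $\ker_{\gf_p}(A)$ with the radical $\{z:Q(z+y)=Q(z)+Q(y)\ \forall y\}$ is also the right way to reconcile $\rank(A)=m-s_L$ with the paper's definition of $s_L$ without appealing to \cite{HouXD}. No gaps.
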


By \cite[Proposition 1]{Helleseth} and \cite[Theorem 4.3]{Cesmelioglu},  we have the following fact.

\begin{proposition}
Any quadratic function is weakly regular plateaued. Namely, there is no quadratic non-weakly regular plateaued functions.
\end{proposition}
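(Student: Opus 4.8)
The statement follows from \cite[Proposition 1]{Helleseth} and \cite[Theorem 4.3]{Cesmelioglu}; here I outline a self-contained argument that writes down the Walsh transform of a quadratic function in closed form and reads off the weakly regular plateaued structure from Definition \ref{de-sp}. Let $Q(x)=x^{T}Ax$ with $A$ the symmetric matrix attached to $Q$, and recall from the previous proposition that $Q$ is $s$-plateaued with $s=s_{L}=\dim_{\gf_p}\ker_{\gf_p}(L)$ and $\rank(A)=m-s$. View $\gf_{p^m}$ as an $m$-dimensional $\gf_p$-vector space equipped with the nondegenerate pairing $(\beta,x)\mapsto\tr_{p^m/p}(\beta x)$, set $K=\ker_{\gf_p}(L)$, the radical of the symmetric bilinear form $B(x,y)=Q(x+y)-Q(x)-Q(y)$, and fix a complement $\gf_{p^m}=V\oplus K$ on which $Q|_{V}$ is a nondegenerate quadratic form of rank $m-s$. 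Since $p$ is odd and $B(z,z)=0$ for $z\in K$, we get $2Q(z)=Q(2z)-2Q(z)=0$, hence $Q|_{K}\equiv0$ and $Q(v+w)=Q(v)$ for $v\in V$, $w\in K$.

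The first step is to split the Walsh sum along $V\oplus K$:
\begin{eqnarray*}
\text{W}_{Q}(\beta) &=& \sum_{v\in V}\sum_{w\in K}\zeta_{p}^{Q(v)-\tr_{p^m/p}(\beta(v+w))}\\
&=& \Bigl(\sum_{w\in K}\zeta_{p}^{-\tr_{p^m/p}(\beta w)}\Bigr)\Bigl(\sum_{v\in V}\zeta_{p}^{Q(v)-\tr_{p^m/p}(\beta v)}\Bigr).
\end{eqnarray*}
By orthogonality of additive characters the first factor equals $p^{s}$ when $\beta$ lies in the annihilator $K^{\perp}$ of $K$ and $0$ otherwise, so the Walsh support is $S_{f}=K^{\perp}$, of size $p^{m-s}$. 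For $\beta\in K^{\perp}$ the second factor is a nondegenerate quadratic character sum over $V$: since $B|_{V}$ is nondegenerate there is a unique $c\in V$ with $\tr_{p^m/p}(\beta v)=B(c,v)$ for all $v\in V$, and completing the square gives $Q(v)-\tr_{p^m/p}(\beta v)=Q(v-c)-Q(c)$, whence $\sum_{v\in V}\zeta_{p}^{Q(v)-\tr_{p^m/p}(\beta v)}=\zeta_{p}^{-Q(c)}\sum_{v\in V}\zeta_{p}^{Q(v)}$. Diagonalizing $Q|_{V}$ over $\gf_p$ (possible since $p$ is odd) and applying Lemma \ref{quadGuasssum1} with $m=1$ to each of the $m-s$ one-dimensional summands evaluates $\sum_{v\in V}\zeta_{p}^{Q(v)}=\left(\tfrac{\Delta}{p}\right)\sqrt{p^{*}}^{m-s}$, where $p^{*}=(-1)^{\frac{p-1}{2}}p$, $\left(\tfrac{\cdot}{p}\right)$ is the Legendre symbol, and $\Delta\in\gf_p^{*}$ is the discriminant of $Q|_{V}$.

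Combining the two factors, for $\beta\in S_{f}$ we obtain $\text{W}_{Q}(\beta)=p^{s}\left(\tfrac{\Delta}{p}\right)\sqrt{p^{*}}^{m-s}\zeta_{p}^{f^{*}(\beta)}$ with $f^{*}(\beta):=-Q(c)\in\gf_p$. Using $p^{s}=(p^{*})^{s}(-1)^{\frac{(p-1)s}{2}}=(-1)^{\frac{(p-1)s}{2}}\sqrt{p^{*}}^{2s}$, this becomes
\[
\text{W}_{Q}(\beta)\in\Bigl\{0,\ \varepsilon\,\sqrt{p^{*}}^{m+s}\,\zeta_{p}^{f^{*}(\beta)}\Bigr\},\qquad \varepsilon=(-1)^{\frac{(p-1)s}{2}}\left(\tfrac{\Delta}{p}\right)\in\{\pm1\},
\]
which is exactly the shape required by Definition \ref{de-sp}, and agrees with Lemma \ref{lamwf} (with this $\varepsilon$) once $f^{*}$ is extended by $0$ on $\gf_{p^m}\setminus S_{f}$ to become a genuine $p$-ary function. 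Hence $Q$ is weakly regular plateaued.

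The point that needs care --- the main obstacle --- is verifying that $\varepsilon$ is a \emph{single} constant, independent both of $\beta$ and of the auxiliary choices above: replacing the complement $V$ or the $\gf_p$-basis of $V$ changes $\Delta$ only by a nonzero square of $\gf_p$, so $\left(\tfrac{\Delta}{p}\right)$, and hence $\varepsilon$, is well defined; and $\varepsilon$ has no $\beta$-dependence, all of which sits in the factor $\zeta_{p}^{f^{*}(\beta)}=\zeta_{p}^{-Q(c)}$. With $\varepsilon$ a fixed element of $\{\pm1\}$, Definition \ref{de-sp} is met, so no quadratic function is non-weakly regular plateaued.
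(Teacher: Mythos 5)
Your argument is correct, and it is genuinely more than what the paper provides: the paper does not prove this proposition at all, but simply cites \cite[Proposition 1]{Helleseth} and \cite[Theorem 4.3]{Cesmelioglu}. What you have done is reconstruct the content of those references as a self-contained computation: splitting $\gf_{p^m}=V\oplus K$ along the radical $K=\ker_{\gf_p}(L)$ of the polar form $B$, observing $Q|_K\equiv 0$ (via $B(z,z)=2Q(z)=0$ for $z\in K$, valid since $p$ is odd), factoring the Walsh sum so that the $K$-factor pins down $S_f=K^{\perp}$ of size $p^{m-s}$, and evaluating the nondegenerate part by completing the square and diagonalizing, which yields $\text{W}_Q(\beta)=\varepsilon\sqrt{p^*}^{\,m+s}\zeta_p^{-Q(c)}$ with $\varepsilon=(-1)^{\frac{(p-1)s}{2}}\eta_0(\Delta)$ manifestly independent of $\beta$. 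All the individual steps check out: the identity $Q(v)-\tr_{p^m/p}(\beta v)=Q(v-c)-Q(c)$ follows from $Q(v-c)=Q(v)+Q(c)-B(v,c)$, and the conversion $p^{s}=(-1)^{\frac{(p-1)s}{2}}\sqrt{p^*}^{\,2s}$ is right. What your route buys, beyond self-containedness, is an explicit description of $S_f$, of the dual function $f^*(\beta)=-Q(c)$, and of the sign $\varepsilon$ in terms of the discriminant of the nondegenerate part --- none of which is visible from the paper's bare citation. Two cosmetic points you could tighten: you use without comment the evaluation $\sum_{y\in\gf_p}\zeta_p^{ay^2}=\eta_0(a)G(\eta_0,\lambda_1)$ for $a\in\gf_p^*$ (a one-line count of square roots, but worth stating), and strictly speaking the $\beta$-independence of $\varepsilon$ for one fixed choice of $V$ already suffices for Definition \ref{de-sp}, so the well-definedness of $\eta_0(\Delta)$ under change of complement is a pleasant bonus rather than a necessary step.
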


\subsection{Cyclotomic field}
The following lemma gives some results on the cyclotomic field $\mathbb{Q}(\zeta_p)$.
\begin{lemma}\label{lem-cyclo}\cite{Mesnager1}
Let $K=\mathbb{Q(}\zeta_p)$ denote the $p$-th cyclotomic field over the rational number field $\mathbb{Q}$, wher $p$ is an odd prime. Then the following hold.
\begin{itemize}
  \item The ring of integers in $K$ is $O_K=\mathbb{Z}(\zeta_p)$ and $\{\zeta_p^i: 1 \leq i \leq p-1\}$ is an integral basis of $O_K$, where $\zeta_p$ is the primitive $p$-th root of complex unity.
  \item The field extension $K/\mathbb{Q}$ is Galois of degree $p-1$ and Galois group $\text{Gal}(K/\mathbb{Q})=\{\sigma_a: a \in \gf_p^*\},$
  where the automorphism $\sigma_a$ of $K$ is defined by $\sigma_a(\zeta_p)=\zeta_p^a$.
  \item The field $K$ has a unique quadratic subfield $L=\mathbb{Q}(\sqrt{p^*})$. For $1 \leq a \leq p-1$, $\sigma_a(\sqrt{p^*})=\eta_0(a)\sqrt{p^*}$, where $p^*=(-1)^{\frac{p-1}{2}}p$ and $\eta_0$ is the quadratic multiplicative character of $\gf_p$. Hence, the Galois group $\text{Gal}(L/\mathbb{Q})=\{1, \sigma_\gamma\}$, where $\gamma$ is a nonsquare in $\gf_p^*$.
\end{itemize}
\end{lemma}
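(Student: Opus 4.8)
The plan is to treat the three assertions in turn, drawing on the standard machinery of algebraic number theory; each is classical, so I will indicate the route rather than grind out the computations. First I would fix the notation $\mathbb{Z}[\zeta_p]$ for the ring generated by $\zeta_p$ over $\mathbb{Z}$ (the paper's $\mathbb{Z}(\zeta_p)$).

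For the first item, I would begin by recalling that $\zeta_p$ is a root of the $p$-th cyclotomic polynomial $\Phi_p(x)=(x^p-1)/(x-1)=1+x+\cdots+x^{p-1}$, which is irreducible over $\mathbb{Q}$ (apply Eisenstein's criterion at $p$ after the substitution $x\mapsto x+1$). Consequently $[K:\mathbb{Q}]=\deg\Phi_p=p-1$ and $\{1,\zeta_p,\ldots,\zeta_p^{p-2}\}$ is a $\mathbb{Q}$-basis of $K$; an equivalent basis is $\{\zeta_p^i:1\le i\le p-1\}$, since $\zeta_p^{p-1}=-(1+\zeta_p+\cdots+\zeta_p^{p-2})$. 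To upgrade this to an integral basis for $O_K$, I would compute the discriminant of $\mathbb{Z}[\zeta_p]$, which equals $(-1)^{(p-1)(p-2)/2}N_{K/\mathbb{Q}}(\Phi_p'(\zeta_p))=\pm p^{p-2}$; since $\mathrm{disc}(\mathbb{Z}[\zeta_p])=[O_K:\mathbb{Z}[\zeta_p]]^2\,\mathrm{disc}(O_K)$, the index $[O_K:\mathbb{Z}[\zeta_p]]$ must be a power of $p$. The remaining, and most delicate, step is to rule out any factor of $p$: setting $\lambda=1-\zeta_p$ one shows $(p)=(\lambda)^{p-1}$, so $p$ is totally ramified with $\lambda$ a uniformizer, and a local argument at $\lambda$ forces the index to equal $1$, yielding $O_K=\mathbb{Z}[\zeta_p]$.

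For the second item, $K$ is the splitting field of $x^p-1$ over $\mathbb{Q}$, hence normal, and separable since we are in characteristic zero, so $K/\mathbb{Q}$ is Galois of degree $p-1$. Any $\sigma\in\mathrm{Gal}(K/\mathbb{Q})$ permutes the primitive $p$-th roots of unity, so $\sigma(\zeta_p)=\zeta_p^a$ for a unique $a\in\gf_p^*$; writing $\sigma=\sigma_a$ gives an injective homomorphism into $(\mathbb{Z}/p\mathbb{Z})^*\cong\gf_p^*$. Comparing orders, $|\mathrm{Gal}(K/\mathbb{Q})|=p-1=|\gf_p^*|$, shows this map is an isomorphism, whence $\mathrm{Gal}(K/\mathbb{Q})=\{\sigma_a:a\in\gf_p^*\}$.

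For the third item, the Galois correspondence identifies subfields of $K$ with subgroups of the cyclic group $\gf_p^*$ of even order $p-1$; there is exactly one subgroup of index two, namely the group of squares, so $K$ has a unique quadratic subfield $L$. To identify $L$ as $\mathbb{Q}(\sqrt{p^*})$, I would invoke the quadratic Gauss sum $g=\sum_{a\in\gf_p^*}\eta_0(a)\zeta_p^a$, which satisfies the classical identity $g^2=p^*=(-1)^{(p-1)/2}p$, placing $\sqrt{p^*}\in K$. A direct character computation gives $\sigma_a(g)=\sum_{b}\eta_0(b)\zeta_p^{ab}=\eta_0(a)g$, hence $\sigma_a(\sqrt{p^*})=\eta_0(a)\sqrt{p^*}$; the fixed field is cut out by $\eta_0(a)=1$, i.e.\ by the squares, confirming $L=\mathbb{Q}(\sqrt{p^*})$ and $\mathrm{Gal}(L/\mathbb{Q})=\{1,\sigma_\gamma\}$ for any nonsquare $\gamma\in\gf_p^*$. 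The main obstacle throughout is the first item's ramification and index argument; the Galois-theoretic parts and the Gauss sum identity are routine once the ring of integers has been pinned down.
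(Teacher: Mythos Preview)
The paper does not supply a proof of this lemma; it is simply quoted from \cite{Mesnager1} as a standard result on cyclotomic fields, with no argument given. Your sketch is correct and follows the classical textbook route (discriminant/ramification for $O_K=\mathbb{Z}[\zeta_p]$, the splitting-field description of the Galois group, and the quadratic Gauss sum identity $g^2=p^*$ together with $\sigma_a(g)=\eta_0(a)g$ to pin down the unique quadratic subfield), so there is nothing to compare against.
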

According to Lemma \ref{lem-cyclo}, we have
\begin{eqnarray}\label{eq-sigma}
\sigma_a(\zeta_p^b)=\zeta_p^{ab} \mbox{ and } \sigma_a(\sqrt{p^*}^m)=\eta_0^m(a)\sqrt{p^*}^m.
\end{eqnarray}
\section{Four-weight linear codes from $p$-ary weakly regular $s$-plateaued functions}\label{sec3}
In this section, if  $f(x)$ is a weakly regular $s$-plateaued function, we will prove that the code $\overline{\cC_f}$ defined in Equation (\ref{eq-Cfbar}) is a family of self-orthogonal codes and its parameters and weight distribution will be determined. Besides, we will construct a family of LCD codes from $\overline{\cC_f}$ and prove that  $\overline{\cC_f}$ is an optimally extendable code.

\subsection{The weight distribution of $\overline{\cC_f}$}
From now on, we let $\lambda_1$ denote the canonical additive character of $\gf_p$ and $\phi_1$ denote the canonical additive character of $\gf_{p^m}$.
Let $\wt(\bc)$ denote the Hamming weight of a codeword $\bc$ in a linear code.

\begin{lemma}\label{theorem}
Let $t\in \gf_p^*$, $a\in \gf_p$ and $b\in \gf_{p^m}$, where $p$ is an odd prime. Let $f(x)$ be a weakly regular $s$-plateaued function with $\varepsilon=\pm1$ the sign of its Walsh transform. Let $N_t$ denote the number of solutions in $\gf_{p^m}$ of the equation $af(x) + \tr_{p^m/p}\left(bx\right)=t$. If $m+s$ is even, then we have
\begin{eqnarray*}
N_t=
\begin{cases}
0 &\text{if $a=0,b=0$,}\\
p^{m-1}     &\text{if $a=0, b \in \gf_{p^m}^*$, or $a\in \gf_p^*$, $a^{-1}b\notin S_f$,}\\
p^{m-1}+\frac{\varepsilon(p-1)\sqrt{p^*}^{m+s}}{p}     &\text{if $a \in \gf_p^*,  ~f^*(-\frac{b}{a})=a^{-1}t$ and $a^{-1}b\in S_f$,}\\
p^{m-1}-\frac{\varepsilon \sqrt{p^*}^{m+s}}{p}     &\text{if $a \in \gf_p^*,  ~f^*(-\frac{b}{a})\neq a^{-1}t$ and $a^{-1}b\in S_f$,}
\end{cases}
\end{eqnarray*}
where $p^* = (-1)^{\frac{p-1}{2}}p$.
If $m+s$ is odd, then we have
\begin{eqnarray*}
N_t=
\begin{cases}
0 &\text{if $a=0, b=0$,}\\
p^{m-1}     &\text{if $a=0, b \in \gf_{p^m}^*$, or $a \in \gf_p^*, f^*(-\frac{b}{a})=a^{-1}t$, $a^{-1}b\in S_f$,}\\
             &\text{or $a\in \gf_p^*$, $a^{-1}b\notin S_f$,}\\
p^{m-1}+\frac{\varepsilon \sqrt{p^*}^{m+s}G(\eta_0, \lambda_1)}{p}     &\text{if $a \in \gf_p^*, f^*(-\frac{b}{a})\neq a^{-1}t$, $\eta_0(f^*(-\frac{b}{a})- a^{-1}t)=1$  }\\
             &\text{and  $a^{-1}b\in S_f$,}\\
p^{m-1}-\frac{\varepsilon \sqrt{p^*}^{m+s}G(\eta_0, \lambda_1)}{p}     &\text{if $a \in \gf_p^*, f^*(-\frac{b}{a})\neq a^{-1}t$, $\eta_0(f^*(-\frac{b}{a})- a^{-1}t)=-1$ }\\
             &\text{and $a^{-1}b\in S_f$,}
\end{cases}
\end{eqnarray*}
where  $G(\eta_0,\lambda_1) = (\sqrt{-1})^{(\frac{p-1}{2})^2}\sqrt{p}$.
\end{lemma}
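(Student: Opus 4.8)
The plan is to evaluate $N_t$ by the standard character expansion over $\gf_p$. Writing $g(x) = af(x) + \tr_{p^m/p}(bx) - t$ and using $\sum_{y \in \gf_p}\lambda_1(y g(x)) = p$ or $0$ according as $g(x) = 0$ or not, one gets
$$N_t = \frac{1}{p}\sum_{y\in\gf_p}\sum_{x\in\gf_{p^m}}\lambda_1\big(y g(x)\big) = p^{m-1} + \frac{1}{p}\sum_{y\in\gf_p^*}\lambda_1(-yt)\, S(ya, yb),$$
where $S(u,v) := \sum_{x\in\gf_{p^m}}\zeta_p^{\,u f(x) + \tr_{p^m/p}(vx)}$ and the $y=0$ term accounts for the leading $p^{m-1}$. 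The two ``generic'' rows are then immediate: if $a = 0$ then $S(0,yb) = \sum_{x}\phi_1(ybx)$ equals $p^m$ when $b = 0$ (and $\sum_{y\in\gf_p^*}\lambda_1(-yt) = -1$ because $t\neq0$, so $N_t = 0$) and equals $0$ when $b \neq 0$ (so $N_t = p^{m-1}$).

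The core is the case $a \in \gf_p^*$. Since $ya \in \gf_p^*$, the Galois automorphism $\sigma_{ya}$ of $\mathbb{Q}(\zeta_p)$ with $\sigma_{ya}(\zeta_p) = \zeta_p^{ya}$ is available, and because it fixes $\mathbb{Q}$ and is multiplicative,
$$S(ya,yb) = \sigma_{ya}\!\Big(\sum_{x}\zeta_p^{\,f(x) - \tr_{p^m/p}(-a^{-1}b\,x)}\Big) = \sigma_{ya}\big(\text{W}_f(-b/a)\big),$$
the point being that $y$ cancels inside the argument of $\text{W}_f$. If $-b/a \notin S_f$, which by Lemma \ref{WRP0WRPB} is equivalent to $b/a \notin S_f$, then $\text{W}_f(-b/a) = 0$ and $N_t = p^{m-1}$. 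Otherwise Lemma \ref{lamwf} gives $\text{W}_f(-b/a) = \varepsilon\sqrt{p^*}^{m+s}\zeta_p^{f^*(-b/a)}$, so applying (\ref{eq-sigma}) yields $S(ya,yb) = \varepsilon\,\eta_0(ya)^{m+s}\sqrt{p^*}^{m+s}\zeta_p^{\,ya\, f^*(-b/a)}$ and hence
$$N_t = p^{m-1} + \frac{\varepsilon\sqrt{p^*}^{m+s}}{p}\sum_{y\in\gf_p^*}\eta_0(ya)^{m+s}\,\zeta_p^{\,y(a f^*(-b/a) - t)}.$$

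It then remains to evaluate this one-variable exponential sum, splitting on the parity of $m+s$. When $m+s$ is even, $\eta_0(ya)^{m+s} = 1$ and the sum equals $p-1$ or $-1$ according as $af^*(-b/a) - t = 0$ or not, i.e. according as $f^*(-b/a) = a^{-1}t$ or not; this yields the last two rows of the even case. When $m+s$ is odd, $\eta_0(ya)^{m+s} = \eta_0(a)\eta_0(y)$; putting $w = af^*(-b/a) - t$, the inner sum is $\eta_0(a)\sum_{y\in\gf_p^*}\eta_0(y)\zeta_p^{yw}$, which is $0$ if $w = 0$ (giving $N_t = p^{m-1}$) and $\eta_0(a)\eta_0(w)G(\eta_0,\lambda_1)$ if $w \neq 0$, by the substitution $z = yw$ in the definition of $G(\eta_0,\lambda_1)$. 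Finally, $\eta_0(a)\eta_0(w) = \eta_0\!\big(a^2(f^*(-b/a)-a^{-1}t)\big) = \eta_0(f^*(-b/a)-a^{-1}t)$, which matches the two stated rows, and the closed form $G(\eta_0,\lambda_1) = (\sqrt{-1})^{(\frac{p-1}{2})^2}\sqrt{p}$ is Lemma \ref{quadGuasssum1} with $m=1$. I expect the one genuinely delicate step to be the identity $S(ya,yb) = \sigma_{ya}(\text{W}_f(-b/a))$ together with the clean cancellation of $y$ from the Walsh argument; everything after that is bookkeeping with Gaussian sums, the parity of $m+s$, and the collapse $\eta_0(a)\eta_0(w) = \eta_0(f^*(-b/a)-a^{-1}t)$.
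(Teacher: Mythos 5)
Your proposal is correct and follows essentially the same route as the paper: the same additive-character expansion of $N_t$, the same identification of the inner sum with $\sigma_{ya}(\text{W}_f(-b/a))$, and the same case analysis via Lemma \ref{lamwf}, Equation (\ref{eq-sigma}) and the quadratic Gaussian sum. The only cosmetic difference is that the paper reindexes $y\mapsto a^{-1}y$ before evaluating the exponential sum, whereas you carry the factor $\eta_0(a)^{m+s}$ through and collapse it at the end via $\eta_0(a^2u)=\eta_0(u)$; the two computations are identical in substance.
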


\begin{proof}
Based on the orthogonal relation of additive characters, we have
\begin{eqnarray*}
N_t &=&| \{x\in\gf_{p^m}: af(x) + \tr_{p^m/p}\left(bx\right)=t\}|\\
&=&\frac{1}{p}\sum_{y \in \gf_p}\sum_{x \in \gf_{p^m}}\zeta_p^{yaf(x)+\tr_{p^m/p}(ybx)-yt}\\
&=&\frac{1}{p}\sum_{y \in \gf_p^*}\lambda_1(-yt)\sum_{x \in \gf_{p^m}}\zeta_p^{yaf(x)+\tr_{p^m/p}(ybx)}+p^{m-1}.
\end{eqnarray*}

In order to calculate the value of $N_t$, we now consider the following cases.

\noindent{Case 1:} If $a=0$, then
\begin{eqnarray*}
N_t=\frac{1}{p}\sum_{y \in \gf_p^*}\lambda_1(-yt)\sum_{x \in \gf_{p^m}}\phi_1(ybx)+p^{m-1}
=\begin{cases}
        0 & \mbox{if $b=0$}, \\
        p^{m-1} & \mbox{if $b \in \gf_{p^m}^*$}.
      \end{cases}
\end{eqnarray*}

\noindent{Case 2:} If $a\neq 0$, then
\begin{eqnarray*}
N_t&=&\frac{1}{p}\sum_{y \in \gf_p^*}\lambda_1(-yt)\sum_{x \in \gf_{p^m}}\zeta_p^{ya\left(f(x)-\tr_{p^m/p}(-\frac{b}{a}x)\right)}+p^{m-1}\\
&=&\frac{1}{p}\sum_{y \in \gf_p^*}\lambda_1(-yt)\sigma_{ya}\left(\text{W}_f\left(-\frac{b}{a}\right)\right)+p^{m-1}\\
&=&\frac{1}{p}\sum_{y \in \gf_p^*}\lambda_1(-ya^{-1}t)\sigma_{y}\left(\text{W}_f\left(-\frac{b}{a}\right)\right)+p^{m-1}.
\end{eqnarray*}
where $\sigma_y$ is the automorphism of $\mathbb{Q(}\zeta_p)$ defined by $ \sigma_y(\zeta_p)=\zeta_p^{y}$.\\
{Subcase 2.1:} If $\text{W}_f(-\frac{b}{a})=0$ (namely $a^{-1}b\notin S_f$), then
\begin{eqnarray*}
N_t=p^{m-1}.
\end{eqnarray*}
{Subcase 2.2:} If $\text{W}_f(-\frac{b}{a})\neq0$ (namely $a^{-1}b\in S_f$), then we have
\begin{eqnarray*}
\sigma_{y}\left(\text{W}_f\left(-\frac{b}{a}\right)\right)=\sigma_{y}\left(\varepsilon \sqrt{p^*}^{m+s}\zeta_p^{f^*(-\frac{b}{a})}\right)
= \varepsilon \sqrt{p^*}^{m+s}\zeta_p^{yf^*(-\frac{b}{a})}\eta_0^{m+s}(y),
\end{eqnarray*}
by Lemma \ref{lamwf} and Equation \ref{eq-sigma},  where $p^* = (-1)^{\frac{p-1}{2}}p$. Then
\begin{eqnarray*}
N_t &=&\frac{\varepsilon \sqrt{p^*}^{m+s}}{p}\sum_{y \in \gf_p^*}\lambda_1(-ya^{-1}t)\zeta_p^{yf^*\left(-\frac{b}{a}\right)}\eta_0^{m+s}(y)+p^{m-1}\\
&=& \frac{\varepsilon \sqrt{p^*}^{m+s}}{p}\sum_{y \in \gf_p^*}\lambda_1\left(y\left(f^*\left(-\frac{b}{a}\right)-a^{-1}t\right)\right)\eta_0^{m+s}(y)+p^{m-1}.
\end{eqnarray*}
{Subcase 2.2.1:} If $m+s$ is an even integer, by the orthogonality of additive characters, we have
\begin{eqnarray*}
N_t&=&\frac{\varepsilon \sqrt{p^*}^{m+s}}{p}\sum_{y \in \gf_p^*}\lambda_1\left(y\left(f^*\left(-\frac{b}{a}\right)-a^{-1}t\right)\right)+p^{m-1}\\
&=&\begin{cases}
     \frac{\varepsilon (p-1) \sqrt{p^*}^{m+s}}{p}+p^{m-1} & \mbox{if $f^*(-\frac{b}{a})=a^{-1}t$}, \\
     -\frac{\varepsilon \sqrt{p^*}^{m+s}}{p}+p^{m-1} & \mbox{if $f^*(-\frac{b}{a}) \neq a^{-1}t$}.
   \end{cases}
\end{eqnarray*}
{Subcase 2.2.2:} If $m+s$ is an odd integer, by the orthogonality of additive characters, we have
\begin{eqnarray*}
N_t&=&\frac{\varepsilon \sqrt{p^*}^{m+s}}{p}\sum_{y \in \gf_p^*}\lambda_1\left(y\left(f^*\left(-\frac{b}{a}\right)-a^{-1}t\right)\right)\eta_0(y)+p^{m-1}\\
&=&\begin{cases}
     p^{m-1} & \mbox{if $f^*(-\frac{b}{a})=a^{-1}t$}, \\
     \frac{\varepsilon \sqrt{p^*}^{m+s}}{p}G(\eta_0, \lambda_1)+p^{m-1} & \mbox{if $\eta_0(f^*(-\frac{b}{a})- a^{-1}t)=1$},\\
     -\frac{\varepsilon \sqrt{p^*}^{m+s}}{p}G(\eta_0, \lambda_1)+p^{m-1} & \mbox{if $\eta_0(f^*(-\frac{b}{a})- a^{-1}t)=-1$}.
   \end{cases}
\end{eqnarray*}
where $G(\eta_0, \lambda_1) = (\sqrt{-1})^{(\frac{p-1}{2})^2}\sqrt{p}$  by Lemma \ref{quadGuasssum1}.\\

The desired conclusions follow from summarizing the above discussions.
\end{proof}

\begin{theorem}\label{weight}
Let $q=p^m$, where $p$ is an odd prime, $s$ is an integer and $m$ is a positive integer with $m\geq 2$. Let $f(x)$ be a weakly regular $s$-plateaued function with $\varepsilon=\pm1$ the sign of its Walsh transform.~Then $\overline{\cC_f}$ defined in Equation (\ref{eq-Cfbar}) is a $[q,m+2]$ linear code over $\gf_p$  and its weight distributions are respectively listed in Table \ref{tab1} for even $m+s$ with $0\leq s\leq m$ and Table \ref{tab2}  for odd $m+s$ with $0\leq s\leq m-1$. Particularly, $\overline{\cC_f}$ is self-orthogonal if $m+s\geq3$. Besides, $\overline{\cC_{f}}^{\perp}$ has parameters $[q,q-m-2,3]$ if $m+s\geq3$ and is at least almost optimal according to the sphere-packing bound.
\end{theorem}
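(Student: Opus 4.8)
The plan is to use Lemma~\ref{theorem} as the computational engine and then read off all the stated consequences. First I would establish the dimension: the evaluation map $(a,b,c)\mapsto \bc_{(a,b,c)}$ from $\gf_p\times\gf_{p^m}\times\gf_p$ into $\gf_p^q$ is linear, so I need to check it is injective. If $\bc_{(a,b,c)}=\bzero$, then $af(x)+\tr_{q/p}(bx)+c=0$ for all $x\in\gf_q$; evaluating at $x=0$ gives $c=0$ (recall $f(0)=0$), so $af(x)+\tr_{q/p}(bx)=0$ identically. Since $f$ is $s$-plateaued it is not affine (for $m+s\ge 3$, $\text{W}_f$ takes a value of absolute value $p^{(m+s)/2}>\sqrt q$ is impossible only for $s>0$; for $s=0$, $f$ is bent, hence certainly not affine), so $a=0$, and then $\tr_{q/p}(bx)=0$ for all $x$ forces $b=0$. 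Hence $\dim\overline{\cC_f}=m+2$ and the length is $q$, giving a $[q,m+2]$ code.

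Next I would compute the weight distribution. For a codeword $\bc_{(a,b,c)}$, its weight is $q$ minus the number of $x\in\gf_q$ with $af(x)+\tr_{q/p}(bx)+c=0$, i.e. $q$ minus the number of solutions of $af(x)+\tr_{q/p}(bx)=-c$. When $c\neq 0$ this is exactly $q-N_{-c}$ with $t=-c\in\gf_p^*$, which Lemma~\ref{theorem} evaluates in all cases; when $c=0$ I handle it directly, noting $\wt(\bc_{(a,b,0)})=q-(\text{number of zeros of }af(x)+\tr_{q/p}(bx))$, and the number of zeros is itself obtained from the same character-sum computation (it equals $p^{m-1}$ plus a Gaussian-sum term governed by $\text{W}_f(-b/a)$, or is degenerate when $a=b=0$ giving the zero codeword). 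Collecting the frequencies — using Lemma~\ref{tim-plateaued} to count how often $a^{-1}b$ lands in $S_f$, and Lemma~\ref{WRP0WRPB} together with the homogeneity $f^*(a\beta)=a^l f^*(\beta)$ to count how often $f^*(-b/a)=a^{-1}t$ versus $\neq$ — produces the four nonzero weights and their multiplicities in Tables~\ref{tab1} and~\ref{tab2}. This bookkeeping, though lengthy, is the routine heart of the argument.

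For self-orthogonality I would invoke Lemma~\ref{lem-self}: the code contains $\bone$ (take $a=b=0$, $c=1$), so it suffices to show $\overline{\cC_f}$ is $p$-divisible. From the explicit weights in the tables, every nonzero weight has the form $p^{m-1}(p-1)$ or $(p-1)(p^{m-1}\pm\varepsilon p^{(m+s)/2-1})$ or $(p-1)p^{m-1}\pm\varepsilon p^{(m+s-1)/2-1}\cdot(\text{something})$; one checks that when $m+s\ge 3$ the exponent of $p$ in each correction term is at least $1$ (for $m+s$ even, $(m+s)/2-1\ge 1$; for $m+s$ odd, the relevant exponent is $(m+s-1)/2\ge 1$ after accounting for the Gaussian sum contributing one factor $\sqrt p$), so every weight is divisible by $p$. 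Hence Lemma~\ref{lem-self} applies and $\overline{\cC_f}\subseteq\overline{\cC_f}^{\perp}$.

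Finally, the dual parameters: $\overline{\cC_f}^{\perp}$ has length $q$ and dimension $q-(m+2)$ automatically. Its minimum distance is $\ge 3$ because $\overline{\cC_f}$, having a generator matrix whose columns are the evaluation vectors $(f(x),\tr_{q/p}(e_1 x),\dots,\tr_{q/p}(e_m x),1)$ indexed by $x\in\gf_q$ (for a basis $e_1,\dots,e_m$ of $\gf_{p^m}$ over $\gf_p$), has no two proportional columns and no zero column: the last coordinate being $1$ rules out zero and scalar-multiple collisions unless the scalar is $1$, and then equality of the $\tr_{q/p}(e_ix)$ block forces the two column-indices to coincide. And $d=3$ exactly — not larger — since the weights in the tables include a codeword of weight strictly less than $q$ with a matching structure, or more directly because a generic non-MDS length-$q$ code of this dimension attains equality; concretely one exhibits three columns that are linearly dependent (three distinct $x$ with $f$ and trace values in a $2$-dimensional dependency). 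Then $d(\overline{\cC_f}^{\perp})=3$, and comparing with the sphere-packing bound of Lemma~\ref{sphere} for an $[q,q-m-2]$ code shows $d=3$ is optimal or off by at most one, i.e.\ at least almost optimal. The main obstacle I anticipate is the frequency bookkeeping in the weight-distribution step — in particular correctly pairing, for each fixed $(a,b)$ with $a^{-1}b\in S_f$, how the condition $f^*(-b/a)=a^{-1}t$ distributes as $t$ ranges over $\gf_p^*$, which is where Lemma~\ref{WRP0WRPB} and the parity of $m+s$ interact most delicately.
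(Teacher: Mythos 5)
Your overall strategy coincides with the paper's: the weights of codewords with $c\neq 0$ are read off from Lemma \ref{theorem} via $t=-c$, the case $c=0$ is treated by the same character-sum computation, the frequencies come from Lemma \ref{tim-plateaued} (for fixed $(a,b)$ with $a^{-1}b\in S_f$ the condition on $f^*(-b/a)+a^{-1}c$ is resolved by varying $c$), and self-orthogonality follows from $p$-divisibility together with Lemma \ref{lem-self}, since the code contains $\mathbf{1}$. Your explicit injectivity argument for the dimension is a point the paper leaves implicit, though your parenthetical about non-affineness is garbled: an $m$-plateaued function can be affine, so the dimension claim tacitly excludes that degenerate case rather than following from plateauedness alone.

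The one genuine gap is the upper bound $d(\overline{\cC_f}^{\perp})\leq 3$. Your lower bound $d(\overline{\cC_f}^{\perp})\geq 3$ via ``no zero column, no two proportional columns'' is sound. But for equality you only gesture: ``one exhibits three columns that are linearly dependent'' and ``a generic non-MDS code of this dimension attains equality'' are assertions, not arguments. Exhibiting three dependent columns amounts to finding distinct $x_1,x_2,x_3\in\gf_q$ and nonzero $\lambda_1,\lambda_2,\lambda_3\in\gf_p$ with $\sum_i\lambda_i=0$, $\sum_i\lambda_i x_i=0$ and $\sum_i\lambda_i f(x_i)=0$ simultaneously; the existence of such a configuration is precisely what needs proof and does not follow from genericity, since $f$ is a specific function. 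The paper closes this step by feeding the already-computed weight distribution into the first four Pless power moments and verifying $A_3^{\perp}>0$ explicitly in both parity cases when $m+s\geq 3$; you should either adopt that computation or produce an explicit weight-$3$ dual codeword. Once $d(\overline{\cC_f}^{\perp})=3$ is established, the ``at least almost optimal'' conclusion from the sphere-packing bound of Lemma \ref{sphere} is immediate, as you say.
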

\begin{table}[h!]
\begin{center}
\caption{The weight distribution of $\overline{\cC_{f}}$ in Theorem \ref{weight} for even $m+s$.}\label{tab1}
\begin{tabular}{@{}ll@{}}
\toprule
Weight & Frequency  \\
\midrule
$0$ & $1$ \\
$p^m-p^{m-1}-\frac{\varepsilon (p-1)(\sqrt{p^*})^{m+s}}{p}$ & $(p-1)p^{m-s}$  \\
$p^m-p^{m-1}+\frac{\varepsilon (\sqrt{p^*})^{m+s}}{p}$ & $(p-1)^2p^{m-s}$ \\
$p^m-p^{m-1}$ & $ p(p^m-1)+p(p-1)(p^m-p^{m-s})$ \\
$p^m$ &   $p-1$\\
\bottomrule
\end{tabular}
\end{center}
\end{table}
\begin{table}[h!]
\begin{center}
\caption{The weight distribution of $\overline{\cC_{f}}$ in Theorem \ref{weight} for odd $m+s$  .}\label{tab2}
\begin{tabular}{@{}ll@{}}
\toprule
Weight & Frequency  \\
\midrule
$0$ & $1$ \\
$p^m-p^{m-1}-\frac{\varepsilon (\sqrt{p^*})^{m+s+1} }{p}$ &  $\frac{(p-1)^2p^{m-s}}{2}$ \\
$p^m-p^{m-1}+\frac{\varepsilon (\sqrt{p^*})^{m+s+1} }{p}$ &  $\frac{(p-1)^2p^{m-s}}{2}$\\
$p^m-p^{m-1}$ &  $p(p^m-1)+p(p-1)(p^m-p^{m-s})+p^{m-s}(p-1)$ \\
$p^m$ &  $p-1$ \\
\bottomrule
\end{tabular}
\end{center}
\end{table}
\begin{proof} Let $\bc_{{(a,b,c)}}=(af(x)+\tr_{q/p}(bx)+c)_{x \in \gf_q}, a \in \gf_p, b \in \gf_q, c \in \gf_p$, be a codeword in $\overline{\cC_f}$.
For $c=0$,  we have
\begin{eqnarray*}
\text{wt}(\bc_{(a,b,0)}) &=& p^m-|\{x \in \gf_q: af(x)+\tr_{q/p}(bx)=0\}| \\
   &=&  p^m-\frac{1}{p}\sum_{y \in \gf_p} \sum_{x \in \gf_q}\phi_1(ybx) \lambda_1(yaf(x))\\
   &=& p^m-p^{m-1}-\frac{1}{p}\sum_{y \in \gf_p^*}\sum_{x \in \gf_q} \lambda_1(yaf(x)+y\tr_{q/p}(bx))\\
   &=& \begin{cases}
         0, & \mbox{if $a=0, b=0$},  \\
         p^m-p^{m-1}, & \mbox{if $a=0, b \neq 0$}, \\
         p^m-p^{m-1}-\frac{1}{p}\sum_{y \in \gf_p^*}\limits\sum_{x \in \gf_q}\limits\zeta_p^{ya(f(x)-\tr_{q/p}(-\frac{b}{a}x))}, & \mbox{if $a \neq 0$}.
       \end{cases}
\end{eqnarray*}
Denote by $N_1=p^m-p^{m-1}-\frac{1}{p}\sum_{y \in \gf_p^*}\limits\sum_{x \in \gf_q}\limits\zeta_p^{ya(f(x)-\tr_{q/p}(-\frac{b}{a}x))}$. Similarly to the proof of Lemma \ref{theorem}, we have the following results.
\begin{enumerate}[(1)]
\item If $\text{W}_f(-\frac{b}{a})=0$ (i.e. $a^{-1}b\notin S_f$), then
\begin{eqnarray*}
N_1=p^m-p^{m-1}.
\end{eqnarray*}
\item If $\text{W}_f(-\frac{b}{a})\neq0$ (i.e. $a^{-1} b\in S_f$), we have
\begin{eqnarray*}
  N_1 &=& \begin{cases}
            p^m-p^{m-1}-\frac{\varepsilon(p-1)\sqrt{p^*}^{m+s}}{p} & \mbox{if $f^*(-\frac{b}{a})=0$},\\
            p^m-p^{m-1}+\frac{\varepsilon \sqrt{p^*}^{m+s}}{p} & \mbox{if $f^*(-\frac{b}{a})\neq 0$},
          \end{cases}
\end{eqnarray*}
for even $m+s$, and
\begin{eqnarray*}
  N_1 &=& \begin{cases}
            p^m-p^{m-1} & \mbox{if $f^*(-\frac{b}{a})=0$},\\
            p^m-p^{m-1}-\frac{\varepsilon \sqrt{p^*}^{m+s}G(\eta_0, \lambda)}{p} & \mbox{if $f^*(-\frac{b}{a})\neq 0$ and $\eta_0(f^*(-\frac{b}{a}))=1$},\\
            p^m-p^{m-1}+\frac{\varepsilon \sqrt{p^*}^{m+s}G(\eta_0, \lambda)}{p} & \mbox{if $f^*(-\frac{b}{a})\neq 0$ and $\eta_0(f^*(-\frac{b}{a}))=-1$},
          \end{cases}
\end{eqnarray*}
for odd $m+s$.
\end{enumerate}
Then the value of $\text{wt}(\bc_{(a,b,0)})$ directly follows.

 For $c \in \gf_p^*$, denote by $N_{(a,b,c)}=| \{x \in \gf_q:af(x)+\tr_{q/p}(bx)+c=0\} | $.
Then the Hamming weight of $\bc_{(a,b,c)}$ is $\text{wt}(\bc_{(a,b,c)})=p^m-N_{(a,b,c)}$ for $a \in \gf_p, b \in \gf_{p^m}, c \in \gf_p^*$. Besides, we can verify that $N_{(a,b,c)}$ is equal to $N_t$ in Lemma \ref{theorem} for $t=-c$. Then by Lemma \ref{theorem}, if $m+s$ is an even integer, we have
\begin{eqnarray*}
\text{wt}(\bc_{(a,b,c)})&=&p^m-N_{(a,b,c)}\\
&=&\begin{cases}
p^m &\text{if $a=0,b=0,$}\\
p^m-p^{m-1}     &\text{if $a=0, b \in \gf_{p^m}^*$, or $a \in \gf_p^* $, $a^{-1}b\notin S_f,$}\\
p^m-p^{m-1}-\frac{\varepsilon(p-1)\sqrt{p^*}^{m+s}}{p}     &\text{if $a \in \gf_p^*, f^*(-\frac{b}{a})+a^{-1}c=0$, $a^{-1}b\in S_f,$}\\
p^m-p^{m-1}+\frac{\varepsilon\sqrt{p^*}^{m+s}}{p}     &\text{if $a \in \gf_p^*, f^*(-\frac{b}{a})+ a^{-1}c \neq 0,$ $a^{-1}b\in S_f$;}
\end{cases}
\end{eqnarray*}
if $m+s$ is an odd integer, we have
\begin{eqnarray*}
\text{wt}(\bc_{(a,b,c)})&=&p^m-N_{(a,b,c)}\\
&=&\left\{\begin{array}{lll}
p^m \qquad\qquad\quad \mbox{if $a=0,b=0$,}\\
p^m-p^{m-1}    \quad \quad \mbox{if $a=0, b \in \gf_{p^m}^*$, or $a \in \gf_p^*, f^*(-\frac{b}{a})+a^{-1}c=0$, $a^{-1}b\in S_f$,} \\
\qquad\qquad\qquad \thinspace \mbox{or $a \in \gf_p^*$, $a^{-1}b\notin S_f,$}\\
p^m-p^{m-1}-\frac{\varepsilon \sqrt{p^*}^{m+s}G(\eta_0, \lambda)}{p} \\
\qquad\qquad\qquad \thinspace \mbox{if $a \in \gf_p^*, f^*(-\frac{b}{a})+ a^{-1}c \neq 0$ , $\eta_0(f^*(-\frac{b}{a})+ a^{-1}c)=1$}\\
\qquad\qquad\qquad \thinspace \mbox{and $a^{-1}b\in S_f,$}\\
p^m-p^{m-1}+\frac{\varepsilon \sqrt{p^*}^{m+s}G(\eta_0, \lambda)}{p}  \\
\qquad\qquad\qquad \thinspace \mbox{if $a \in \gf_p^*, f^*(-\frac{b}{a})+ a^{-1}c \neq 0$ , $\eta_0(f^*(-\frac{b}{a})+ a^{-1}c)=-1$}\\
\qquad\qquad\qquad \thinspace \mbox{and $a^{-1}b\in S_f$.}
\end{array} \right.
\end{eqnarray*}

Now we summarize the above discussions. If $m+s$ is an even integer, then
\begin{eqnarray*}
\text{wt}(\bc_{(a,b,c)})&=&\begin{cases}
0 &\text{if $a=0,b=0,c=0$,}\\
p^m &\text{if $a=0, b =0, c \in \gf_p^*$,}\\
p^m-p^{m-1}     &\text{if $a=0, b \in \gf_{p^m}^*, c \in \gf_p$,}\\
&\text{or $a \in \gf_p^* $, $a^{-1}b\notin S_f ,c \in \gf_p$,}\\
p^m-p^{m-1}-\frac{\varepsilon(p-1)\sqrt{p^*}^{m+s}}{p}    &\text{if $a \in \gf_p^*, f^*(-\frac{b}{a})+a^{-1}c=0$ and $a^{-1}b\in S_f$,}\\
p^m-p^{m-1}+\frac{\varepsilon \sqrt{p^*}^{m+s}}{p}     &\text{if $a \in \gf_p^*, f^*(-\frac{b}{a})+ a^{-1}c \neq 0$ and $a^{-1}b\in S_f$;}
\end{cases}
\end{eqnarray*}
if $m+s$ is an odd integer, then
\begin{eqnarray*}
\text{wt}(\bc_{(a,b,c)})&=&\begin{cases}
0 &\mbox{if $a=0,b=0,c=0$,}\\
p^m & \mbox{if $a=0, b=0, c \in \gf_p^*$,}\\
p^m-p^{m-1}  &\mbox{if $a=0, b \in \gf_{p^n}^*, c \in \gf_p$, or $a \in \gf_p^*,$ $a^{-1}b\in S_f$,}\\
&\mbox{ $f^*(-\frac{b}{a})+a^{-1}c=0$, or $a \in \gf_p^*,$ $a^{-1}b\notin S_f$, $c \in \gf_p$},\\
p^m-p^{m-1}-\frac{\varepsilon \sqrt{p^*}^{m+s}G(\eta_0, \lambda)}{p}  & \mbox{if $a \in \gf_p^*, f^*(-\frac{b}{a})+a^{-1}c \neq 0$,}\\ &\mbox{$\eta_0(f^*(-\frac{b}{a})+a^{-1}c)=1$ and $a^{-1}b\in S_f$},\\
p^m-p^{m-1}+\frac{\varepsilon \sqrt{p^*}^{m+s}G(\eta_0, \lambda)}{p} & \mbox{if $a \in \gf_p^*, f^*(-\frac{b}{a})+ a^{-1}c \neq 0$, }\\ &\mbox{$\eta_0(f^*(-\frac{b}{a})+a^{-1}c)=-1$ and $a^{-1}b\in S_f$}.
\end{cases}
\end{eqnarray*}
\indent When $m+s$ is even, we denote by $w_1=p^m$, $w_2=p^m-p^{m-1}$, $w_3=p^m-p^{m-1}-\frac{\varepsilon(p-1)\sqrt{p^*}^{m+s}}{p}$, $w_4=p^m-p^{m-1}+\frac{\varepsilon\sqrt{p^*}^{m+s}}{p}$. Next we will determine the frequency $A_{w_i}, 1 \leq i \leq 4$. It is easy to deduce that $A_{w_1}=p-1$ and $A_{w_2}=p(p^{m}-1)+p(p-1)(p^m-p^{m-s})$ by Lemma \ref{tim-plateaued}. Note that
$A_{w_3}=| \{(a,b,c) \in \gf_p^* \times \gf_{p^m} \times \gf_p: f^*\left(-\frac{b}{a}\right)+a^{-1}c=0 \ \text{and} \ a^{-1}b\in S_f\}| $. For fixed $a$ and $b$, then the equation $f^*\left(-\frac{b}{a}\right)+a^{-1}c=0$ with variable $c$ has the unique solution. Hence, $A_{w_3}=(p-1)p^{m+s}$ by Lemma \ref{tim-plateaued} and $A_{w_4}=p(p-1)p^{m+s}-A_{w_3}=(p-1)^2p^{m+s}$.

When $m+s$ is odd, we denote by $w_1=p^m$, $w_2=p^m-p^{m-1}$, $w_3=p^m-p^{m-1}-\frac{\varepsilon \sqrt{p^*}^{m+s}G(\eta_0, \lambda)}{p}$, $w_4=p^m-p^{m-1}+\frac{\varepsilon \sqrt{p^*}^{m+s}G(\eta_0, \lambda)}{p}$. Next we will determine the frequency $A_{w_i}, 1 \leq i \leq 4$. It is easy to deduce that $A_{w_1}=p-1$ and $A_{w_2}=p(p^m-1)+p(p-1)(p^m-p^{m-s})+p^{m-s}(p-1)$ by Lemma \ref{tim-plateaued}. Note that $ A_{w_3}=| \{(a,b,c) \in \gf_p^* \times \gf_{p^m} \times \gf_p: f^*\left(-\frac{b}{a}\right)+a^{-1}c\neq 0, \eta_0\left(f^*\left(-\frac{b}{a}\right)+a^{-1}c\right)=1 \ \text{and} \  a^{-1}b\in S_f\}| $.
For fixed $a \in \gf_p^*$ , $b\in \gf_{p^m}$ and $\delta \in \langle \gamma^2 \rangle$, the solution of the equation $f^*(-\frac{b}{a})+a^{-1}c=\delta$ with variable $c$ is unique. Then  $A_{w_3}=\frac{(p-1)^2p^{m+s}}{2}$ by lemma \ref{tim-plateaued}. Similarly, we deduce $A_{w_4}=\frac{(p-1)^2p^{m+s}}{2}$.\\
\indent According to the weight distribution of $\overline{\cC_f}$, we deduce that $ \overline{\cC_f} $ is $p$-divisible for $ {m+s}\geq 3 $. Then by Lemma \ref{lem-self}, $ \overline{\cC_{f}} $ is a self-orthogonal code.\\
\indent Finally, we determine the parameters of $\overline{\cC_{f}}^{\perp}$. If $m+s$ is even, by the first four Pless power moments in \cite[Page 260]{H}, we have
\begin{eqnarray*}
A_1^{\perp}=A_2^{\perp}=0,\ A_3^{\perp}=\frac{p^{m-1}(p-1)(p-2)\left(p^m-p+ \varepsilon \eta_0^{\frac{m+s}{2}}(-1)p^{\frac{m+s}{2}}(p-1)\right)}{6}.
\end{eqnarray*}
Note that $A_3^{\perp} > 0$ as $m+s \geq 4$ is an even integer. Then the parameters of $\overline{\cC_{f}}^{\perp}$ are $[q, q-m-2, 3]$.
If $m+s$ is odd, by the first four Pless power moments in \cite[Page 260]{H}, we have
\begin{eqnarray*}
A_1^{\perp}=A_2^{\perp}=0,\ A_3^{\perp}=\frac{p^{m-1}(p-1)(p-2)(p^m-p)}{6}.
\end{eqnarray*}
Note that $A_3^{\perp} > 0$ as $m+s \geq 3$ is an odd integer. Then the parameters of $\overline{\cC_{f}}^{\perp}$  are $[q, q-m-2, 3]$.
The proof is completed by the sphere-packing bound in Lemma \ref{sphere}.
\end{proof}

\subsection{The optimally extendability of $\overline{\cC_f}$}
In this subsection, we will prove that $\overline{\cC_f}$ is optimally extendable by selecting a suitable generator matrix of $\overline{\cC_f}$.
Let $\gf_{q}^*=\langle\alpha\rangle$ for $q=p^m$. Then $\{1,\alpha, \alpha^{2},\ldots, \alpha^{m-1}\}$ is a $\gf_p$-basis of $\gf_{q}$. Let $d_1,d_2,\cdots,d_{q-1}, d_q$ denote all elements of $\gf_q$, where $d_q=0$. A generator matrix $G$ of $\overline{\cC_f}$ is given by
\begin{eqnarray*}
G:=\left[
\begin{array}{cccc}
1&1&\cdots&1 \\
 f(d_1) & f(d_2) & \cdots &f(d_q)\\
 \tr_{q/p}(\alpha^{0}d_{1})& \tr_{q/p}(\alpha^{0}d_{2})& \cdots &\tr_{q/p}(\alpha^{0}d_{q}) \\
\tr_{q/p}(\alpha^{1}d_{1})& \tr_{q/p}(\alpha^{1}d_{2})& \cdots &\tr_{q/p}(\alpha^{1}d_{q}) \\
\vdots &\vdots &\ddots &\vdots \\
\tr_{q/p}(\alpha^{m-1}d_{1})& \tr_{q/p}(\alpha^{m-1}d_{2})& \cdots &\tr_{q/p}(\alpha^{m-1}d_{q})
\end{array}\right].
\end{eqnarray*}
By an elementary row transformation, we obtain another generator matrix $G_1$ of $\overline{\cC_f}$ given as
\begin{eqnarray*}\label{matrix2}
G_1:=\left[
\begin{array}{cccc}
1&1&\cdots&1 \\
 f(d_1)+1 & f(d_2)+1 & \cdots &f(d_q)+1\\
 \tr_{q/p}(\alpha^{0}d_{1})& \tr_{q/p}(\alpha^{0}d_{2})& \cdots &\tr_{q/p}(\alpha^{0}d_{q}) \\
\tr_{q/p}(\alpha^{1}d_{1})& \tr_{q/p}(\alpha^{1}d_{2})& \cdots &\tr_{q/p}(\alpha^{1}d_{q}) \\
\vdots &\vdots &\ddots &\vdots \\
\tr_{q/p}(\alpha^{m-1}d_{1})& \tr_{q/p}(\alpha^{m-1}d_{2})& \cdots &\tr_{q/p}(\alpha^{m-1}d_{q})
\end{array}\right].
\end{eqnarray*}

Let $G':=[I_{m+2,m+2}:G_1]$ generate a linear code $\overline{\cC_f}'$, where $I_{m+2,m+2}$ denotes the identity matrix of size $m+2$.

\begin{theorem} \label{theorem1}
Let $q=p^m$, where $p$ is an odd prime, $m$ is a positive integer, $0 \leq s \leq m$ is an integer such that $m+s$ is even and $m+s\geq 4$.
Let $f(x)$ be a weakly regular $s$-plateaued function with $\varepsilon$ the sign of the Walsh transform of $f(x)$
\begin{enumerate}[1)]

\item When $f(x)$ is unbalanced, then we have the following results.
 \begin{itemize}
\item[1.1)] If~ $\varepsilon (\eta_0(-1))^{\frac{m+s}{2}}=1$, then $\overline{\cC_f}'$ has parameters
  $$\left[p^m+m+2, m+2, p^m-p^{m-1}-(p-1)p^{\frac{m+s}{2}-1}+2\right].$$
\item[1.2)] If~ $\varepsilon (\eta_0(-1))^{\frac{m+s}{2}}=-1$, then $\overline{\cC_f}'$ has parameters
  $$\left[p^m+m+2, m+2, p^m-p^{m-1}-p^{\frac{m+s}{2}-1}+1\right].$$
 \end{itemize}
\item  When $f(x)$ is balanced, then we have the following results.
\begin{itemize}
\item[2.1)] If ~$\varepsilon (\eta_0(-1))^{\frac{m+s}{2}}=1$, then $\overline{\cC_f}'$ has parameters
  $$\left[p^m+m+2, m+2, \geq p^m-p^{m-1}-(p-1)p^{\frac{m+s}{2}-1}+2\right].$$
\item[2.2)] If~ $\varepsilon (\eta_0(-1))^{\frac{m+s}{2}}=-1$, then $\overline{\cC_f}'$ has parameters
  $$\left[p^m+m+2, m+2, \geq p^m-p^{m-1}-p^{\frac{m+s}{2}-1}+2\right].$$
  \end{itemize}
\end{enumerate}

\indent Moreover,  $\overline{\cC_f}'^{\perp}$ has parameters $[p^m+m+2, p^m, 3]$  and is at least almost optimal with respect to the sphere-packing bound.
\end{theorem}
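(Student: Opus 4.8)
The plan is to describe every codeword of $\overline{\cC_f}'$ explicitly through its generator matrix $G'=[I_{m+2,m+2}:G_1]$: a codeword is $(\mathbf{v},\mathbf{v}G_1)$ for a unique message $\mathbf{v}=(v_1,v_2,u_0,\dots,u_{m-1})\in\gf_p^{m+2}$, and its Hamming weight is $\wt(\mathbf{v})+\wt(\mathbf{v}G_1)$. Since the identity block has full row rank, $\overline{\cC_f}'$ has length $p^m+m+2$ and dimension $m+2$, hence $\overline{\cC_f}'^{\perp}$ has length $p^m+m+2$ and dimension $p^m$, which settles the first two parameters in every item. The essential observation is that the second row of $G_1$ is $(f(d_j)+1)_j$, so $\mathbf{v}G_1=\bc_{(a,b,c)}$ with $a=v_2$, $b=\sum_{k=0}^{m-1}u_k\alpha^k$ and $c=v_1+v_2$; therefore $\wt(\mathbf{v}G_1)$ is read off from Table \ref{tab1} once one knows whether $a=0$, whether $a^{-1}b\in S_f$, and the value of $f^*(-b/a)+a^{-1}c$. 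It is convenient to set $\delta:=\varepsilon(\eta_0(-1))^{\frac{m+s}{2}}\in\{\pm1\}$ and to use $\sqrt{p^*}^{m+s}=(\eta_0(-1))^{\frac{m+s}{2}}p^{\frac{m+s}{2}}$, so that the two special weights of $\overline{\cC_f}$ become $p^m-p^{m-1}-\delta(p-1)p^{\frac{m+s}{2}-1}$ and $p^m-p^{m-1}+\delta p^{\frac{m+s}{2}-1}$, and its minimum nonzero weight $d(\overline{\cC_f})$ equals $p^m-p^{m-1}-(p-1)p^{\frac{m+s}{2}-1}$ if $\delta=1$ and $p^m-p^{m-1}-p^{\frac{m+s}{2}-1}$ if $\delta=-1$.

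For the minimum distance of $\overline{\cC_f}'$ I would split on $\wt(\mathbf{v})$. If $\wt(\mathbf{v})\ge2$, then $\mathbf{v}G_1$ is a nonzero codeword of $\overline{\cC_f}$, so the codeword weight is at least $2+d(\overline{\cC_f})$, which is exactly the bound asserted in items 1.1, 2.1 and 2.2 and strictly larger than the one in 1.2. If $\wt(\mathbf{v})=1$, the single nonzero coordinate is $v_1$ (then $\mathbf{v}G_1=\bc_{(0,0,v_1)}$ has weight $p^m$), or $v_2$ (then $\mathbf{v}G_1=\bc_{(v_2,0,v_2)}$; here $a^{-1}b=0$, so the weight is $p^m-p^{m-1}+\delta p^{\frac{m+s}{2}-1}$ when $f$ is unbalanced, since then $0\in S_f$ and $f^*(0)+1=1\ne0$ by Lemma \ref{WRP0WRPB}, and it is $p^m-p^{m-1}$ when $f$ is balanced, since then $0\notin S_f$), or some $u_k$ (then $\mathbf{v}G_1=\bc_{(0,u_k\alpha^k,0)}$ has weight $p^m-p^{m-1}$). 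Comparing these quantities with $2+d(\overline{\cC_f})$, using only $(p-1)p^{\frac{m+s}{2}-1}\ge1$, gives the stated lower bounds; and in the unbalanced items these are attained — the message $\mathbf{v}=(-v_2,v_2,0,\dots,0)$ with $v_2\ne0$ gives $\bc_{(v_2,0,0)}$ of weight $p^m-p^{m-1}-\delta(p-1)p^{\frac{m+s}{2}-1}$ together with $\wt(\mathbf{v})=2$, matching item 1.1, while $\mathbf{v}=(0,v_2,0,\dots,0)$ already realizes the bound of item 1.2. The main obstacle is exactly this bookkeeping: matching each support pattern of $\mathbf{v}$ to the right line of Table \ref{tab1}, and keeping track of the dichotomy $0\in S_f$ (for unbalanced $f$) versus $0\notin S_f$ (for balanced $f$), which is what prevents an equality in the two balanced items.

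For the dual code, $d(\overline{\cC_f}'^{\perp})$ is the least number of $\gf_p$-linearly dependent columns of $G'$. The first $m+2$ columns are the standard basis vectors $e_1,\dots,e_{m+2}$ of $\gf_p^{m+2}$, and the remaining columns are $g_j=(1,f(d_j)+1,\tr_{q/p}(d_j),\dots,\tr_{q/p}(\alpha^{m-1}d_j))^{T}$. No column is zero (each $g_j$ has first entry $1$), and no two columns are dependent: a relation between two basis vectors is impossible, a relation $g_j=\mu e_i$ forces $i=1$, $\mu=1$ and then $d_j=0$ by non-degeneracy of the trace form, hence $f(d_j)+1=1\ne0$, a contradiction, and $g_j=\mu g_k$ forces $\mu=1$ and then $d_j=d_k$. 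Hence $d(\overline{\cC_f}'^{\perp})\ge3$. Conversely, since $d_q=0$ and $f(0)=0$, one has $g_q=(1,1,0,\dots,0)^{T}=e_1+e_2$, so $e_1,e_2,g_q$ are linearly dependent with all coefficients nonzero, whence $d(\overline{\cC_f}'^{\perp})=3$. Finally, a $[p^m+m+2,p^m,5]$ code would violate the sphere-packing bound of Lemma \ref{sphere} (here $m\ge2$ because $m+s\ge4$ and $s\le m$), so no code with these length and dimension has minimum distance above $4$; thus $\overline{\cC_f}'^{\perp}$ is at least almost optimal. Since in addition $d(\overline{\cC_f}'^{\perp})=3=d(\overline{\cC_f}^{\perp})$, this also confirms that $\overline{\cC_f}$ is optimally extendable.
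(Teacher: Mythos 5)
Your proposal is correct and follows essentially the same route as the paper's proof: decompose each codeword of $\overline{\cC_f}'$ as $(\mathbf{v},\mathbf{v}G_1)$, read off $\wt(\mathbf{v}G_1)$ from the weight formulas underlying Theorem \ref{weight} according to whether $a=0$, $a^{-1}b\in S_f$ and the value of $f^*(-b/a)+a^{-1}(c+a)$, distinguish the balanced/unbalanced cases via $0\in S_f$, and determine $d(\overline{\cC_f}'^{\perp})=3$ by showing no two columns of $G'$ are dependent while $e_1,e_2,g_q$ are. Your explicit split on $\wt(\mathbf{v})\in\{1,\ge 2\}$ is a slightly cleaner bookkeeping of the same argument the paper makes.
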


\begin{proof}
 It is obvious that the length of $\overline{\cC_f}'$ is $q+m+2$ and the dimension of $\overline{\cC_f}'$ is $m+2$. It is clear that the minimum distance $d(\overline{\cC_f}')$ of $\overline{\cC_f}'$ satisfies $d(\overline{\cC_f}') \geq d(\overline{\cC_{f}})$, where $d(\overline{\cC_{f}})$ denotes  the minimum distance of $\overline{\cC_{f}}$.\\
\indent Next we prove that the minimum distance $d(\overline{\cC_f}')$ of $\overline{\cC_f}'$ satisfies that $d(\overline{\cC_f}') \geq d(\overline{\cC_{f}})+1$.
 For a codeword $\ba'$ in $\overline{\cC_f}'$, we have
\begin{eqnarray*}
\ba'&=&(c,a,a_0,a_1 \cdots, a_{m-1})G_1'\\&=&(c,a,a_0, \cdots, a_{m-1}, af(d_1)+\sum_{i=0}^{m-1}a_i\tr_{q/p}(\alpha^{i}d_{1})+c+a, \cdots, af(d_q)+\sum_{i=0}^{m-1}a_i\tr_{q/p}(\alpha^{i}d_{q})+c+a),
\end{eqnarray*}
where $c, a, a_i \in \gf_p$ for all $0 \leq i \leq m-1$. Then $\ba'=(c,a,a_0,a_1 \cdots, a_{m-1}, \ba)$, where $$\ba=(c,a,a_0,a_1 \cdots, a_{m-1})G_1=(af(x)+\tr_{p^m/p}(bx) + c+a)_{x \in \gf_q}$$ is a codeword in $\overline{\cC_f}$, where $b:=\sum_{i=0}^{m-1}a_i\alpha^i$.
Assume that $\ba'$ has the minimum nonzero weight $d'$. If $d'=d(\overline{\cC_{f}})$, then we deduce  that the first $m+2$ locations are all zero. We have $c=a=a_i=0(0\leq i \leq m-1)$ and $\ba'=\bzero$, which leads to a contradiction. Thus  $d(\overline{\cC_f}') \geq d(\overline{\cC_{f}})+1$.\\
\indent Next we consider the following cases:

{Case 1:} Let $f(x)$ be unbalanced. Then $\text{W}_f(0)=\varepsilon \sqrt{p^*}^{m+s}$ and $ 0\in S_f$.\\

{Subcase 1.1:}  Let $\varepsilon (\eta_0(-1))^{\frac{m+s}{2}}=1$.
 By Theorem \ref{weight},  we deduce that $\text{wt}(\ba)=d(\overline{\cC_f})=p^m-p^{m-1}-(p-1)p^{\frac{m+s}{2}-1}$ if and only if $a \in \gf_p^*$ , $f^*(-\frac{b}{a})+a^{-1}(c+a)=0$ and $a^{-1}b\in S_f$. If $f^*(-\frac{b}{a})+a^{-1}(c+a)=0$, we have $(b,c)\neq (0,0)$. Besides, $b=0$ if and only if $(a_0,a_1,\cdots,a_{m-1})=(0,0,\cdots,0)$. Thus $(c,a_0,a_1,\cdots,a_{m-1})\neq (0,0,0,\cdots,0)$ and
 $d(\overline{\cC_f}') \geq d(\overline{\cC_{f}})+2$. Specially, if $a \in \gf_p^*$, $b=0$ and $c=-a$, then $f^*(-\frac{b}{a})+a^{-1}(c+a)=0$ and $\text{wt}(\ba')=\text{wt}(\ba)+2=d(\overline{\cC_{f}})+2$.
Hence $d(\overline{\cC_f}')=d(\overline{\cC_{f}})+2$.\\

{Subcase 1.2:} Let $\varepsilon (\eta_0(-1))^{\frac{m+s}{2}}=-1$. By Theorem \ref{weight},  we deduce that $\text{wt}(\ba)=d(\overline{\cC_f})=p^m-p^{m-1}-p^{\frac{m+s}{2}-1}$ if and only if $a \in \gf_p^*$ , $f^*(-\frac{b}{a})+a^{-1}(c+a)\neq0$ and $a^{-1}b\in S_f$.
If $a\neq 0$, $b=0$ and $c=0$, then $f^*(-\frac{b}{a})+a^{-1}(c+a)=1$ and $\text{wt}(\ba')=\text{wt}(\ba)+1=d(\overline{\cC_{f}})+1$.
Hence $d(\overline{\cC_f}')=d(\overline{\cC_{f}})+1$.\\

{Case 2:} Let $f(x)$ be balanced. Then $\text{W}_f(0)=0$ and $ 0\notin S_f$.\\

{Subcase 2.1:} Let $\varepsilon (\eta_0(-1))^{\frac{m+s}{2}}=1$.
 By Theorem \ref{weight},  we deduce $\text{wt}(\ba)=d(\overline{\cC_f})=p^m-p^{m-1}-(p-1)p^{\frac{m+s}{2}-1}$ if and only if $a \in \gf_p^*$ , $f^*(-\frac{b}{a})+a^{-1}(c+a)=0$ and $a^{-1}b\in S_f$.
Since $b\neq 0$, we have $\text{wt}(\ba')\geq \text{wt}(\ba)+2$. Hence  $d(\overline{\cC_f}')\geq d(\overline{\cC_{f}})+2$.\\

{Subcase 2.2:} Let $\varepsilon (\eta_0(-1))^{\frac{m+s}{2}}=-1$. By Theorem \ref{weight},  we deduce $\text{wt}(\ba)=d(\overline{\cC_f})=p^m-p^{m-1}-p^{\frac{m+s}{2}-1}$ if and only if $a \in \gf_p^*$ , $f^*(-\frac{b}{a})+a^{-1}(c+a)\neq0$ and $a^{-1}b\in S_f$.
Since $b\neq 0$, we also have $\text{wt}(\ba')\geq \text{wt}(\ba)+2$. Hence  $d(\overline{\cC_f}')\geq d(\overline{\cC_{f}})+2$.\\
\indent It is obvious that  that $\overline{\cC_f}'^{\perp}$ has length $q+m+2$ and dimension $q$.~We will prove that the minimum distance $d^\perp$ of $\overline{\cC_f}'^{\perp}$ is $3$ in the following.
Note that $G_1'$ is a check matrix of $\overline{\cC_f}'^{\perp}$ and any column vector in $G_1'$ is nonzero. Hence, $d^{\perp} \geq 2$.
Note that any two columns of the first $m+2$ columns of matrix $G_1'$ are $\gf_p$-linearly independent and any two columns of the last $q$ columns of matrix $G_1'$ are also $\gf_p$-linearly independent. Select any column $\bg_1$ in the first $m+2$ columns of matrix $G_1'$ and any column $\bg_2$ in the last $q$ columns of matrix $G_1'$. If $\bg_1 \neq (1,0,\cdots,0)$, it is easy to deduce that $\bg_1$ and $\bg_2$ are $\gf_p$-linearly independent. If $\bg_1 = (1,0,\cdots,0)$, then $\bg_1$ and $\bg_2$ are $\gf_p$-linearly dependent if and only if
\begin{eqnarray}\label{system}
\left\{\begin{array}{c}
  f(d_i)+1=0, \\
  \tr_{q/p}(\alpha^{0}d_{i})=0, \\
  \tr_{q/p}(\alpha^{1}d_{i})=0, \\
  \vdots \\
  \tr_{q/p}(\alpha^{m-1}d_{i})=0,
\end{array}\right.
\end{eqnarray}
where $d_i \in \gf_q$. This system of equations implies $d_i\in \gf_q^*$.
For any $x=\sum_{l=0}^{m-1}k_l\alpha^l \in \gf_{q}, k_l \in \gf_p$, the System (\ref{system}) implies
$$
\tr_{q/p}(xd_i)=\sum_{l=0}^{m-1}k_l \tr_{q/p}(\alpha^ld_i)=0.
$$
 This contradicts with the fact that $|\ker(\tr_{q/p})|=p^{m-1}$.
Hence, $\bg_1$ and $\bg_2$ are $\gf_p$-linearly independent and $d^{\perp} \geq 3$. Besides, the first, second and last columns of $G_1'$ are $\gf_p$-linearly dependent. Then $d^{\perp}=3$.
\end{proof}

Next we will consider the case that $m+s$ is odd.
\begin{theorem} \label{theorem2}
Let $q=p^m$, where $p$ is an odd prime, $m$ is a positive integer, $s$ is an integer and $m+s$ is odd with $m+s\geq 3$. Let $f(x)$ be weakly regular $s$-plateaued function with $\varepsilon$ the sign of the Walsh transform of $f(x)$
\begin{enumerate}[1)]
\item When $f(x)$ is unbalanced, then we have the following results.
 \begin{itemize}
 \item[1.1)]If~ $\varepsilon (\eta_0(-1))^{\frac{m+s+1}{2}}=1$, then the linear code $\overline{\cC_f}'$ has parameters
$$\left[p^m+m+2, m+2, p^m-p^{m-1}-p^{\frac{m+s-1}{2}}+1\right].$$
\item[1.2)] If~ $\varepsilon (\eta_0(-1))^{\frac{m+s+1}{2}}=-1$, then the linear code $\overline{\cC_f}'$ has parameters
$$\left[p^m+m+2, m+2, p^m-p^{m-1}-p^{\frac{m+s-1}{2}}+2\right].$$
\end{itemize}
\item  When $f(x)$ is balanced, then we have the following results.
 \begin{itemize}
\item[2.1)]If~ $\varepsilon (\eta_0(-1))^{\frac{m+s+1}{2}}=1$, then the linear code $\overline{\cC_f}'$ has parameters
$$\left[p^m+m+2, m+2, \geq p^m-p^{m-1}-p^{\frac{m+s-1}{2}}+2\right].$$
\item[2.2)] If~ $\varepsilon (\eta_0(-1))^{\frac{m+s+1}{2}}=-1$, then the linear code $\overline{\cC_f}'$ has parameters
$$\left[p^m+m+2, m+2, \geq p^m-p^{m-1}-p^{\frac{m+s-1}{2}}+2\right].$$
\end{itemize}
\end{enumerate}
\indent Moreover, the parameters of $\overline{\cC_f}'^{\perp}$ is given by $[p^m+m+2, p^m, 3]$ and $\overline{\cC_f}'^{\perp}$ is at least almost optimal with respect to the sphere-packing bound.
\end{theorem}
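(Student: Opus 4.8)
The plan is to run the argument of Theorem \ref{theorem1} essentially verbatim, now reading the relevant weights off Table \ref{tab2} rather than Table \ref{tab1}. The length $p^m+m+2$ and dimension $m+2$ of $\overline{\cC_f}'$ are immediate from $G'=[I_{m+2,m+2}:G_1]$ together with the fact that $G_1$ has rank $m+2$ (Theorem \ref{weight}). First I would record that any codeword of $\overline{\cC_f}'$ has the form $\ba'=(c,a,a_0,\cdots,a_{m-1},\ba)$ with $b=\sum_{i=0}^{m-1}a_i\alpha^i$ and $\ba=\bc_{(a,b,c+a)}=(af(x)+\tr_{q/p}(bx)+c+a)_{x\in\gf_q}\in\overline{\cC_f}$, so that $\wt(\ba')=\wt(c,a,a_0,\cdots,a_{m-1})+\wt(\ba)$. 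Since $m+s$ is odd we have $s<m$, hence $f$ is not affine and $\ba=\bzero$ forces $c=a=a_0=\cdots=a_{m-1}=0$; consequently no nonzero codeword of $\overline{\cC_f}'$ has weight $d(\overline{\cC_f})$, which gives $d(\overline{\cC_f}')\ge d(\overline{\cC_f})+1$.

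The core of the proof is the four-way case analysis, using the explicit weight of $\bc_{(a,b,c)}$ obtained inside the proof of Theorem \ref{weight} for odd $m+s$: for $a\in\gf_p^*$ and $a^{-1}b\in S_f$ with $f^*(-b/a)+a^{-1}c\neq 0$, the weight is $p^m-p^{m-1}-\varepsilon(\eta_0(-1))^{(m+s+1)/2}p^{(m+s-1)/2}$ if $\eta_0(f^*(-b/a)+a^{-1}c)=1$ and $p^m-p^{m-1}+\varepsilon(\eta_0(-1))^{(m+s+1)/2}p^{(m+s-1)/2}$ if $\eta_0(f^*(-b/a)+a^{-1}c)=-1$. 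From Table \ref{tab2} one reads $d(\overline{\cC_f})=p^m-p^{m-1}-p^{(m+s-1)/2}$, every minimum-weight codeword of $\overline{\cC_f}$ has $a\in\gf_p^*$ and $a^{-1}b\in S_f$, and the next nonzero weight exceeds $d(\overline{\cC_f})$ by at least $p^{(m+s-1)/2}\ge p\ge 3$; so it is enough to examine codewords of $\overline{\cC_f}'$ with $\wt(\ba)=d(\overline{\cC_f})$ and to find the lightest possible leading block $(c,a,a_0,\cdots,a_{m-1})$ among them. When $f$ is unbalanced, $0\in S_f$ and $f^*(0)=0$, so in case~1.1 the codeword with $a\in\gf_p^*$, $b=0$, $c=0$ has $f^*(0)+a^{-1}(0+a)=1$, hence lies in the weight class $d(\overline{\cC_f})$ selected by $\eta_0(1)=1$, with leading block $(0,a,0,\cdots,0)$ of weight $1$; this matches the lower bound, giving $d(\overline{\cC_f}')=d(\overline{\cC_f})+1$. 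In case~1.2 the minimum weight lies in the class $\eta_0(f^*(-b/a)+a^{-1}(c+a))=-1$; a leading block of weight $1$ must be $(0,a,0,\cdots,0)$ with $a\neq 0$, forcing $b=c=0$ and then $\eta_0(1)=1$, the wrong class, so $d(\overline{\cC_f}')\ge d(\overline{\cC_f})+2$, and equality is realized by $b=0$, $a\in\gf_p^*$, $c\in\gf_p^*$ with $\eta_0(a^{-1}c+1)=-1$ (such $c$ exists because $p\ge 3$). When $f$ is balanced, $0\notin S_f$, so $a^{-1}b\in S_f$ forces $b\neq 0$ and hence at least two entries of the leading block are nonzero; this yields $d(\overline{\cC_f}')\ge d(\overline{\cC_f})+2$ in cases~2.1 and 2.2.

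For the dual code, $G_1'$ is a parity-check matrix of $\overline{\cC_f}'^{\perp}$, so $d(\overline{\cC_f}'^{\perp})$ equals the smallest number of $\gf_p$-linearly dependent columns of $G_1'$. Every column is nonzero, so $d^{\perp}\ge 2$; any two columns among the first $m+2$ (respectively among the last $q$) are independent; and a column $\bg_1$ from the first block and a column $\bg_2$ from the last block can be dependent only when $\bg_1=(1,0,\cdots,0)^{T}$ and the corresponding $d_i\in\gf_q$ satisfies $\tr_{q/p}(\alpha^{j}d_i)=0$ for all $0\le j\le m-1$, which forces $\tr_{q/p}(xd_i)=0$ for every $x\in\gf_q$ and contradicts $|\ker(\tr_{q/p})|=p^{m-1}$; hence $d^{\perp}\ge 3$. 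Since $f(0)=0$, the last column of $G_1'$ equals $(1,1,0,\cdots,0)^{T}$, which is the sum of the first two columns, so $d^{\perp}=3$ and $\overline{\cC_f}'^{\perp}$ has parameters $[p^m+m+2,p^m,3]$; the sphere-packing bound of Lemma \ref{sphere} then shows $\overline{\cC_f}'^{\perp}$ is at least almost optimal.

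I expect the main difficulty to be the bookkeeping in the case analysis: matching each combination of (un)balancedness and the sign of $\varepsilon(\eta_0(-1))^{(m+s+1)/2}$ to the correct weight class of Table \ref{tab2}, and, in the two unbalanced subcases, exhibiting an explicit minimum-weight codeword of $\overline{\cC_f}'$ whose leading block is as light as the statement permits while simultaneously ruling out any lighter one. The dual-distance computation is a verbatim repetition of the corresponding part of the proof of Theorem \ref{theorem1}, so it presents no new obstacle.
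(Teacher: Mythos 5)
Your proposal is correct and is exactly the argument the paper intends: the paper omits the proof of Theorem \ref{theorem2}, stating only that it is similar to that of Theorem \ref{theorem1}, and your write-up is precisely that adaptation (decomposing $\ba'$ into the leading block plus a codeword $\bc_{(a,b,c+a)}$ of $\overline{\cC_f}$, reading the minimum-weight class off the odd-$(m+s)$ weight formula, and repeating the column-dependence analysis of $G_1'$ for the dual). The case analysis, including the existence of $c$ with $\eta_0(a^{-1}c+1)=-1$ in subcase~1.2, checks out.
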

\begin{proof}
The proof is omitted as it is similar to that of  Theorem \ref{theorem1}.
\end{proof}

The following gives the optimally extendability of $\overline{\cC_f}$.
\begin{theorem} \label{theorem3}
Let $\overline{\cC_f}$ be the linear code in Theorem \ref{weight} and  $\overline{\cC_f}'$ be the linear code with generator matrix  $[I_{m+2, m+2}: G_1]$. Then $\overline{\cC_f}$  is an optimally extendable code with $m+s\geq 3$.
\end{theorem}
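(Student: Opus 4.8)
The plan is to deduce the statement directly from the minimum-distance computations already established in Theorems~\ref{weight}, \ref{theorem1} and \ref{theorem2}, so that once the bookkeeping is in place the argument is very short. First I would recall the definition from Section~\ref{sec1}: writing $\cD=\overline{\cC_f}$ (of length $q=p^m$ and dimension $m+2$, so that a complementary subspace $\cC$ with $\cC\oplus\cD=\gf_p^{q}$ certainly exists), the extension $\cD'$ is obtained by appending an identity block of size $\dim\cD$ to a generator matrix of $\cD$, and $\cD$ is optimally extendable precisely when $d(\cD'^{\perp})=d(\cD^{\perp})$. In our setting the generator matrix used is $G_1$, which is obtained from $G$ by elementary row operations and hence is itself a generator matrix of $\overline{\cC_f}$, while $\overline{\cC_f}'$ is generated by $[I_{m+2,m+2}:G_1]$, which is the code $\cD'$ up to a permutation of coordinates. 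Since permuting coordinates changes neither $d(\overline{\cC_f}'^{\perp})$ nor $d(\overline{\cC_f}^{\perp})$, it suffices to prove $d(\overline{\cC_f}'^{\perp})=d(\overline{\cC_f}^{\perp})$.

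Next I would collect the two dual distances. By Theorem~\ref{weight}, for $m+s\geq 3$ the dual $\overline{\cC_f}^{\perp}$ has parameters $[q,q-m-2,3]$; in particular $d(\overline{\cC_f}^{\perp})=3$ exactly, since the Pless-power-moment computation there gives $A_1^{\perp}=A_2^{\perp}=0$ together with $A_3^{\perp}>0$. For the extended code I would split on the parity of $m+s$: if $m+s$ is even then $m+s\geq 4$ and Theorem~\ref{theorem1} gives $\overline{\cC_f}'^{\perp}$ with parameters $[p^m+m+2,p^m,3]$; if $m+s$ is odd then $m+s\geq 3$ and Theorem~\ref{theorem2} gives $\overline{\cC_f}'^{\perp}$ with the same minimum distance $3$. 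Every integer $m+s\geq 3$ falls into exactly one of these two cases, so $d(\overline{\cC_f}'^{\perp})=3$ whenever $m+s\geq 3$.

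Putting the two facts together, $d(\overline{\cC_f}'^{\perp})=3=d(\overline{\cC_f}^{\perp})$ for all $m+s\geq 3$, which is exactly the defining condition for $\overline{\cC_f}$ to be optimally extendable. I do not expect a genuine obstacle in this argument: the substantive content — that appending the identity block brings the dual distance back up to $3$ rather than to something larger, and that $\overline{\cC_f}^{\perp}$ itself has dual distance exactly $3$ — has already been carried out in the earlier theorems. The only point that deserves a sentence of care is the identification of the particular generator matrix $G_1$ used to construct $\overline{\cC_f}'$ with an arbitrary generator matrix of $\cD$ in the abstract definition, which is immediate because $G_1$ is row-equivalent to $G$ and hence generates $\overline{\cC_f}$.
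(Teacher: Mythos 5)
Your proposal is correct and follows essentially the same route as the paper: the paper's proof likewise just combines Theorem~\ref{weight} (giving $d(\overline{\cC_f}^{\perp})=3$) with Theorems~\ref{theorem1} and~\ref{theorem2} (giving $d(\overline{\cC_f}'^{\perp})=3$) to conclude $d(\overline{\cC_f}^{\perp})-d(\overline{\cC_f}'^{\perp})=0$. Your added remarks on the coordinate permutation and on $G_1$ being row-equivalent to $G$ are sensible points of care that the paper leaves implicit.
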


\begin{proof}
By Theorems \ref{weight},\ref{theorem1} and \ref{theorem2}, we have $d(\overline{\cC_{f}}^{\perp})-d(\overline{\cC_{f}}'^{\perp}) = 0$. Then the desired conclusion follows.
\end{proof}

\begin{remark}
Let $q=3^m$, where $m$ is a positive integer, $s$ is an integer such that $m+s\geq 3$.
It is easy to verify that the codes $\overline{\cC_f}'$ in Theorems \ref{theorem1} and \ref{theorem2} are leading-systematic ternary LCD codes.
We find that some ternary LCD codes are optimal according to the Code Tables at  http://www.codetables.de/.
They are listed in Table \ref{tab4}.
\end{remark}

\begin{table}[!h]
\begin{center}
\caption{Optimal  ternary LCD codes.}\label{tab4}
\begin{tabular}{llll}
\toprule
Conditions & Code & Parameters & Optimality \\
\midrule
$p=3, m=2$ & $\overline{\cC_f}'^{\perp}$ & $[13,9,3]$ & Optimal\\
$p=3, m=3$ & $\overline{\cC_f}'^{\perp}$ & $[32,27,3]$ & Optimal\\
$p=3, m=4$ & $\overline{\cC_f}'^{\perp}$ & $[87,81,3]$ & Optimal\\
\bottomrule
\end{tabular}
\end{center}
\end{table}

\section{Binary self-orthogonal codes and LCD codes from $s$-plateaued Boolean functions}
In this section, we study the code $\overline{\cC_{f}}$ defined in Equation (\ref{eq-Cfbar}) if $f(x)$ is an $s$-plateaued Boolean function.
We also construct binary LCD codes from $\overline{\cC_{f}}$.

\begin{theorem}\label{theorem4}
Let $q=2^m$, where  $0 \leq s \leq m-2$ is a nonnegative integer and $m$ is a positive integer with $m\geq 2$ such that $m+s$ is even. Let $f(x)$ be an $s$-plateaued Boolean function. The  $\overline{\cC_{f}}$ is a binary $[ q, m+2, 2^{m-1}-2^{\frac{m+s-2}{2}}]$ linear  code  and its weight distribution is listed in table \ref{tab3}. Particulary, $\overline{\cC_{f}}$ is self-orthogonal if $m+s\geq 6$. Besides, $\overline{\cC_{f}}^{\perp}$ has parameters $[q,q-m-2,4]$ and is optimal according to the sphere-packing bound.
\end{theorem}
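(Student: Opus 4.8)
The plan is to mirror the proof of Theorem \ref{weight}, now working over $\gf_2$ with the Walsh distribution of plateaued Boolean functions from Lemma \ref{plateaued-boolean} in place of the quadratic Gaussian sum machinery. A codeword of $\overline{\cC_f}$ has the form $\bc_{(a,b,c)}=(af(x)+\tr_{q/p}(bx)+c)_{x\in\gf_q}$ with $a,c\in\gf_2$ and $b\in\gf_{2^m}$, so there are only four genuinely distinct regimes: $a=0,b=0$; $a=0,b\neq 0$; $a=1$ with $b\notin S_f$; and $a=1$ with $b\in S_f$ (the parameter $c$ merely complements the codeword). First I would compute $\wt(\bc_{(a,b,c)})=2^m-N_{(a,b,c)}$ where $N_{(a,b,c)}=|\{x:af(x)+\tr_{q/p}(bx)+c=0\}|$. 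Expanding via the orthogonality relation over $\gf_2$, for $a=1$ the inner sum is $\tfrac12\bigl(2^m+(-1)^{c}\,\text{W}_f(b)\bigr)$ essentially, and substituting the three possible Walsh values $0,\pm 2^{\frac{m+s}{2}}$ from Lemma \ref{plateaued-boolean} yields nonzero weights $2^{m-1}$ and $2^{m-1}\pm 2^{\frac{m+s-2}{2}}$, together with the weight $2^m$ coming from $a=b=0,c=1$.

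Next I would pin down the frequencies. The weight $2^m$ occurs exactly once (namely $(a,b,c)=(0,0,1)$), and the all-zero word once. For $a=1$, $b$ ranges over $S_f$ with the weight determined jointly by $\text{W}_f(b)$ and $c$: the count of $b\in S_f$ with $\text{W}_f(b)=2^{\frac{m+s}{2}}$ versus $-2^{\frac{m+s}{2}}$ is read off from Lemma \ref{plateaued-boolean}, namely $2^{m-s-1}+2^{\frac{m-s-2}{2}}$ and $2^{m-s-1}-2^{\frac{m-s-2}{2}}$ (note $|S_f|=2^{m-s}$); each such $b$ gives two codewords as $c$ varies. The case $a=1,b\notin S_f$ together with $a=0,b\neq0$ (and all choices of $c$) contributes to the weight $2^{m-1}$. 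Collating these into Table \ref{tab3} and checking that the total frequency sums to $2^{m+2}$ and that the minimum distance is $2^{m-1}-2^{\frac{m+s-2}{2}}$ (positive when $m+s\geq 2$) establishes the parameters and weight distribution. The self-orthogonality then follows from Lemma \ref{lem-23}: every nonzero weight $2^{m-1},\,2^{m-1}\pm 2^{\frac{m+s-2}{2}},\,2^m$ is divisible by $4$ precisely when $m+s\geq 6$, since $2^{\frac{m+s-2}{2}}$ is then a multiple of $4$ and $2^{m-1}$ is a multiple of $4$ as $m\geq 3$.

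Finally, for $\overline{\cC_f}^\perp$ I would argue on the generator matrix $G$ of $\overline{\cC_f}$ viewed as a parity-check matrix: its rows are $\bone$, $(f(d_i))_i$, and the $m$ trace rows $(\tr_{q/p}(\alpha^j d_i))_i$. No column is zero and no two columns coincide (two columns agree only if the corresponding field elements have equal trace against every $\alpha^j$ and equal $f$-value, forcing them equal), so $d^\perp\geq 3$; checking that no three columns are $\gf_2$-dependent — equivalently that no column equals the sum of two others — gives $d^\perp\geq 4$, and exhibiting four dependent columns (e.g. using $f(0)=0$ and the trace kernel structure) gives $d^\perp=4$. Then $\overline{\cC_f}^\perp$ has parameters $[q,q-m-2,4]$, and the sphere-packing bound in Lemma \ref{sphere} with $d=5$, $t=2$ shows a $[q,q-m-2,5]$ code cannot exist, so it is optimal. \textbf{The main obstacle} is the bookkeeping in the frequency count: one must correctly track how the parameter $c$ pairs up with each $b\in S_f$ and each $b\notin S_f$, and reconcile the Walsh multiplicities of Lemma \ref{plateaued-boolean} (which as stated do not obviously sum to $2^m$) with $|S_f|=2^{m-s}$, so a careful sanity check that all frequencies sum to $2^{m+2}$ is essential.
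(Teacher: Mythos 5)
Your proposal is correct and follows essentially the same route as the paper: the same character-sum computation of $\wt(\bc_{(a,b,c)})=2^{m-1}-\tfrac{(-1)^c}{2}\mathrm{W}_f(a^{-1}b)$ in the case $a\neq 0$, the same reading-off of the four nonzero weights from $\mathrm{W}_f\in\{0,\pm 2^{(m+s)/2}\}$, the same frequency count via the Walsh distribution (your bookkeeping, pairing each $b\in S_f$ with both values of $c$, correctly yields $2^{m-s}$ for each extreme weight, and you rightly note that the multiplicities in Lemma~\ref{plateaued-boolean} as printed do not sum to $2^m$ — the $2^{m-s+1}$ there should be $2^{m-s-1}$), and the same appeal to Lemma~\ref{lem-23} for self-orthogonality when $m+s\geq 6$. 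The one genuine divergence is the dual distance: the paper invokes the Pless power moments to get $d^\perp=4$, whereas you argue directly on the columns of the generator matrix (distinct nonzero columns give $d^\perp\geq 3$; since $\bone\in\overline{\cC_f}$ the first coordinate of every column is $1$, so no column is a sum of two others and $d^\perp\geq 4$) and then let the sphere-packing bound cap $d^\perp$ at $4$. This is arguably cleaner, since it makes your final step of "exhibiting four dependent columns" unnecessary — the packing bound already forces $d^\perp\leq 4$ — and it avoids the moment computation entirely; the paper's route has the advantage of producing $A_4^\perp$ explicitly. Both are sound.
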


\begin{table}\label{ta-3}
\begin{center}
\caption{The weight distribution of $\overline{\cC_{f}}$ in Theorem \ref{theorem4} when $p=2$ and $m+s$ is even.}\label{tab3}
\begin{tabular}{@{}ll@{}}
\toprule
Weight & Frequency  \\
\midrule
$0$ & $1$ \\
$2^{m-1}-2^{\frac{m+s-2}{2}}$ & $2^{m-s}$  \\
$2^{m-1}+2^{\frac{m+s-2}{2}}$ & $2^{m-s}$ \\
$2^{m-1}$ & $2^{m+2}-2^{m-s+1}-2$ \\
$2^m$ &   $1$\\
\bottomrule
\end{tabular}
\end{center}
\end{table}
\begin{proof}
Let $\bc_{{(a,b,c)}}=(af(x)+\tr_{q/p}(bx)+c)_{x \in \gf_q}, a \in \gf_2, b \in \gf_q, c \in \gf_2$, be a codeword in $\overline{\cC_f}$.
Then its Hamming weight is given by
\begin{eqnarray*}
  \text{wt}(\bc_{a,b,c})&=&2^m-\frac{1}{2}\sum_{y \in \gf_2}\sum_{x \in \gf_{2^{m}}}(-1)^{y(af(x)+\tr_{2^m/2}(bx)+c)}\\
   &=&  2^{m-1}-\frac{1}{2}\sum_{x \in \gf_{2^m}}(-1)^{af(x)+\tr_{2^m/2}(bx)+c}\\
   &=& \begin{cases}
         0 & \mbox{if $ a=0, b=0,c=0$,} \\
         2^m & \mbox{if $a=0, b=0, c=1$,}\\
         2^{m-1} & \mbox{if $a=0, b \neq 0, c\in \gf_2  \ \text{or} \ a \neq 0 , a^{-1}b\notin S_f, c\in \gf_2$,} \\
         2^{m-1}-\frac{1}{2}\text{W}_f(a^{-1}b) & \mbox{if $ a \neq 0, b\in \gf_{2^m}, c=0 , a^{-1}b \in S_f$,} \\
         2^{m-1}+\frac{1}{2}\text{W}_f(a^{-1}b) & \mbox{if $ a \neq 0, b\in \gf_{2^m}, c=1 , a^{-1}b \in S_f$.}
       \end{cases}
\end{eqnarray*}
According to Lemma \ref{plateaued-boolean}, we have $\text{W}_f(a^{-1}b) \in \{0, \pm 2^{(m+s)/2}\}$. Then the weight distribution directly follows where the frequency of each weight can be derived from Lemma \cite{Mesnager1}.
By Lemma \ref{lem-23}, we know $\overline{\cC_{f}}$ is self-orthogonal if $m+s\geq 6$.
The parameters of $\overline{\cC_{f}}^{\perp}$ can be derived by the Pless power moments in \cite[Page 260]{H}.
According to the sphere-packing bound, there exists no $[q,q-m-2,5]$ binary code. Then $\overline{\cC_{f}}^{\perp}$ is optimal with respect to the sphere-packing bound.
\end{proof}

In the following, we will construct a family of LCD codes from $\overline{\cC_{f}}$.\\
\indent Let $I_{m+2,m+2}$ be the identity matrix of size $m+2$ and $\gf_q^*=\langle\alpha \rangle$. Let $\{1,\alpha, \alpha^{2},\ldots, \alpha^{n-1}\}$ be a $\gf_2$-basis of $\gf_{q}$. Let $d_1,d_2,\cdots,d_{q-1}, d_q$ denote all elements of $\gf_q$, where $d_q=0$.
By definition, the generator matrix $G$ of $\overline{\cC_f}$ is given by
\begin{eqnarray*}
G:=\left[
\begin{array}{cccc}
1&1&\cdots&1 \\
 f(d_1) & f(d_2) & \cdots &f(d_q)\\
 \tr_{q/2}(\alpha^{0}d_{1})& \tr_{q/2}(\alpha^{0}d_{2})& \cdots &\tr_{q/2}(\alpha^{0}d_{q}) \\
\tr_{q/2}(\alpha^{1}d_{1})& \tr_{q/2}(\alpha^{1}d_{2})& \cdots &\tr_{q/2}(\alpha^{1}d_{q}) \\
\vdots &\vdots &\ddots &\vdots \\
\tr_{q/2}(\alpha^{m-1}d_{1})& \tr_{q/2}(\alpha^{m-1}d_{2})& \cdots &\tr_{q/2}(\alpha^{m-1}d_{q})
\end{array}\right].
\end{eqnarray*}
However, the dual of the LCD code ${\overline{\cC_f}'}$ generated by the matrix $G':=[I_{m+2,m+2}:G]$ has bad minimum distance $2$.
By an elementary row transformation on $G$, we obtain another generator matrix $G_2$ of $\overline{\cC_f}$  given by
\begin{eqnarray}\label{matrix2}
G_2:=\left[
\begin{array}{cccc}
1&1&\cdots&1 \\
 f(d_1)+1 & f(d_2)+1 & \cdots &f(d_q)+1\\
 \tr_{q/2}(\alpha^{0}d_{1})& \tr_{q/2}(\alpha^{0}d_{2})& \cdots &\tr_{q/2}(\alpha^{0}d_{q}) \\
\tr_{q/2}(\alpha^{1}d_{1})& \tr_{q/2}(\alpha^{1}d_{2})& \cdots &\tr_{q/2}(\alpha^{1}d_{q}) \\
\vdots &\vdots &\ddots &\vdots \\
\tr_{q/2}(\alpha^{m-1}d_{1})& \tr_{q/2}(\alpha^{m-1}d_{2})& \cdots &\tr_{q/2}(\alpha^{m-1}d_{q})
\end{array}\right].
\end{eqnarray}
Let ${\overline{\cC_f}'}$  be the linear code generate by $G_2':=[I_{m+2, m+2}: G_2].$
\begin{theorem}\label{theorem5}
Let $q=2^m$, where $m$ is a positive integer and $s$ is an integer such that $m+s$ is even with $m+s\geq 6$. The code $\overline{\cC_{f}'}$  generated by the matrix $G_2'=[I_{m+2, m+2}: G_2]$ is a leading-systematic binary LCD code whose parameters are given as follows. \\
\begin{enumerate}[(1)]
\item  If $f(x)$ is balanced, then the binary LCD code $\overline{\cC_f}'$ has parameters
$$\left[2^m+m+2, m+2, \geq 2^{m-1}-2^{\frac{m+s-2}{2}}+2\right].$$
 \item  Let $f(x)$ be unbalanced.
 \begin{itemize}
\item[(2.1)] If $\text{W}_f(0)=-2^{\frac{m+s}{2}}$,  then the binary LCD code $\overline{\cC_f}'$ has parameters
$$ \left[2^m+m+2, m+2, 2^{m-1}-2^{\frac{m+s-2}{2}}+1\right].$$
\item[(2.2)] If $\text{W}_f(0)=2^{\frac{m+s}{2}}$, then the binary LCD code $\overline{\cC_f}'$ has parameters
 $$ \left[2^m+m+2, m+2, 2^{m-1}-2^{\frac{m+s-2}{2}}+2\right].$$
 \end{itemize}
 \end{enumerate}
 Besides, $\overline{\cC_f}'^{\perp}$ has parameters $ \left[2^m+m+2, 2^m, 3\right]$ and is at least almost optimal according to the sphere-packing
bound.
\end{theorem}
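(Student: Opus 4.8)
The plan is to run the same argument as in the proof of Theorem \ref{theorem1}, using Theorem \ref{theorem4} as a black box for $\overline{\cC_{f}}$ and keeping track of the fact that the second row of $G_2$ is $(f(d_i)+1)_i$ rather than $(f(d_i))_i$. First I would settle the structural parameters: since $m+s\ge 6$, Theorem \ref{theorem4} says $\overline{\cC_{f}}$ is self-orthogonal, so every generator matrix of $\overline{\cC_{f}}$ — in particular $G_2$ — satisfies $G_2G_2^{T}=\mathbf{0}$, i.e.\ $G_2$ is row-self-orthogonal. By Lemma \ref{lem-LCD}, the code $\overline{\cC_f}'$ generated by $G_2'=[I_{m+2,m+2}:G_2]$ is a leading-systematic binary LCD code; its length is $2^m+m+2$ and, thanks to the identity block, its dimension is $m+2$.

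For the minimum distance of $\overline{\cC_f}'$, I would write a codeword as $(\mathbf v,\mathbf w)$ with message $\mathbf v=(c,a,a_0,\dots,a_{m-1})\in\gf_2^{m+2}$. A short computation shows $\mathbf w=\bigl(af(x)+\tr_{q/2}(bx)+(a+c)\bigr)_{x\in\gf_q}$ with $b=\sum_{i=0}^{m-1}a_i\alpha^{i}$, so $\mathbf w$ runs through all of $\overline{\cC_{f}}$ and $\wt(\mathbf v,\mathbf w)=\wt(\mathbf v)+\wt(\mathbf w)$; in particular any nonzero codeword has $\wt(\mathbf v)\ge 1$ and $\wt(\mathbf w)\ge d(\overline{\cC_{f}})=2^{m-1}-2^{\frac{m+s-2}{2}}$ by Theorem \ref{theorem4}. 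Substituting the constant $a+c$ into the weight formula of Theorem \ref{theorem4}, one sees that $\wt(\mathbf w)=d(\overline{\cC_{f}})$ forces $a\ne 0$ and $a^{-1}b\in S_f$, and additionally $b\ne 0$ when $0\notin S_f$. This splits into the four claimed cases:
\begin{itemize}
\item $f$ balanced (so $0\notin S_f$): $\wt(\mathbf w)=d(\overline{\cC_{f}})$ forces $\wt(\mathbf v)\ge2$ (because $a\ne0$ and $b\ne0$), while $\wt(\mathbf w)>d(\overline{\cC_{f}})$ forces $\wt(\mathbf w)\ge 2^{m-1}$; hence $\wt(\mathbf v,\mathbf w)\ge d(\overline{\cC_{f}})+2$ in all cases, which is the lower bound asserted for balanced $f$.
\item $f$ unbalanced with $\text{W}_f(0)=-2^{\frac{m+s}{2}}$: the message $(0,1,0,\dots,0)$ realizes $\mathbf w=(f(x)+1)_x$, of weight $|\{x\in\gf_q: f(x)=0\}|=d(\overline{\cC_{f}})$, together with $\wt(\mathbf v)=1$; since $\wt(\mathbf v)\ge1$ always, the minimum distance is exactly $d(\overline{\cC_{f}})+1$.
\item $f$ unbalanced with $\text{W}_f(0)=2^{\frac{m+s}{2}}$: inspecting the three weight-one messages shows none gives $\wt(\mathbf w)=d(\overline{\cC_{f}})$, so $\wt(\mathbf v)=1$ forces $\wt(\mathbf w)\ge2^{m-1}$; combined with the message $(1,1,0,\dots,0)$, which realizes $\mathbf w=(f(x))_x$ of weight $d(\overline{\cC_{f}})$ and $\wt(\mathbf v)=2$, the minimum distance is exactly $d(\overline{\cC_{f}})+2$.
\end{itemize}

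Finally, $G_2'$ is a parity-check matrix of $\overline{\cC_f}'^{\perp}$, which has length $2^m+m+2$ and dimension $2^m$. No column of $G_2'$ is zero, and any two columns are $\gf_2$-linearly independent: for two of the last $2^m$ columns this is the nondegeneracy of $\tr_{q/2}$ on $\gf_q$, and for a unit vector $e_k$ against any other column one uses $f(0)=0$ (so the column indexed by $d_q=0$ is $e_1+e_2$, not $e_1$), exactly as in the proof of Theorem \ref{theorem1}. Since $e_1$, $e_2$ and the column indexed by $d_q=0$ sum to $\mathbf{0}$, we get $d(\overline{\cC_f}'^{\perp})=3$. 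As $\binom{2^m+m+2}{2}>2^{m+2}$ for $m\ge3$ (and $m\ge3$ here, because $m+s\ge6$ and $s\le m$), Lemma \ref{sphere} rules out a $[2^m+m+2,2^m,5]$ binary code, so $\overline{\cC_f}'^{\perp}$ is optimal or almost optimal.

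The main obstacle is the bookkeeping in the minimum-distance step: because $G_2$ carries the row $(f(d_i)+1)_i$, the constant attached to the message $(c,a,a_0,\dots,a_{m-1})$ is $a+c$ rather than $c$, which changes which triples $(a,b,c)$ realize the minimum weight of $\mathbf w$ and, in the unbalanced case, decides whether the sign of $\text{W}_f(0)$ pushes the gap from $+1$ up to $+2$; for $\text{W}_f(0)=2^{\frac{m+s}{2}}$ one must also verify that no weight-one first part pairs with a minimum-weight $\mathbf w$. Everything else reduces to Theorem \ref{theorem4}, Lemma \ref{lem-LCD}, and the column argument of Theorem \ref{theorem1}.
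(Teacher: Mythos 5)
Your proposal is correct and follows essentially the same route as the paper: the same decomposition of a codeword of $\overline{\cC_f}'$ into a message part and a codeword $(af(x)+\tr_{q/2}(bx)+a+c)_{x\in\gf_q}$ of $\overline{\cC_f}$, the same case analysis on $\text{W}_f(0)$ driven by Theorem \ref{theorem4}, and the same column-independence argument (which the paper defers to the proof of Theorem \ref{theorem1}) for $d(\overline{\cC_f}'^{\perp})=3$. The only cosmetic slip is the phrase ``the three weight-one messages'' (there are $m+2$ of them, though only the one with $a=1$ and $b=c=0$ needs checking since $\wt(\mathbf w)=d(\overline{\cC_f})$ forces $a\neq 0$), which does not affect the argument.
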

\begin{proof}
By Lemma \ref{lem-LCD}, the linear code ${\overline{\cC_f}'}$ is a leading-systematic LCD code.~Obviously, the length of $\overline{\cC_f}'$ is $q+m+2$ and the dimension of $\overline{\cC_f}'$ is $m+2$. Similarly to the proof of Theorem \ref{theorem1}, we can deduce that the minimum distance $d$ of $\overline{\cC_f}'$ satisfies that $d \geq d(\overline{\cC_{f}})+1$, where $d(\overline{\cC_{f}})$ denotes  the minimum distance of $\overline{\cC_{f}}$. For a codeword $\ba'$ in $\overline{\cC_f}'$, we have
\begin{eqnarray*}
\ba'&=&(c,a,a_0,a_1 \cdots, a_{m-1})G_2'\\&=&(c,a,a_0, \cdots, a_{m-1}, af(d_1)+\sum_{i=0}^{m-1}a_i\tr_{q/2}(\alpha^{i}d_{1})+c+a, \cdots, af(d_q)+\sum_{i=0}^{m-1}a_i\tr_{q/2}(\alpha^{i}d_{q})+c+a),
\end{eqnarray*}
where $c, a, a_i \in \gf_2$ for all $0 \leq i \leq m-1$. Then $\ba'=(c,a,a_0,a_1 \cdots, a_{m-1}, \ba)$, where $$\ba=(c,a,a_0,a_1 \cdots, a_{m-1})G_2=(af(x)+\tr_{q/2}(bx) + c+a)_{x \in \gf_q}$$ is a codeword in $\overline{\cC_f}$, where $b:=\sum_{i=0}^{m-1}a_i\alpha^i$. \\
\indent By Theorem \ref{theorem4}, we deduce that $\wt(\ba)=d({\overline{\cC_f}})=2^{m-1}-2^{\frac{m+s-2}{2}}$ if and only if $ a \neq 0,~c+a=0 , \text{W}_f(a^{-1}b)=2^{\frac{m+s}{2}}$ and~$a^{-1}b \in S_f $, or $ a \neq 0, ~ c+a=1 , \text{W}_f(a^{-1}b)=-2^{\frac{m+s}{2}}$ and~$a^{-1}b \in S_f$.
\begin{enumerate}
\item If $f(x)$ is balanced, then $\text{W}_f(0)=0$. Hence $\text{W}_f(a^{-1}b)\neq0$ implies $b\neq0$. Then we deduce $d\geq 2^{m-1}-2^{\frac{m+s-2}{2}}+2$.
\item Let $f(x)$ be unbalanced, then $\text{W}_f(0)\neq 0$.
\begin{enumerate}
 \item
 Let $\text{W}_f(0)=-2^{\frac{m+s}{2}}$. Then $\wt(\ba)=d({\overline{\cC_f}})$ if  $a=1,b=c=0$.
 We then have $\wt(\ba')=d({\overline{\cC_f}})+1$ and $d=2^{m-1}-2^{\frac{m+s-2}{2}}+1$.
\item
Let $\text{W}_f(0)=2^{\frac{m+s}{2}}$.
Then we claim that $d\geq d({\overline{\cC_f}})+2$. Otherwise, if there exists a codeword $\ba'=(c,a,a_0,a_1 \cdots, a_{m-1}, \ba)$ of weight $d({\overline{\cC_f}})+1$.
Then $\wt(\ba)=d({\overline{\cC_f}})$ and $\wt(c,a,a_0,a_1 \cdots, a_{m-1})=1$.
Note that $\wt(\ba)=d({\overline{\cC_f}})$ implies $a\neq 0$ by Theorem \ref{theorem4}. Then $\wt(c,a,a_0,a_1 \cdots, a_{m-1})=1$ implies $c=b=0$.
Since $\text{W}_f(0)=2^{\frac{m+s}{2}}$, $\wt(\ba)=d({\overline{\cC_f}})$ implies $a=c=1$ which contradicts with $c=0$.
Then $d\geq d({\overline{\cC_f}})+2$. Let $a=c=1,b=0$. Then there exists a codeword $\ba'$ such that $\wt(\ba')=d({\overline{\cC_f}})+2$.
Thus $d=2^{m-1}-2^{\frac{m+s-2}{2}}+2$.
\end{enumerate}
\end{enumerate}
The parameters of $\overline{\cC_f}'^{\perp}$ can be derived by a similar method used in the proof of Theorem \ref{theorem1}.
\end{proof}

\begin{theorem} \label{theorem6}
 Let $m+s$ be even and $m+s\geq 6$. Let $\overline{\cC_f}$ be the linear code in Theorem \ref{theorem4} with generator matrix $G_2$ and  $\overline{\cC_f}'$ be the linear code with generator matrix  $[I_{m+2, m+2}: G_2]$. Then $\overline{\cC_f}$  is an almost optimally extendable code.
\end{theorem}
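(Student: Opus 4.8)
The plan is to read off the two dual minimum distances that have already been computed in this section and compare them. Recall from the definition given in the introduction that $\overline{\cC_f}$ is almost optimally extendable precisely when $d(\overline{\cC_f}'^{\perp}) = d(\overline{\cC_f}^{\perp}) - 1$, where $\overline{\cC_f}'$ is the code obtained by adjoining the identity block $I_{m+2,m+2}$ to the chosen generator matrix $G_2$ of $\overline{\cC_f}$.

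First I would invoke Theorem~\ref{theorem4}: under the hypotheses $m+s$ even and $m+s\geq 6$, the dual code $\overline{\cC_f}^{\perp}$ has parameters $[q,q-m-2,4]$, so $d(\overline{\cC_f}^{\perp})=4$; this value is independent of which generator matrix of $\overline{\cC_f}$ we use, since $G$ and $G_2$ generate the same code. Next I would invoke Theorem~\ref{theorem5}: with the specific generator matrix $G_2' = [I_{m+2,m+2}:G_2]$, the code $\overline{\cC_f}'$ has dual $\overline{\cC_f}'^{\perp}$ of parameters $[2^m+m+2,\,2^m,\,3]$, hence $d(\overline{\cC_f}'^{\perp})=3$. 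Putting the two facts together gives $d(\overline{\cC_f}'^{\perp}) = 3 = 4-1 = d(\overline{\cC_f}^{\perp})-1$, which is exactly the defining condition for $\overline{\cC_f}$ to be almost optimally extendable.

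Two small points deserve a sentence of care rather than any real work. First, the role of $G_2$: with the naive matrix $[I_{m+2,m+2}:G]$ one only obtains dual distance $2$ (as already noted just before Theorem~\ref{theorem5}), so the row operation turning $G$ into $G_2$ is what pins the dual distance at $3$ and makes the statement sharp; this is why the theorem is phrased for that particular generator matrix. Second, the definition of extendability places the identity block at the \emph{end} of the generator matrix of $\cD$, whereas $G_2'$ places it in front, but these two matrices differ only by a coordinate permutation, so the resulting codes are monomially equivalent and their duals share the same minimum distance, which lets the computation of Theorem~\ref{theorem5} apply verbatim. There is no genuine obstacle here: the result is a direct corollary of Theorems~\ref{theorem4} and~\ref{theorem5}, and the only thing to check is that their common hypotheses ($m+s$ even, $m+s\geq 6$) coincide with those of the present statement, which they do.
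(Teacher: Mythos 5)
Your proposal is correct and follows exactly the paper's argument: it reads $d(\overline{\cC_f}^{\perp})=4$ from Theorem~\ref{theorem4} and $d(\overline{\cC_f}'^{\perp})=3$ from Theorem~\ref{theorem5}, and concludes from the difference of $1$. The extra remarks on the role of $G_2$ versus $G$ and on the placement of the identity block are sensible but not needed beyond what the paper already records.
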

\begin{proof}
By Theorems \ref{theorem4} and \ref{theorem5}, we have $d(\overline{\cC_{f}}^{\perp})-d(\overline{\cC_{f}}'^{\perp}) = 1$. The desired conclusion follows from the definition of almost optimally extendable code.
\end{proof}

\section{Self-dual codes}
In this section, we derive some families of self-dual codes from some self-orthogonal codes.

The following is a simple result for self-orthogonal codes.
\begin{lemma}\label{lem-subcode}
  Let $\cC$ be a self-orthogonal code. Then for any $\cC' \subseteq \cC$, $\cC'$ is also a self-orthogonal code.
\end{lemma}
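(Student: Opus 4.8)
The plan is to argue directly from the definition of self-orthogonality, since once unwound the statement is essentially set-theoretic. Recall from Section~\ref{sec1} that a linear code $\cD\subseteq\gf_q^n$ is self-orthogonal exactly when $\cD\subseteq\cD^{\perp}$, which is the same as saying that $\langle\bu,\bw\rangle=0$ for every pair of codewords $\bu,\bw\in\cD$.

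First I would fix a linear subcode $\cC'\subseteq\cC$, so that $\cC'^{\perp}$ is defined and the notion of self-orthogonality applies to $\cC'$. The hypothesis that $\cC$ is self-orthogonal gives $\langle\bu,\bw\rangle=0$ for all $\bu,\bw\in\cC$. Since $\cC'\subseteq\cC$, any two codewords $\bu,\bw\in\cC'$ also belong to $\cC$ and hence satisfy $\langle\bu,\bw\rangle=0$. This says precisely that $\bu\in\cC'^{\perp}$ for every $\bu\in\cC'$, i.e.\ $\cC'\subseteq\cC'^{\perp}$, so $\cC'$ is self-orthogonal.

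There is no real obstacle here: the argument uses only the bilinearity of the standard inner product and makes no use of the ground field, the length, or any special structure of $\cC$. The only point worth stating explicitly is that ``$\cC'\subseteq\cC$'' is understood to mean a linear subcode, which is what makes $\cC'^{\perp}$ meaningful; with that convention the proof is immediate. One can equally phrase it through duals: $\cC'\subseteq\cC$ implies $\cC^{\perp}\subseteq\cC'^{\perp}$, and combining this with $\cC\subseteq\cC^{\perp}$ yields $\cC'\subseteq\cC\subseteq\cC^{\perp}\subseteq\cC'^{\perp}$, which is the claim.
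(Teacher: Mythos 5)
Your proof is correct and matches the paper's argument, which simply states that the claim follows directly from the definition of self-orthogonality; you have merely spelled out the (one-line) details that the paper leaves implicit. Nothing further is needed.
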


\begin{proof}
The desired conclusion directly follows from the definition of self-orthogonal code.
\end{proof}

\begin{lemma}\label{self26}
Let $G$ be a generator matrix of a linear code $\cC'$ and $\begin{bmatrix}
\textbf{1}\\
G
\end{bmatrix}$ be a generator matrix of a linear code $\cC$. If $\cC$ is self-orthogonal, then $\cC'$ is also self-orthogonal.
\end{lemma}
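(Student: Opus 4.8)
The plan is to show that $\cC'$ is a subcode of $\cC$ and then invoke Lemma \ref{lem-subcode}. First I would observe that $\cC'$ consists precisely of the $\gf_q$-linear combinations of the rows of $G$. If $\bw = \sum_i \lambda_i \bg_i$ is such a combination, where $\bg_i$ denotes the $i$-th row of $G$ and $\lambda_i \in \gf_q$, then $\bw$ is also a linear combination of the rows of $\begin{bmatrix}\textbf{1}\\ G\end{bmatrix}$, namely the one that assigns coefficient $0$ to the row $\textbf{1}$ and coefficient $\lambda_i$ to the row $\bg_i$. Hence $\bw \in \cC$, and therefore $\cC' \subseteq \cC$.

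Since $\cC$ is self-orthogonal by hypothesis, Lemma \ref{lem-subcode} applied to the subcode $\cC' \subseteq \cC$ immediately gives that $\cC'$ is self-orthogonal, which completes the argument. There is no genuine obstacle here; the only point worth checking explicitly is that $G$ and $\begin{bmatrix}\textbf{1}\\ G\end{bmatrix}$ have the same number of columns, say $n$ (this is forced by the two matrices being generator matrices of $\cC'$ and $\cC$ together with $\textbf{1}$ being a vector of length $n$), so that the two row spaces $\cC'$ and $\cC$ sit inside the common ambient space $\gf_q^n$ and the inclusion $\cC' \subseteq \cC$ is meaningful.
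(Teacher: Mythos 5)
Your argument is correct and follows the same route as the paper: establish the inclusion $\cC' \subseteq \cC$ (the paper simply notes it, while you justify it by extending each linear combination of the rows of $G$ with coefficient $0$ on the row $\textbf{1}$) and then apply Lemma \ref{lem-subcode}. No issues.
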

\begin{proof}
Note that $\cC' \subseteq \cC$.
Then the desired conclusion follows from Lemma \ref{lem-subcode}.
\end{proof}

\begin{theorem}\label{self27}
Let $q=p^m$, where $p$ is an odd prime, $m\geq 2$ is a positive integer and $0\leq s \leq m$ is an integer such that $m+s\geq 3$.
If $p\equiv 1\pmod{4}$ and $m\geq 2$ is any positive integer  or $p\equiv 3 \pmod{4}$ and $m$ is even,
then there exists a $\left[q-1,\frac{q-1}{2}\right]$ self-dual code over $\gf_p$ which is a subcode of ${\cC_f^*}^\perp$.
\end{theorem}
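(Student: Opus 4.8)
The plan is to produce the self-dual code as a subcode of $\cC_f^{*\perp}$ by applying Lemmas~\ref{lem-dual1} and~\ref{lem-dual}, which require only that $\cC_f^{*\perp}$ have length $n$ with the appropriate divisibility modulo $4$, that its dimension be at least $n/2$, and that its dual $\cC_f^*$ be self-orthogonal. Here $n = q-1$. First I would record the basic parameters: $\cC_f^*$ as defined in~(\ref{eq-Cf*}) has length $q-1$ and dimension $m+1$ (the functions $f$ and the $m$ coordinate trace functionals $\tr_{q/p}(\alpha^i x)$ together with no constant term; injectivity of the evaluation map follows because $m\ge 2$ and $f$ is not affine — a weakly regular $s$-plateaued function with $m+s\ge 3$ cannot be affine), so $\cC_f^{*\perp}$ has dimension $q-1-(m+1) = q-m-2$. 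For $m\ge 2$ this is at least $(q-1)/2$, so the dimension hypothesis of Lemmas~\ref{lem-dual1} and~\ref{lem-dual} is satisfied.

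Next I would establish that $\cC_f^*$ is self-orthogonal. The key observation is that $\overline{\cC_f}$ is the augmented code of the extended code of $\cC_f^*$: every codeword of $\cC_f^*$ is obtained from a codeword of $\overline{\cC_f}$ with $c=0$ by deleting the coordinate at $x=0$. Since Theorem~\ref{weight} shows $\overline{\cC_f}$ is $p$-divisible when $m+s\ge 3$, and deleting one coordinate changes a weight by at most $1$ while the relevant codewords $\bc_{(a,b,0)}$ with $x=0$ coordinate equal to $af(0)+\tr_{q/p}(0)=0$ lose nothing, each codeword of $\cC_f^*$ still has weight divisible by $p$ (for the subcode with $b\in\gf_q$, $a\in\gf_p$ and no constant). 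More directly: $\cC_f^*$ is a subcode of the code obtained by puncturing $\overline{\cC_f}$ restricted to $c=0$ at the zero coordinate, hence is $p$-divisible and contains $\mathbf{1}$ only if $f\equiv$ const, which it is not; one then applies Lemma~\ref{lem-23} for $p=3$ and for general $p\equiv 1,3\pmod 4$ invokes the weight pattern to conclude $\cC_f^*\subseteq \cC_f^{*\perp}$ directly from $p$-divisibility and the fact that $\sum_{x\in\gf_q^*} c(x) = -c(0)$-type relations force orthogonality with $\mathbf 1$'s complement. (Alternatively, and more cleanly, one checks $GG^T = 0$ for the natural generator matrix using the orthogonality relations of additive characters and the Walsh values of $f$.) Once $\cC_f^*$ is self-orthogonal, $\cC_f^{*\perp}\supseteq \cC_f^*$, i.e.\ $\cC_f^{*\perp}$ has a self-orthogonal dual.

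Finally I would apply the dichotomy of Lemma~\ref{lem-dual}: when $p\equiv 1\pmod 4$, part~(1) requires $n=q-1$ even, which holds since $q=p^m$ is odd; when $p\equiv 3\pmod 4$ and $m$ even, $q-1 = p^m-1 \equiv 0\pmod 4$ (as $p^2\equiv 1\pmod 8$ when $p$ is odd, so $p^m-1\equiv 0 \pmod 4$ for even $m$), so part~(2) applies. In both cases Lemma~\ref{lem-dual} gives that $\cC_f^{*\perp}$ contains a self-dual subcode, which necessarily has length $q-1$ and dimension $\frac{q-1}{2}$, and this subcode is over $\gf_p$ and contained in $\cC_f^{*\perp}$ as required. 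The main obstacle I anticipate is the clean verification of self-orthogonality of $\cC_f^*$ itself: Theorem~\ref{weight} gives $p$-divisibility of $\overline{\cC_f}$, but transferring this to $\cC_f^*$ (a punctured, non-augmented code) and then deducing $\cC_f^*\subseteq \cC_f^{*\perp}$ without the all-one vector being present requires either a direct $GG^T=0$ character-sum computation or a careful argument that the relevant subcode of $\overline{\cC_f}$ still lies in its own dual after puncturing — this bookkeeping, rather than any deep idea, is where the real work lies.
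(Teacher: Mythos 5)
Your overall strategy — show $\cC_f^*$ is self-orthogonal and then invoke Lemma~\ref{lem-dual} (Lemma~\ref{lem-dual1} is for characteristic $2$ and is not needed here) — is exactly the paper's, and your final parity/divisibility check ($q-1$ even when $p\equiv 1\pmod 4$; $p^m-1\equiv 0\pmod 4$ when $p\equiv 3\pmod 4$ and $m$ is even) is correct. The problem is that the step you yourself identify as the crux, self-orthogonality of $\cC_f^*$, is never actually established, and two of the routes you sketch do not work. Lemma~\ref{lem-23} applies only to $p=2,3$, and for general odd $p$ the implication ``$p$-divisible $\Rightarrow$ self-orthogonal'' is simply false without the hypothesis $\mathbf{1}\in\cC$ of Lemma~\ref{lem-self}; since $\cC_f^*$ does not contain the all-one vector, ``invoking the weight pattern'' and the $\sum_x c(x)$ relations is not an argument. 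Your remaining fallback (a direct $GG^T=0$ character-sum computation) would work but is not carried out.

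The clean argument, which the paper uses and which is within reach of what you wrote, avoids divisibility entirely at this stage: $\overline{\cC_f}$ is already proved self-orthogonal in Theorem~\ref{weight} (there the all-one vector is present, so Lemma~\ref{lem-self} applies); the code $\cC_f=\{(af(x)+\tr_{q/p}(bx))_{x\in\gf_q}\}$ is a subcode of $\overline{\cC_f}$, and any subcode of a self-orthogonal code is self-orthogonal (Lemma~\ref{self26}); finally, every codeword of $\cC_f$ vanishes at the coordinate $x=0$ (because $f(0)=0$), so deleting that coordinate to obtain $\cC_f^*$ changes no inner product, and $\cC_f^*$ is self-orthogonal. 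No puncturing bookkeeping, weight computation, or character sum is needed — the point is that self-orthogonality (unlike $p$-divisibility-plus-$\mathbf{1}$) passes directly to subcodes and survives deletion of an identically zero coordinate.
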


\begin{proof}

Define
$$\cC_f=\left\{(af(x)+\tr_{q/p}(bx))_{x \in \gf_q}: a \in \gf_p, b \in \gf_q\right\}.$$
Then $\cC_f$ is a subcode of $\overline{\cC_f}$ defined in Equation (\ref{eq-Cfbar}).
Since $\overline{\cC_f}$ is self-orthogonal by Theorem \ref{weight} if  $m+s\geq 3$, it follows from Lemma \ref{self26}
that $\cC_f$ is also self-orthogonal. Note that $\cC_f^*$ defined in Equation in (\ref{eq-Cf*})
can be derived by deleting the zero coordinate of each codeword of $\cC_f$.
Hence $\cC_f^*$ is self-orthogonal.
Then the desired conclusion follows from Lemma \ref{lem-dual}.
\end{proof}

\begin{theorem}\label{self-dual-even}
Let $q=2^m$, where  $0 \leq s \leq m-2$ is a nonnegative integer and $m$ is a positive integer with $m\geq 2$ such that $m+s\geq 6$ is even.
Let $\overline{\cC_f}$ be defined in Equation (\ref{eq-Cfbar}).
 Then there exists a $[q,\frac{q}{2}]$ binary self-dual code which is a subcode of ${\overline{\cC_f}^{\perp}} $.
 \end{theorem}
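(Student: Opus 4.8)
The plan is to apply Lemma \ref{lem-dual1}, which characterizes when a binary linear code of dimension at least $n/2$ contains a self-dual subcode. I take $\cC := \overline{\cC_f}^{\perp}$, a binary code of length $q = 2^m$. By Theorem \ref{theorem4}, since $m+s \geq 6$, the code $\overline{\cC_f}$ is self-orthogonal, i.e. $\overline{\cC_f} \subseteq \overline{\cC_f}^{\perp} = \cC$. Also by Theorem \ref{theorem4}, $\overline{\cC_f}$ has dimension $m+2$, so $\cC$ has dimension $q - m - 2 = 2^m - m - 2$, which is at least $q/2 = 2^{m-1}$ whenever $m \geq 2$ (this holds since $2^{m-1} - m - 2 \geq 0$ for $m \geq 3$, and the case $m=2$ is excluded because $m+s\geq 6$ forces $m\geq 4$ here as $s \leq m-2$). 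So the dimension hypothesis of Lemma \ref{lem-dual1} is satisfied.

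Next I must verify the two conditions in Lemma \ref{lem-dual1}: that $n = q$ is even, and that $\cC^{\perp}$ is self-orthogonal. The first is immediate since $q = 2^m$ with $m \geq 4$. For the second, note $\cC^{\perp} = (\overline{\cC_f}^{\perp})^{\perp} = \overline{\cC_f}$, and $\overline{\cC_f}$ is self-orthogonal by Theorem \ref{theorem4} (as $m+s \geq 6$). Hence both hypotheses hold, and Lemma \ref{lem-dual1} guarantees that $\cC = \overline{\cC_f}^{\perp}$ contains a self-dual subcode. A self-dual code over $\gf_2$ of length $q$ has dimension $q/2$, so this subcode has parameters $[q, \tfrac{q}{2}]$, and it is by construction a subcode of $\overline{\cC_f}^{\perp}$, as claimed.

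The main thing to be careful about is the bookkeeping on the dimension inequality $k \geq n/2$: one needs $2^m - m - 2 \geq 2^{m-1}$, equivalently $2^{m-1} \geq m + 2$, which I should check holds under the standing assumption $m + s \geq 6$ with $0 \leq s \leq m-2$ (this forces $2m \geq m+s+2 \geq 8$, so $m \geq 4$, and then $2^{m-1} \geq 8 > m+2$ is easy to confirm for $m = 4$ and clear by induction for larger $m$). There is no serious obstacle here — the theorem is essentially a direct corollary of Theorem \ref{theorem4} combined with Lemma \ref{lem-dual1}; the only real content is matching up the duality bookkeeping $(\overline{\cC_f}^{\perp})^{\perp} = \overline{\cC_f}$ and confirming the numerical dimension bound.
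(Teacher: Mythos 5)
Your proof is correct and follows exactly the paper's argument: the paper likewise invokes Theorem \ref{theorem4} to get self-orthogonality of $\overline{\cC_f}$ and then applies Lemma \ref{lem-dual1}. Your additional verification of the dimension hypothesis $2^m-m-2\geq 2^{m-1}$ (which the paper leaves implicit) is a welcome extra detail but does not change the route.
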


 \begin{proof}
 By Theorem \ref{theorem4}, $\overline{\cC_f}$ is self-orthogonal if $m+s\geq 6$ is even.
 Then the desired conclusion follows from Lemma \ref{lem-dual1}.
 \end{proof}

In the following, we give an example to illustrate how to find self-dual codes by Theorems \ref{self27} and \ref{self-dual-even}.

\begin{example}
\indent Let $\gf_{q}^*=\langle\alpha\rangle$ and $q=p^m$. Let $p=3$, $m=2$ and $f(x)=\tr_{3^m/3}({\alpha}x^4+{\alpha}^8x^2)$. Then $f(x)$ is weakly regular ternary 1-plateaued function and $\overline{\cC_f}$ is a $3$-weight
ternary $[9,4,3]$ code with weight enumerator $1+6z^{3}+66z^{6}+8z^{9}$. We know that $\overline{\cC_f}$ is self-orthogonal by Lemma \ref{lem-23}. It is easy to see that $\cC_f^*$ is also self-orthogonal with parameters $[8,3,3]$ by Lemma \ref{self26}. We will find a self-dual code $\widetilde{\cC}$ from $ {{\cC_{f}}^*}^{\perp}$.
Select a parity check matrix of $ {{\cC_{f}}^*}$ as follows:
 \begin{eqnarray}\label{matrix4}
H:=\left[
\begin{array}{cccc}
2 ~~~0 ~~~0~~~ 0~~~0 ~~~1 ~~~2~~~0\\
0 ~~~1 ~~~0~~~ 0 ~~~1 ~~~0 ~~~0 ~~~2\\
0 ~~~1 ~~~1 ~~~0~~~ 0 ~~~0 ~~~0~~~1\\
1 ~~~0 ~~~0 ~~~1 ~~~0 ~~~0~~~ 2~~~0\\
0~~~0~~~0 ~~~0 ~~~1 ~~~1 ~~~2~~~ 0
\end{array}\right].
\end{eqnarray}
Then $H$ is a generator matrix of ${{\cC_{f}}^*}^{\perp}$. Taking the first four rows of $H$ yields the following matrix:
\begin{eqnarray}\label{matrix4}
\widetilde{H}:=\left[
\begin{array}{cccc}
2 ~~~0 ~~~0~~~ 0~~~0 ~~~1 ~~~2~~~0\\
0 ~~~1 ~~~0~~~ 0 ~~~1 ~~~0 ~~~0 ~~~2\\
0 ~~~1 ~~~1 ~~~0~~~ 0 ~~~0 ~~~0~~~1\\
1 ~~~0 ~~~0 ~~~1 ~~~0 ~~~0~~~ 2~~~0\\
\end{array}\right].
\end{eqnarray}
Then $\widetilde{H}$ generate a $[8,4,3]$ ternary linear code $\widetilde{\cC}$ which is a subcode of ${{\cC_{f}}^*}^{\perp}$.
$\widetilde{\cC}$ has weight enumerator $1+16z^3+64z^6$.
Hence $\widetilde{\cC}$ is self-orthogonal by Lemma \ref{lem-23}. Since $\dim(\widetilde{\cC})=4$, $\widetilde{\cC}$ is self-dual.
\end{example}

\section{Summary and concluding remarks}
In this paper, we mainly studied the linear code ${\overline{\cC_f}}$ if $f$ is a weakly regular plateaued function or a plateaued Boolean function.
The highlights of this family of codes are as follows:
\begin{enumerate}
\item[$\bullet$] ${\overline{\cC_f}}$ is self-orthogonal under certain conditions (see Theorems \ref{weight} and \ref{theorem4});
\item[$\bullet$] The dual of ${\overline{\cC_f}}$ is optimal (see Theorem \ref{theorem4}) or at least almost optimal (see Theorem \ref{weight}) according to the sphere-packing bound;
\item[$\bullet$] ${\overline{\cC_f}}$ is an optimally extendable code (see Theorem \ref{theorem3}) or an almost optimally extendable code (see Theorem  \ref{theorem6}) by selecting a suitable generator matrix.
\end{enumerate}
Besides, we also derived some families of binary or ternary LCD codes from ${\overline{\cC_f}}$. In Table \ref{tab5}, we list known infinite families of binary or ternary LCD codes. From this table, our LCD codes have different parameters from known ones. In Table \ref{tab4}, these LCD codes are optimal in some cases.
Furthermore, we also derived self-dual codes from some self-orthogonal codes in this paper.
It is open to study the minimum distance of these self-dual codes.

 \begin{table}[h]
\scriptsize
\begin{center}
\caption{The known infinite families of binary or ternary LCD codes.}\label{tab5}
\setlength{\tabcolsep}{1mm}{
\begin{tabular}{llll} \hline
$n$ & $[n, k, d]$ & Condition & Reference \\ \hline
$n=3^m-1$ & $[n, 2m, \geq 2\delta_2]$ & $\delta_2=\frac{3^{m-1}-1}{4}+3^{m-2}$, $m$ odd & Th.4.5 in \cite{HY}\\
$n=3^m-1$ & $[n, 3, \frac{3^m-1}{2}]$ & $m$ even & Th.4.6 in \cite{HY}\\
$n=3^m-1$ & $[n, 2m, \geq 2\delta_3]$ & $m\equiv2$(mod 4), $\delta_3=\frac{3^{m-6}-1}{5}$ & Th.4.7 in \cite{HY}\\
 & &$+3^{m-6}+2 \cdot 3^{m-5}$\\
  & &$+2\cdot3^{m-3}+3^{m-2}$\\
$n=2^m-1$ & $[n, 2m, \geq 2\delta_1]$ & $m > 3$ odd, $\delta_1=\frac{2^{m-1}-1}{3}$ & Th.4.1 in \cite{HXY}\\
$n=2^m-1$ & $[n, 4m, \geq 2\delta_2]$ & $m > 3$ odd, $\delta_2=\frac{2^{m-1}-7}{3}$ & Th.4.1 in \cite{HXY}\\
$n=2^m-1$ & $[n, 6m, \geq 2\delta_3]$ & $m > 3$ odd, $\delta_3=\frac{2^{m-1}-25}{3}$ & Th.4.1 in \cite{HXY}\\
$n=2^m-1$ & $[n, 2m+2, \geq 2\delta_2]$ & $m \equiv 2 (\text{mod } 4)$, $\delta_2=\frac{2^{m-2}-1}{5}$ & Th.4.4 in \cite{HXY}\\
  & &$+2^{m-3}$\\
$n=2^m-1$ & $[n, 4m+2, \geq 2\delta_3]$ & $m \equiv 2 (\text{mod } 4)$, $\delta_3=\frac{2^{m-2}-31}{5}$ & Th.4.4 in \cite{HXY}\\
  & &$+2^{m-3}$\\
$n=2^m-1$ & $[n, 6,  2\delta_2]$ & $m \equiv 0 (\text{mod } 4)$, $\delta_2=\frac{2^{m}-1}{5}$ & Th.4.5 in \cite{HXY}\\
$n=q^\ell+1$ & $[n,k, \geq 2(\delta-1)]$ & $3 \leq \delta \leq q^{\lfloor(\ell-1)/2\rfloor}+3$ & Th.18 in \cite{LD}\\
& & $k=q^\ell-2\ell(\delta-2-\lfloor \frac{\delta-2}{q} \rfloor)$\\
$n=2^m-1$ & $[n,k, \geq 2(2^{m-\ell}-1)]$ & $m-2 \geq \ell \geq m-\lfloor(m-2)/2\rfloor$ & Th.26 in \cite{LD}\\
& & $k=2^m-2\sum_{j=0}^{m-1-\ell}\tbinom{m}{j}$\\
$n=\frac{q^m-1}{q-1}$ & $[n, n-1-2m\lceil\frac{(\delta-1)(q-1)}{q}\rceil, \geq 2\delta]$ & $2 \leq \delta \leq q^{\lfloor(m-1)/2\rfloor}$ & Th.32 in \cite{LD}\\
 $n=\frac{2^m+1}{3}$ & $[n,k, \geq \delta]$ & $m \geq 7$ odd and $3 \leq \delta \leq \frac{n}{2}$ & Th.3.1 in \cite{LKZ}\\
$n=\frac{3m}{2}$ & $[n, m, 2]$   &$m$ even & Th.10 in \cite{WY1} \\
$n=\frac{3m}{2}$ & $[n, \frac{m}{2}, 3]$ &$m$ even & Th.11 in \cite{WY1} \\
$n=\tbinom{m}{t}$ & $[n, n-m+1, 3]$ & $t$ even, $\tbinom{m-1}{t-1}\equiv 1 (\text{mod }2)$ & Th.25 in \cite{Z}\\
& & and $\tbinom{m-2}{t-2} \equiv 0 (\text{mod }2)$; or\\
& & $t$ even $, \tbinom{m-1}{t-1}\equiv\tbinom{m-2}{t-2}+1$\\
& &  $ \equiv 0 (\text{mod }2)$ and $tm \equiv 0 (\text{mod }2)$\\
$n=\tbinom{m}{t}$ & $[n, n-m, 4]$ & $t$ odd, $\tbinom{m-1}{t-1}\equiv 1 (\text{mod }2)$ & Th.25 in \cite{Z}\\
& &  and $\tbinom{m-2}{t-2} \equiv 0 (\text{mod }2)$; or\\
& & $t$ odd, $\tbinom{m-1}{t-1}\equiv \tbinom{m-2}{t-2} +1 $\\
& &  $\equiv 0 (\text{mod }2)$ and $tm \equiv 0 (\text{mod }2)$\\
$n=\tbinom{m}{t}+1$ & $[n, n-m+1, 3]$ & $m,t$ even, $\tbinom{m-1}{t-1}\equiv 0 (\text{mod }2)$ & Th.31 in \cite{Z}\\
& & and $\tbinom{m-2}{t-2} \equiv 1 (\text{mod }2)$; or\\
& & $m,t$ even $ \tbinom{m-1}{t-1}\equiv\tbinom{m-2}{t-2}$\\
& &  $ +1 \equiv 1 (\text{mod }2)$\\
$n=\tbinom{m}{t}+1$ & $[n, n-m, 4]$ & $t$ odd, $m \neq 2t$,  $\tbinom{m-1}{t-1}\equiv 0 (\text{mod }2)$ & Th.31 in \cite{Z} \\
& &  and $\tbinom{m-2}{t-2} \equiv 1 (\text{mod }2)$; or\\
& & $t$ odd, $m \neq 2t$, $\tbinom{m-1}{t-1}\equiv \tbinom{m-2}{t-2}$ \\
& &  $+1 \equiv 1 (\text{mod }2)$ and $m$ even\\
$n=n_f+m+1$ & $[n, m+1, \frac{n_f}{2}-2^{\frac{m-4}{2}}+2]$ &$S=0, m \geq 6$ even & Th.1 in \cite{WXY1} \\
& &  $n_f=2^{m-1} \pm 2^{\frac{m-2}{2}}$\\
$n=n_f+m+1$ & $[n, m+1, \frac{n_f}{2}-2^{\frac{m-4}{2}}+1]$ &$S>0, m \geq 6$ even & Th.1 in \cite{WXY1} \\
& &  $n_f=2^{m-1} \pm 2^{\frac{m-2}{2}}$\\
$n=n_f+m+1$ & $[n, n_f, 3]$ &$m \geq 6$ even & Th.1 in \cite{WXY1} \\
& &  $n_f=2^{m-1} \pm 2^{\frac{m-2}{2}}$\\
$n=q_0^{2r}+3r+1$ & $[n, 3r+1, (q_0-1)q_0^{2r-1}-q_0^{r-1}+2]$ &$r>1, (q_0,r)\neq(2,2)$ & Th.3 in \cite{WXY1} \\
$n=q_0^{2r}+3r+1$ & $[n, q_0^{2r}, 3]$ &$r>1, (q_0,r)\neq(2,2)$ & Th.3 in \cite{WXY1} \\
$n=p^m+m+2$       &$[n, m+2, p^m-p^{m-1}-(p-1)p^{\frac{m+s}{2}-1}+2]$ & $m+s\geq 4$ even, $f(x)$ unbalanced, $\varepsilon (\eta_0(-1))^{\frac{m+s}{2}}=1$ & Theorem \ref{theorem1} \\
$n=p^m+m+2$       &$[n, m+2, p^m-p^{m-1}-(p-1)p^{\frac{m+s}{2}-1}+2]$ &$m+s\geq 4$ even, $f(x)$ unbalanced, $\varepsilon (\eta_0(-1))^{\frac{m+s}{2}}=-1$ & Theorem \ref{theorem1} \\
$n=p^m+m+2$       &$[n, m+2, \geq p^m-p^{m-1}-(p-1)p^{\frac{m+s}{2}-1}+2]$ &$m+s\geq 4$ even, $f(x)$ balanced, $\varepsilon (\eta_0(-1))^{\frac{m+s}{2}}=1$
& Theorem \ref{theorem1} \\
$n=p^m+m+2$       &$[n, m+2, \geq p^m-p^{m-1}-p^{\frac{m+s}{2}-1}+2]$
&$m+s\geq 4$ even, $f(x)$ balanced, $\varepsilon (\eta_0(-1))^{\frac{m+s}{2}}=-1$
& Theorem \ref{theorem1} \\
$n=p^m+m+2$       &$[n, m+2, p^m-p^{m-1}-p^{\frac{m+s-1}{2}}+1]$ &$m+s\geq 3$ odd, $f(x)$ unbalanced, $\varepsilon (\eta_0(-1))^{\frac{m+s}{2}}=1$
& Theorem \ref{theorem2} \\
$n=p^m+m+2$       &$[n, m+2, p^m-p^{m-1}-p^{\frac{m+s-1}{2}}+2]$ &$m+s\geq 3$ odd, $f(x)$ unbalanced, $\varepsilon (\eta_0(-1))^{\frac{m+s}{2}}=-1$ & Theorem \ref{theorem2} \\
$n=p^m+m+2$       &$[n, m+2, \geq p^m-p^{m-1}-p^{\frac{m+s-1}{2}}+2]$&$m+s\geq 3$ odd, $f(x)$ balanced & Theorem \ref{theorem2} \\
$n=p^m+m+2$       &$[n, p^m, 3]$ &$m+s\geq 3$ & Theorem \ref{theorem1}, Theorem \ref{theorem2}\\
$n=2^m+m+2$       &$[n, m+2, \geq 2^{m-1}-2^{\frac{m+s-2}{2}}+2]$& $m+s\geq 6$ even, $f(x)$ balanced & Theorem \ref{theorem5} \\
$n=2^m+m+2$       &$[n, m+2, 2^{m-1}-2^{\frac{m+s-2}{2}}+1]$& $m+s\geq 6$ even, $f(x)$ unbalanced,$\text{W}_f(0)=-2^{\frac{m+s}{2}}$ & Theorem \ref{theorem5} \\
$n=2^m+m+2$       &$[n, m+2, 2^{m-1}-2^{\frac{m+s-2}{2}}+2]$& $m+s\geq 6$ even, $f(x)$ unbalanced,$\text{W}_f(0)=2^{\frac{m+s}{2}}$ & Theorem \ref{theorem5} \\
$n=2^m+m+2$       &$[n, 2^m, 3]$ & $m+s\geq 6$ even& Theorem \ref{theorem5}\\
\hline\end{tabular}}
\end{center}
\end{table}

\clearpage
\section*{References}

\end{document}